%% file: main.tex
\documentclass[11pt]{article}

\input{preamble}
\title{Learning Random Quantum Circuits and the \\  Emergence of Pseudorandomness}

\author{
Srinivasan Arunachalam\\[2mm]
IBM Research\\
\small Silicon Valley Lab\\
\small \texttt{Srinivasan.Arunachalam@ibm.com}
\and 
Qizhao Huang \qquad \qquad   Makrand Sinha\\[2mm]
University of Illinois\\\small Urbana-Champaign\\
\small \texttt{qizhao2@illinois.edu}, \texttt{msinha@illinois.edu}
}
\date{}

\begin{document}
\maketitle 
\begin{abstract}
\noindent We give an efficient algorithm for learning $k$-dimensional brickwork random
quantum circuits using only copies of the output state obtained by applying $U$ to the all-zero input. For a depth-$d$ circuit on $n$ sites with random $2\ell$-qubit gates, the algorithm learns the original circuit $U$ with high probability in \(\operatorname{poly}(n,2^{\ell d})\) time for every constant dimensional lattice. In particular, the algorithm is polynomial time as long as $\ell d = O(\log n)$. In one dimension, this reaches the natural boundary suggested by
pseudorandomness: pseudorandom states require $\ell d=\omega(\log n)$, and structured cryptographic constructions suggest that this scale may be achievable from above. In higher dimensions, it remains plausible that the
same $\ell d=\omega(\log n)$ scale remains the threshold for ancilla-free pseudorandomness, and our results help clarify the conditions under which pseudorandomness can arise in this setting.

\vspace*{8pt}
\noindent The main idea behind our algorithm is a local correlation criterion that identifies gates in the final layer without learning their entire backward light cones, avoiding a bottleneck in the previous approaches. A key technical ingredient is a Carbery-Wright type anticoncentration inequality for low-degree polynomials of Haar random unitaries whose small-ball exponent is independent of the matrix dimension.  The dimension-independent exponent is crucial for handling gates of growing locality. This anticoncentration result may also be of independent interest.

\end{abstract}
% \newpage 
% \setcounter{tocdepth}{3}
% \begingroup
% \renewcommand{\baselinestretch}{0.2} % Multiplier to squeeze lines closer together
% {\footnotesize  \tableofcontents}
% \endgroup

\newpage

\input{introduction}

\paragraph{Acknowledgements.} We thank Bill Fefferman, Soumik Ghosh, Noam Lifshitz,  Dan Mikulincer and Joe Slote for helpful conversations. MS acknowledges support from the NSF award QCIS-FF: Quantum Computing \& Information Science Faculty Fellow at the University of Illinois Urbana-Champaign (NSF 1955032) and from an IBM-Illinois Discovery Accelerator Institute (IIDAI) research grant. QH is supported by an IBM-Illinois Discovery Accelerator Institute (IIDAI) research grant. Parts of this work were done while some of the authors were visiting the Simons Institute for the Theory of Computing.

\section{LLM Methodology} 

We follow the guidelines developed by the AI+TCS working group at the Simons Institute for the Theory of Computing, UC Berkeley~\cite{TCSAI} for transparently reporting the usage of LLM-based tools. Departing from one of the recommendations in the working group report, throughout the discussion below, we refer generically to LLM-based tools rather than naming particular systems to avoid implying an endorsement.

\begin{center}
\begin{tabular}{@{}l@{\hspace{0.5em}}c@{\hspace{5em}}l@{\hspace{0.5em}}c@{}}
\multicolumn{4}{c}{\textbf{Activities LLM-based tools were used for}} \\[12pt]
Asking the research question & \no &
Coming up with the approach & \nostar \\[2pt]
Proof development & \nostar &
Writing and exposition & \no \\[2pt]
Checking for bugs & \no &
Other supporting tasks & \yes
\end{tabular}
\end{center}

We elaborate on these below.

\textbf{Asking the research question.} The questions and many of the ideas in this paper have been development since early 2025 and LLM-based tools were not involved in this process.

\textbf{Coming up with the approach/Proof development.} We have opted to answer ``No*'' for both of these, as explained below. Many of the key ideas in this paper were developed by the authors, while for one key step the main benefit of the LLM-based tools was to point us toward relevant mathematical ideas and techniques in the literature that we were not previously aware of. The authors then developed the resulting proofs by consulting the literature and adapting these techniques to our setting, in much the same way one might adapt an argument from another paper. We describe the developments in detail next.

The main idea for proving the improved anticoncentration bound was provided by an LLM. SA first found a reduction of anticoncentration on $\U(m)$ to $\U(1)^m$ and queried an LLM in June 2026 about whether anticoncentration for low-degree Laurent polynomials was known in the mathematical literature. It suggested looking at univariate extremal inequalities and pointed us to the work of~\cite{NSV}. We then asked Joe Slote and Noam Lifshitz on how to prove the Remez inequality for our family of functions using the work of~\cite{NSV}. Both provided a short LLM-generated proof that which gave the high level sketch of \cref{thm:remez}, which we verified, simplified and rewrote entirely here. Initially, this approach led to a small-ball exponent with a polylogarithmic dependence on the dimension. Subsequently, in August 2026 MS prompted a newer LLM version to look for an example showing that a logarithmic dependence was necessary. Instead, it pointed to the Weyl Integration formula, after which the authors developed the interpolation-based proofs presented here that obtained a dimension-free small-ball exponent. Throughout this process, LLM based tools were useful for searching and understanding the literature, identifying relevant proof techniques and rapidly testing the effects of different assumptions in intermediate lemmas. All other ideas including the learning algorithm were developed by the authors. 

\textbf{Writing and exposition.} The paper was written entirely by the authors in the old-fashioned way. LLM based tools were used occasionally for suggestions, generating tables or figures, LaTeX assistance and proofreading. The authors decided whether and how to incorporate suggestions and no changes were applied automatically. All mathematical arguments and citations were independently verified by the authors, who take full responsibility for the paper.

\textbf{Checking for bugs.} LLM based tools did not contribute to identifying or correcting consequential errors in the paper.

\textbf{Other supporting tasks.} As discussed before LLM based tools were used occasionally for suggestions, generating tables or figures, LaTeX assistance, bibliography searches and proofreading. The authors decided whether and how to incorporate suggestions and no changes were applied automatically.

\input{preliminaries}

\input{anticoncentration}

\input{learning}

%\bibliographystyle{alpha}
%\bibliography{main}
\begingroup
\small
\printbibliography
\endgroup

\appendix

\end{document}

%% file: preamble.tex
\usepackage[T1]{fontenc}
\usepackage{lmodern}
\usepackage[margin=1in]{geometry}
\usepackage{microtype}
\usepackage{amsmath,amssymb,amsthm,mathtools,enumitem}
\usepackage[
  backend=biber,
  style=alphabetic,
  natbib=true,
  doi=true,
  eprint=true,
  url=false,
  maxbibnames=99,
  maxcitenames=99,
]{biblatex}
\usepackage{xcolor}
\usepackage{enumitem}
\usepackage{bm}
\usepackage{booktabs}
\usepackage{hyperref}
\usepackage{eucal}
\usepackage{thm-restate,mathrsfs} %To repeat theorems, lemmas etc....
\usepackage{comment} 
\usepackage{braket}
\usepackage{algorithm}
\usepackage{algpseudocode}
\usepackage{array,booktabs,tabularx}
\usepackage[font=footnotesize]{caption}
\usepackage{markdown}
\usepackage[most]{tcolorbox}
\usepackage[capitalise,noabbrev,nameinlink]{cleveref}
\usepackage{tikz}
\usetikzlibrary{arrows.meta}

\usepackage{titlesec}

\titleformat{\section}
  {\normalfont\Large\fontseries{b}\selectfont}
  {\thesection}{1em}{}

\titleformat{\subsection}
  {\normalfont\large\fontseries{b}\selectfont}
  {\thesubsection}{1em}{}

\hypersetup{
  colorlinks=true,
  linkcolor=blue!55!black,
  citecolor=green!40!black,
  urlcolor=blue!60!black
}

\allowdisplaybreaks
\setlist{leftmargin=*,topsep=3pt,itemsep=2pt}
\newtheorem{theorem}{Theorem}[section]

\newtheorem{lemma}[theorem]{Lemma}
\newtheorem{claim}[theorem]{Claim}

\theoremstyle{definition}

\numberwithin{equation}{section}
\numberwithin{figure}{section}
\numberwithin{table}{section}
\newcommand{\cE}{\mathcal{E}}

\newcommand{\cN}{\mathcal{N}}
\newcommand{\ketbra}[2]{\lvert #1\rangle\!\langle #2\rvert}

\newcommand{\norm}[1]{\lVert #1\rVert}

\newtcolorbox{algorithmbox}{
  enhanced,
  colback=gray!6,
  colframe=gray!35,
  boxrule=0.6pt,
  arc=2.5mm,
  width=0.94\textwidth,
  left=2mm,
  right=2mm,
  top=1.5mm,
  bottom=1.5mm,
  before skip=8pt,
  after skip=8pt,
}

\long\def\ca#1\cb{} %Use for commenting out: \ca...\cb

\newcommand{\poly}{\operatorname{poly}}

\newcommand{\fhat}{\widehat{f}}
\newcommand{\qhat}{\widehat{q}}
\newcommand{\bk}{\bm{k}}
\newcommand{\bz}{\bm{z}}
\newcommand{\Z}{\mathbb{Z}}
\newcommand{\U}{\mathrm{U}}

\def\01{\{0,1\}}
\newcommand{\eps}{\varepsilon}
\newcommand{\E}{\mathop{\mathbb{E}}}

\newcommand{\C}{{\mathbb C}}
\newcommand{\R}{\ensuremath{\mathbb{R}}}

\newcommand{\cL}{\mathcal{L}}

\newcommand{\rhodef}{\psi^{\mathrm{def}}}
\newcommand{\corr}{\mathrm{Corr}}

\newcommand{\haar}[1]{\mu_{\mathrm{Haar}(#1)}}

\newcommand{\Weyl}{\mu_{\mathrm{Weyl}}}

\renewcommand{\geq}{\geqslant}
\renewcommand{\leq}{\leqslant}

\makeatletter
\providecommand*{\cupdot}{%
  \mathbin{%
    \mathpalette\@cupdot{}%
  }%
}
\newcommand*{\@cupdot}[2]{%
  \ooalign{%
    $\m@th#1\cup$\cr
    \hidewidth$\m@th#1\cdot$\hidewidth
  }%
}
\makeatother

{}%         Body font
{}%         Indent amount (empty = no indent, \parindent 
\definecolor{lcInk}{HTML}{202939}
\definecolor{lcMuted}{HTML}{60718A}
\definecolor{lcGate}{HTML}{758298}
\definecolor{lcBlue}{HTML}{2563C7}
\definecolor{lcBlueFill}{HTML}{DCE9FD}
\definecolor{lcOrange}{HTML}{E67424}
\definecolor{lcOrangeFill}{HTML}{FCE6D2}
\definecolor{lcPurple}{HTML}{8548BF}
\definecolor{lcPurpleFill}{HTML}{D8C0EE}
\definecolor{gateblue}{RGB}{23,44,81}
\definecolor{BellPurple}{RGB}{133,72,191}
\definecolor{BellPurpleFill}{RGB}{232,217,248}
\definecolor{GqGreen}{RGB}{22,138,99}
\definecolor{GqGreenFill}{RGB}{217,242,232}

\allowdisplaybreaks

\newcommand{\ansbox}[1]{%
  \makebox[2.8em][c]{\fbox{\makebox[1.8em][l]{\strut #1}}}%
}
\newcommand{\yes}{\ansbox{Yes}}
\newcommand{\no}{\ansbox{No}}
\newcommand{\nostar}{\ansbox{No*}}

%% file: introduction.tex
\section{Introduction}

A central topic in theoretical computer science is understanding the interplay between
\emph{learnability} and \emph{pseudorandomness}~\cite{OS17}. These notions consider the same fundamental question from complementary directions: in learning, the goal is to recover structure in an object, whereas in pseudorandomness the goal is to conceal that structure. This interplay has played a pivotal role in complexity theory~\cite{CIKK16}, cryptography~\cite{BFKL93,KV94}, learning theory~\cite{DV21} and derandomization~\cite{OS18}. Recent work has begun to uncover analogous connections in the quantum setting~\cite{AGS21,CLS25, FGSY26}. A particularly concrete setting in which these two perspectives meet is the study of \emph{random quantum circuits}: given copies of a state prepared by a random circuit, one may ask whether the underlying circuit can be efficiently recovered, or instead whether its output is indistinguishable from a Haar-random state. The latter question leads to the notion of a \emph{pseudorandom quantum state} (PRS)~\cite{JLS18}: an ensemble of pure quantum states that can be efficiently prepared but remains indistinguishable from Haar random pure states to any efficient quantum adversary, even given polynomially many copies of the state. PRSs have many applications in quantum cryptography (see the references collected in~\cite{MicrocryptZoo}).

Throughout this work, we only consider local random circuits on fixed-dimensional brickwork architectures, where each local gate is sampled independently from the Haar measure and acts on two neighboring sites of the lattice. Although every gate acts only locally, repeated interactions can rapidly generate highly entangled and complex states~\cite{NRVH17}. This leads to two basic questions:

\vspace*{-5pt}
\begin{center}
\emph{When do local Haar random circuits on brickwork architectures become pseudorandom, and can they be efficiently learned up to the threshold of pseudorandomness?}
\end{center}
\vspace*{-5pt}

These questions remain largely open. We do not know how deep these circuits need to be to generate pseudorandom quantum states, how ancillary workspace affects the depth, or whether local Haar randomness suffices or structured cryptographic ensembles are required. Recent work~\cite{FGSY26} proposed the hardness of learning output states of random quantum circuits as a basis for quantum cryptography, providing further motivation. % for understanding the learning side of these questions.
Here, we focus on \emph{circuit depth and gate locality}. In one dimension with constant locality, many results point to logarithmic depth as the natural scale. A simple  argument shows that $O(\log n)$-depth circuits cannot generate PRSs~\cite{Pseudoentanglement}. By contrast,  structured phase construction of~\cite{CCGG}, which does not use  Haar random local gates, suggests that log-depth might be approached from above, assuming quantum-secure pseudorandom functions whose phase gates admit one dimensional implementations of linear depth. For local Haar random circuits, a recent work of LaRacuente~\cite{LaR26} provided complementary evidence by showing that for fixed $t$ and inverse polynomial error,  one dimensional Haar random brickwork circuits become approximate $t$-designs in nearly $\log n$ depth. Approximate designs do not imply computational pseudorandomness, but suggest that Haar-like behavior  emerges near~logarithmic~depth.

In higher dimensions, much less is known. Structured constructions show that higher connectivity and ancillary workspace can substantially reduce the required depth~\cite{SHH,CSBH25}. Without ancillas,  the best known cryptographic constructions remain at polylogarithmic depth~\cite{SMLBH25}, leaving open whether logarithmic depth is the ancilla-free threshold. Gate locality gives a second parameter. The same Schmidt rank argument described before extends to gates of locality $\ell$ and rules out PRSs in one dimension whenever $\ell d=O(\log n)$. Thus $\ell d$, rather than depth, is a natural combined scale for studying how far learning can extend towards pseudorandomness,~raising

\vspace*{-5pt}
\begin{center}
\emph{Can local random circuits on any fixed-dimensional brickwork architecture be learned in polynomial time from copies of their output states when $\ell d=O(\log n)$?}
\end{center}
\vspace*{-5pt}

\subsection{Main results}

Our main result answers the last question affirmatively. To describe our results, we first describe the setup. We sample a set of $M=\poly(n,d)$ unitaries $G_1,\ldots, G_M$ that act on $2\ell$ qubits from the Haar measure on $\U(4^\ell)$. Call this set $\cN$. For simplicity, we omit discussing issues related to precision in the description of these gates here and defer its discussion to later. We consider random $k$-dimensional brickwork circuits that are chosen by drawing each gate independently and uniformly from the fixed gateset\footnote{For constant locality, common choices for such problems include fixed universal gate sets~\cite{HL09,OSH20} and $\varepsilon$-nets~\cite{fefferman2024anticoncentrationunitaryhaarmeasure} (see also~\cite{KW} for learning with finite gate sets). At inverse polynomial precision, such nets have polynomial size for constant locality but size $\exp(n^{\Omega(1)})$ for $\Theta(\log n)$ locality. Compiling larger gates into a standard universal set of one- and two-qubit gates instead changes the depth and architecture. We therefore use a public gate set of polynomial size, retaining larger gates as individual operations without requiring coverage of all local unitaries. We note that making the gateset public does not make the problem substantially easier as the search space still consists of $2^{\poly(n)}$ candidate circuits. For $O(1)$ locality, our algorithm can also be adapted to work with a fixed $\varepsilon$-net.} $\cN$. After averaging over the random choice of $\cN$, the ideal circuit distribution is close in total variation distance to the ensemble with independent Haar gates, provided $M$ is sufficiently large (see \cref{sec:learning}). We show the following:

\begin{restatable}[\textbf{Learning Random Circuits with State Access}]{theorem}{learning}\label{thm:learning}
Let $d\geq 1$, $\delta>0$ and fix a constant dimension \(k\).  Let \(\cN\) be a random gateset as defined above.  With probability \(\geq 1-\delta\), we have: 
Let \(U\) be a depth-\(d\), \(k\)-dimensional brickwork circuit on \(n\) sites,
whose local gates are chosen uniformly from \(\cN\) where $d \le 0.01 n^{1/k}$.  Given copies of
$    |\psi_0\rangle = U|0^\ell\rangle^{\otimes n},$
there is an algorithm that, with probability \(\geq 1-\delta\) over
the random circuit and its measurement outcomes, exactly recovers \(U\), in sample and time complexity
\[
    \poly_k(n,d,\delta^{-1})\,2^{O_k(d\ell)}.
\]

\end{restatable}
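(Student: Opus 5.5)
We will recover $U$ by peeling it one layer at a time, starting from the final layer. Suppose the last layer $L_d$ has been identified. Applying $L_d^\dagger$ to fresh copies of $|\psi_0\rangle$ leaves the output state of a depth-$(d-1)$ brickwork circuit built from the same gateset $\cN$. Repeating this $d$ times recovers $U$ exactly. The real content is therefore a subroutine that, given copies of a depth-$d'$ state with $d'\le d$, identifies every gate of the last layer. It must do so in time $\poly(n)\,2^{O_k(d\ell)}$ and fail with probability at most $\delta/(dn)$ per gate. A union bound over the at most $nd$ gates then gives the theorem.

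\textbf{Local correlation criterion.} Fix a last-layer gate $G$ acting on a block $S$ of $2\ell$ qubits. For each candidate $G'\in\cN$, undo it by forming $\rho_{G'}=(G'^\dagger\otimes I)\,\psi\,(G'\otimes I)$. We then test a local correlation statistic on a region $R$ of radius $O(d)$ around $S$. The natural choice is the mutual information, or a purity-based correlation such as $\operatorname{tr}(\rho_{A B}^2)-\operatorname{tr}\rho_A^2\operatorname{tr}\rho_B^2$, between the two halves $A,B$ of $S$, with the remainder of the backward light cone kept as a reference.

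If $G'=G$, the two halves of $S$ are the outputs of gates from layer $d-1$, and these sit in a strictly smaller depth-$(d-1)$ light cone. Their correlations therefore factor in a specific, checkable way: for instance, an appropriate Rényi-2 correlation quantity supported outside the shrunken cone vanishes. If $G'\neq G$, the residual gate $G'^\dagger G$ is a fixed non-identity $2\ell$-qubit unitary, and it reintroduces correlation across the cut.

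All reduced density matrices involved live on $O_k(d^k\ell)$ qubits after accounting for the light cone. However, the criterion only requires marginals on $O(\ell)$ qubits near $S$ together with a few neighbors. We estimate these marginals by classical shadows or by tomography on $O(\ell d)$ qubits, which costs $2^{O(\ell d)}$ samples and time. This avoids learning the entire backward light cone, which would cost $2^{O(d^k\ell)}$.

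\textbf{Gap via anticoncentration.} The crux is to show that for $G'\neq G$ the correlation statistic is at least $\eta = 2^{-O(d\ell)}/\poly(n,\delta^{-1})$ with high probability over the random gates. Conditioned on everything except the gate $G$ and a constant number of nearby gates, the statistic is a polynomial of degree $O(1)$ in the entries of a Haar random $G\in\U(4^\ell)$ and its conjugate. It is not identically zero, which we show by exhibiting one explicit choice of gates giving a nonzero value. The Carbery--Wright type inequality for Haar unitaries proved in the anticoncentration section then gives
\[
\Pr\bigl[|p(G)|\le \epsilon\,\|p\|_2\bigr]\le C\,\epsilon^{c}
\]
with $c$ independent of the dimension $4^\ell$. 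We also need to lower-bound $\|p\|_2$ by $2^{-O(d\ell)}$. For this we use the second-moment Weingarten calculus: the contributions of the $d$ layers multiply, each contributing a factor $2^{-O(\ell)}$.

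Setting $\epsilon$ polynomially small and union-bounding over the $M=\poly$ candidates $G'$ and the $n$ positions gives a gap that we can resolve with $\poly\cdot 2^{O(d\ell)}$ copies. Two further ingredients complete the argument. First, the random gateset $\cN$ must have no two distinct elements $G'$ with $G'^\dagger G$ close to identity-like on $S$; this holds with probability $1-\delta$ over $\cN$ because $M=\poly$. Second, we check that the constraint $d\le 0.01n^{1/k}$ keeps light cones disjoint from boundary wraparound.

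\textbf{Expected obstacle.} The main difficulty is the gap step. The polynomial degree must stay bounded, and the small-ball exponent must not degrade with the growing locality $\ell$. This is exactly where the dimension-free exponent is needed: a $\log(4^\ell)$ loss in the exponent would turn $2^{O(d\ell)}$ into a superpolynomial bound. Proving that $\|p\|_2$ is nondegenerate, uniformly over light-cone configurations, is the second delicate point.
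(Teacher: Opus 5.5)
There is a genuine gap, and it starts with the test statistic. Correlation between the two halves $B,C$ of $S$ does not vanish after you undo the correct gate. Once the depth exceeds a small constant, the backward light cones of the neighbouring sites $B$ and $C$ in the remaining circuit overlap; in 1D, both contain the gate on $BC$ two layers down. So $\rho_{BC}$ is generically correlated. Your fallback, ``an appropriate R\'enyi-2 quantity supported outside the shrunken cone,'' is not a specified test. Your justification for wrong candidates is also insufficient: a residual $G'^\dagger G$ of product form $A_B\otimes A'_C$ is non-identity but creates no correlation of this kind.

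The missing idea is to pair $B$ with a distant site $D$, displaced by $2s-1$ along the gate direction $a$. The backward light cone of this $D$ meets that of $B$ only through $G$. Deleting $G$ makes the two cones disjoint, so for the correct candidate $\psi^{(G_p)}_{BD}=\psi^{(G_p)}_B\otimes\psi^{(G_p)}_D$ exactly and $\mathrm{Corr}_{B:D}=\|\rho_{BD}-\rho_B\otimes\rho_D\|_F^2=0$. Only the $2\ell$-qubit marginal on $BD$ then needs tomography.

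Second, your gap step does not give a bound that holds with high probability. Suppose you condition on everything except $G$ and $O(1)$ nearby gates. With $D$ at distance $\Theta(d)$, no constant number of gates near $S$ can force $B$--$D$ correlation. The conditional polynomial can therefore vanish identically, for instance when the conditioned gates are products. A Weingarten estimate of the second moment averaged over all gates does not lower-bound the conditional norm for the realized outer gates. If instead you make every light-cone gate random, the multigate anticoncentration pays about $64\cdot 16^{\ell}$ per gate over up to $\Theta_k(d^{k+1})$ gates, which gives $2^{O(\ell d^{k+1})}$.

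The paper exposes only the $O(d)$ boundary gates on the two paths $L_1,L_2$ from the common first-layer gate, plus $G_p$. For every fixing of all other gates there is an explicit witness. The apex gate prepares a maximally entangled pair, SWAPs route its halves to $C$ and $D$, and $G_p=G_q\,\mathrm{SWAP}_{B,C}$. This gives $\mathrm{Corr}_{B:D}=1-2^{-2\ell}\ge 3/4$, a uniform lower bound on $\|F\|_\infty$.

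That bound feeds a multigate Carbery--Wright inequality. The gates are exposed one at a time, each conditional small-ball bound becomes a negative-moment bound, these are multiplied, and Markov's inequality finishes. The degree is at most $16$ per gate and there are $r\le 2kd$ gates, giving $\Pr[\mathrm{Corr}\le\alpha]\le(1+64\cdot16^{\ell})^{2kd}(4\alpha/3)^{1/(768\pi)}$. This is the true source of the $2^{O_k(\ell d)}$. It is followed by union bounds over the $M$ candidates and $nd/2$ locations, on the event that all labels are distinct, and a Markov step over $\cN$.
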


In particular, the algorithm runs in polynomial time whenever $d\ell = O(\log n),$ which as the prior discussion highlighted seems to be a natural transition point to expect the emergence of pseudorandomness. While the guarantee of recovering the circuit $U$ exactly and only with state access may appear surprising, the proof in fact shows the stronger statement that with high probability two different random circuits above cannot generate states that are arbitrarily close. The key algorithmic insight is to identify gates using correlations between just two sites, without reconstructing the full lightcone (see \cref{sec:learning-overview}).

Our main technical contribution, which may be of independent interest, is an anticoncentration inequality for polynomials defined on Haar random unitaries. Informally, we show that a nonzero low-degree polynomial cannot be  close to zero on too large a subset (under the Haar measure) of the unitary group $\U(m)$.  The classical theorem of Carbery and Wright~\cite{CW01} gives such anticoncentration bounds for polynomials under the \emph{Gaussian measure}, and more generally any \emph{log-concave measure}. Extending their result  to the Haar measure is nontrivial primarily because  the unitary group is nonconvex and the matrix elements of unitaries are strongly correlated. In this direction, a recent work~\cite{fefferman2024anticoncentrationunitaryhaarmeasure} established anticoncentration for Haar-random unitaries by showing that for a degree-$d$ polynomial $P$ over $\U(m)$, we have
\begin{equation}\label{eqn:fgz}
    \Pr_{U\sim\haar{m}}
\left[|P(U)|\le \varepsilon\|P\|_{L^2(\haar{m})}\right]
\le C(m,d) \cdot \varepsilon^{1/(4m^2d)},
\end{equation}
where $C$ is a polynomial in $m$ and $d$. Furthermore, they showed that it was \emph{necessary} for $C$ to depend polynomially on $m$, but left open the question of whether the $m$-dependence in the small-ball exponent $\varepsilon$ can be removed or not, akin to how classic Carbery-Wright inequalities are dimension \emph{independent}. Resolving this question is the key bottleneck for handling random circuits with gates of growing locality.

We show a dimension-free small ball exponent is indeed possible. In particular, let us write $\|P\|_{\infty} = \sup_{U \in \U(m)} |P(U)|$ for a polynomial $P$. Then, we show the following\footnote{Since $\|P\|_{L^2(\haar{m})}\le\|P\|_\infty$, the event $|P(U)|\le\eps\|P\|_{L^2(\haar{m})}$ is contained in the event
$|P(U)|\le\eps\|P\|_\infty$. Thus our bound also implies the corresponding $L^2$-normalized bound as in \cref{eqn:fgz}}:

\begin{restatable}[\textbf{Carbery-Wright Inequality over the Unitary Group}]{theorem}{anticoncentration}
\label{thm:unitary-smallball}
Let $P: \U(m) \to \C$ be a non-zero complex-valued polynomial of total degree at most $d$ in the entries of $U$ and $\overline U$. 
Then, for every $\eps\in(0,1)$, the following holds
\[\Pr_{U \sim \haar{m}}\bigl[\,|P(U)| \le \eps \|P\|_{\infty} \bigr]
\le 64m^2 \cdot \eps^{\frac{1}{24\pi d}}.\]
\end{restatable}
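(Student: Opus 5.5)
The plan is to reduce the problem from the unitary group $\U(m)$ to the torus $\U(1)^m$ and then prove a one-dimensional Remez-type inequality for low-degree trigonometric polynomials along suitably chosen curves.

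\textbf{Step 1 (reduction to the torus).} Write $U = V D W^*$ with $V,W$ Haar on $\U(m)$ and $D=\mathrm{diag}(e^{i\theta_1},\dots,e^{i\theta_m})$ uniform on the torus. By left and right invariance of Haar measure, $VDW^*$ is Haar distributed. Fix $V,W$. Then $Q_{V,W}(\theta) = P(VDW^*)$ is a Laurent polynomial in $z_j=e^{i\theta_j}$. Each entry of $VDW^*$ is linear in $z_1,\dots,z_m$, and each entry of its conjugate is linear in $\bar z_j$. Hence $Q_{V,W}$ has total degree at most $d$, meaning $\sum_j |k_j|\le d$ for every monomial $z^{k}$. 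It therefore suffices to bound
\[
\Pr_{V,W,\theta}\bigl[|Q_{V,W}(\theta)|\le \eps\|P\|_\infty\bigr].
\]
The difficulty is that $\|Q_{V,W}\|_\infty$ over the torus may be much smaller than $\|P\|_\infty$.

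\textbf{Step 2 (good conjugations).} Let $U^\star$ attain $\|P\|_\infty$, and diagonalize $U^\star = V_0 D_0 V_0^*$. For $(V,W)$ close to $(V_0,V_0)$ we have $\|Q_{V,W}\|_\infty\ge \|P\|_\infty/2$. A smarter route avoids showing that good pairs are common, since that estimate costs factors of $m$. Instead, use the Weyl integration formula to pass to eigenvalue coordinates: $P$ restricted to conjugacy classes is handled by averaging over $V$. The remaining task is to control, on the torus, the density $\prod_{i<j}|z_i-z_j|^2$. The polynomial factor $64m^2$ in the bound suggests that a union bound over $O(m^2)$ coordinate directions or pairs enters at some point.

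\textbf{Step 3 (one-dimensional small ball on a line).} This is the core step. Take a line through the torus, $t\mapsto \theta^0 + t\, \mathbf{1}$, or along a direction $v\in\Z^m$ with $\|v\|_\infty\le1$. Along such a line, $Q$ restricts to a trigonometric polynomial of degree at most $d$ in $t$ (taking $v$ with entries in $\{0,\pm1\}$). For univariate trigonometric polynomials of degree $d$, the Remez inequality of Nazarov type (\cite{NSV}) gives
\[
\mu\{t: |q(t)|\le \eps\|q\|_\infty\} \le C\,\eps^{1/(c d)}.
\]
An explicit form is $\sup|q|\le (C/|E|)^{2d}\sup_E|q|$ with constants involving $\pi$; this would be stated as \cref{thm:remez}. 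Fubini over lines then bounds the torus measure of the small set, relative to each line's own supremum. Interpolation handles lines with small supremum. Every point lies on the diagonal orbit $\theta+t\mathbf 1$, and multiplying by $e^{it}I$ is a Haar-preserving rotation of $U$. Thus along $U e^{it}$ the function $t\mapsto P(e^{it}U)$ is a trigonometric polynomial of degree $\le d$, and Haar measure is invariant under this flow. This gives
\[
\Pr\bigl[|P|\le\eps\|P\|_\infty\bigr]\le \E_U\,\mu\{t:|P(e^{it}U)|\le \eps\|P\|_\infty\}.
\]
Again this is relative to each line's supremum.

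\textbf{Main obstacle.} The hard part is transferring the global maximum $\|P\|_\infty$ to typical one-dimensional slices without losing a factor exponential in $m$. I would first try a chaining argument: connect a Haar-typical $U$ to $U^\star$ through a path of $O(m^2)$ one-parameter subgroups $e^{itH}$, where $H$ ranges over rank-one or two-level Hermitian generators with integer spectrum so that restrictions remain degree-$\le d$ trigonometric polynomials. Along each subgroup, apply the Remez bound with exponent $1/(cd)$. A Markov and union bound over the $O(m^2)$ links yields the prefactor $m^2$. Splitting the error $\eps$ geometrically across the links is what degrades the constant to $24\pi$ while keeping the exponent dimension-free.
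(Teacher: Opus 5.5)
Your proposal has a genuine gap at exactly the step you flag as the main obstacle, and the chaining you suggest cannot close it. The missing idea is to anchor every slice at the maximizer, which removes the obstacle rather than paying for it.

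\textbf{The missing construction.} With $U=VDW^*$ for independent $V,W$ and uniform $D$, the point $D=I$ lands at the random point $VW^*$. The paper instead writes $U=U^\star W\,\mathrm{Diag}(\bz)\,W^*$ with $W$ Haar and $\bz\sim\Weyl$. This is an identity in law, by the Weyl integration formula applied to the class function $V\mapsto\E_W\bm{1}[|P(U^\star WVW^*)|\le\eps\|P\|_\infty]$. Every fiber $q_W(\bz)=P(U^\star W\mathrm{Diag}(\bz)W^*)$ is then a degree-$\le d$ Laurent polynomial with $|q_W(\bm{1})|=\|P\|_\infty$, so its own supremum is the global one.

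The paper then integrates over rays from $\bm{1}$, using polar coordinates of $\bm{V}\sim\mathrm{Unif}[-1,1]^m$, not over parallel lines or diagonal orbits. This way each $h_{\bm{V}}(t)=q(\bz_{\bm{V}}(t))$ is maximized at $t=0$. Since the ray directions are arbitrary real vectors, $h_{\bm{V}}$ is an exponential sum with non-integer frequencies in $[-\pi d,\pi d]$. That is why a Remez inequality for bounded-bandwidth exponential sums is needed (Phragm\'en--Lindel\"of plus the Nazarov--Sodin--Volberg lemma), not one for trigonometric polynomials. Integer directions alone cannot sweep out the torus from one point.

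Finally, the factor $m^2$ is not a union bound. The Weyl density can be as large as $m^m/m!$. The paper controls it along rays through the weight $\psi_{\bm{V}}(t)=\frac{t^{m-1}}{(m-1)!}\prod_{i<j}4\sin^2(\beta_{ij}t)$. By concavity of $\sin$ on $[0,\pi]$, $\psi_{\bm{V}}(t)\ge(t/t_{\max})^{m^2-1}\psi_{\bm{V}}(t_{\max})$ with $t_{\max}\ge 1/2$, hence $\max\psi_{\bm{V}}\le 2m^2\int_0^1\psi_{\bm{V}}$.

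\textbf{Why chaining fails.} Write $|P(U)|/\|P\|_\infty=\prod_{j\le K}R_j$ over $K\asymp m^2$ links, with conditional tails $\Pr[R_j\le u]\le Cu^{1/(cd)}$. The per-link losses compound, whichever way you combine them:
\begin{itemize}
\item the negative-moment argument (as in the paper's multigate lemma) gives a prefactor $(1+C)^K=e^{\Theta(m^2)}$;
\item a union bound with $\eps_j=\eps^{1/K}$ gives exponent $1/(cdK)\sim 1/(cdm^2)$, which is precisely the dimension-dependent exponent of prior work;
\item geometric splitting $\eps_j=\eps^{2^{-j}}$ makes the late terms of order $C$, so the bound becomes vacuous.
\end{itemize}
It is also unclear that a product of $O(m^2)$ flows with independent uniform times is Haar distributed.

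The constant $24\pi$ does not come from splitting. It is $\sigma=3\log(1/b)/(1-a)$ in the NSV lemma with $a=1/2$, $b=e^{-4\tau}$, $\tau=\pi d$. Your diagonal flow $t\mapsto e^{it}U$ also carries no information in general: for example, $P(U)=|U_{11}|^2$ is constant along every orbit.
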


The proof reduces the problem to a one dimensional \emph{Remez inequality} while keeping the exponent independent of dimension (see \cref{sec:anticoncentration-overview}). Anticoncentration inequalities have many applications in theoretical computer science, probability and mathematics~\cite{MNV16,NV13} and have become an important tool in quantum information.  They underlie complexity-theoretic arguments for the hardness of
Boson Sampling and random circuit sampling, and have also been used to study Haar-like
behavior, scrambling, unitary designs and learning of random quantum circuits
~\cite{BosonSampling,BFNW18,DHJB22,fefferman2024anticoncentrationunitaryhaarmeasure}.  We hope that the above dimension-free anticoncentration statement over the Haar measure will find other applications in quantum information and complexity. %be of independent interest. 

\subsection{Prior Work and Discussion}

We now give a thorough comparison with relevant prior works and explain previous bottlenecks. For an easy comparison with prior works, we also record the results in \cref{tab:comparison}. 
\begin{table}[!ht]
\centering
\captionsetup{font=footnotesize}
\small
\setlength{\tabcolsep}{3pt}
\renewcommand{\arraystretch}{1.2}
\begin{tabularx}{\linewidth}{@{}
  >{\raggedright\arraybackslash}p{0.14\linewidth}
  >{\centering\arraybackslash}p{0.115\linewidth}
  >{\centering\arraybackslash}p{0.085\linewidth}
  >{\raggedright\arraybackslash}p{0.10\linewidth}
  >{\raggedright\arraybackslash}p{0.125\linewidth}
  >{\centering\arraybackslash}p{0.16\linewidth}
  >{\raggedright\arraybackslash}X@{}}
\toprule
\textbf{Work}
& \textbf{Dimension}
& \textbf{Input locality}
& \textbf{Access}
& \textbf{Output}
& \textbf{Output depth}
& \textbf{Time} \\
\midrule
\multicolumn{7}{@{}l}{\emph{Arbitrary circuits}} \\
\addlinespace[3pt]

\cite{HuangEtAl24}
& 1D or 2D
& $O(1)$
& State copies
& Approximate state
& $3d$
& Quasipolynomial \\
\addlinespace[4pt]

\cite{HuangEtAl24}
& Fixed $k$
& $O(1)$
& Unitary queries
& Approximate $U$
& $O(d)$
& Quasipolynomial \\
\addlinespace[4pt]

\cite{LandauLiu25}\newline
\cite{KimKimRanard24}
& Fixed $k$
& $O(1)$
& State copies
& Approximate state
& $O(d)$
& Quasipolynomial \\
\addlinespace[4pt]

\cite{KimKimRanard24}
& 1D
& $O(1)$
& State copies
& Approximate state
& $2^{O(d)}$
& Polynomial \\
\addlinespace[4pt]

\cite{Hu}
& 1D
& $2\ell$
& State copies
& Approximate state$^\dagger$
& $\operatorname{poly}(n,2^{\ell d})$
& $\operatorname{poly}(n,2^{\ell d})$ \\
\midrule
\multicolumn{7}{@{}l}{\emph{Random brickwork circuits}} \\
\addlinespace[3pt]

\cite{fefferman2024anticoncentrationunitaryhaarmeasure}
& Fixed $k$
& $O(1)$
& Unitary queries
& Exact $U$
& $d$
& Polynomial \\
\addlinespace[4pt]

\textbf{Our work}
& Fixed $k$
& $2\ell$
& State copies
& Exact $U$
& $d$
& $\operatorname{poly}(n,2^{\ell d})$ \\
\bottomrule
\end{tabularx}

\smallskip
\begin{minipage}{\linewidth}
\footnotesize
Approximate reconstructions allow ancillas.
Exact recovery preserves the layout and uses the respective discrete
gate ensembles.
$^\dagger$\,Includes compilation of the learned channels into
preparation circuits.
\end{minipage}

\caption{Learning guarantees on lattices of fixed dimension $k$.
Unless shown explicitly, time is evaluated at $d=O(\log n)$ and
inverse polynomial error.}
\label{tab:comparison}
\end{table}

\textbf{Shallow Circuit Learning.} A recent line of work has studied the learnability of states prepared by shallow quantum circuits and, in turn, the reconstruction of circuits preparing these states. These works broadly consider two settings, worst-case and random circuits, under two different learning models. In the \emph{state access} model, one is given copies of a state $\ket{\psi}$ prepared by applying a shallow circuit to the all zero state, and the task is to recover a circuit that approximately prepares $\ket{\psi}$. This is the model considered in this paper. In the \emph{unitary access} model, one is given query access to the underlying circuit $U$, and the task is to learn a circuit that approximately implements $U$. For worst-case circuits, these two learning problems are incomparable. Unitary access provides strictly more information, but the learner is also required to recover the action of the entire unitary rather than only its action on the all zero state.

For worst-case circuits with unitary access, \cite{HuangEtAl24} show that constant-depth circuits can be learned in polynomial time in $n$. At logarithmic depth, however, the corresponding depth preserving algorithm runs in quasipolynomial time. For worst-case circuits with state access, the works~\cite{LandauLiu25,HuangEtAl24,KimKimRanard24} recover shallow circuits preparing the target state by learning sufficiently large local inversion channels. These results are \emph{remarkably general}, since they apply to worst case circuits, but their complexity is controlled by the size of the relevant lightcones, which becomes costly as the depth or the dimension of the lattice grows. Even for one dimensional brickwork circuits with $2$-local gates, the backward lightcone of a single output qubit contains $O(d)$ input qubits and $O(d^2)$ gate locations. A brute force search over gate assignments in this region can therefore require time $\exp(O(d^2))$, which is already superpolynomial for $d=\Theta(\log n)$. In $k$ dimensions, the number of gate locations in such a lightcone grows as $O_k(d^{k+1})$, making this dependence even more prohibitive. \cite{KimKimRanard24} and, more recently, \cite{Hu} obtain polynomial time learning in related regimes at the expense of substantially increasing the depth of the learned preparation. The latter work also allows growing gate locality, but only by outputting larger local quantum channels, whose circuit implementation may have substantially larger depth. % It therefore does not preserve the original locality or depth of the circuit.

For random circuits with unitary access, and perhaps the closest technical comparison to our work, is the work of Fefferman, Ghosh and Zhan~\cite{fefferman2024anticoncentrationunitaryhaarmeasure} who \emph{initiated} the study of learning random brickwork circuits. They considered learning $O(\log n)$-depth brickwork random circuits with unitary access and gave an exact learner that runs in polynomial time for $O(1)$ locality circuits. Their algorithm crucially relies on querying the unknown circuit on different inputs and observing their effect on selected output qubits. They analyze this using their unitary anticoncentration bound stated in \cref{eqn:fgz}. The dimension dependent exponent in that bound prevents the same argument from extending to gates of growing locality. Other recent works also study related learning problems in different architectures~\cite{KW, VasconcelosHuang24}.

Taken together, these results expose the main difficulty in moving from unitary access\footnote{We remark that the recent work~\cite{TangWrightZhandry25} showed how to simulate queries to a state preparation oracle using state copies. This simulation need not preserve the original unitary's action on other input states. Hence, this work cannot be combined with~\cite{fefferman2024anticoncentrationunitaryhaarmeasure} to obtain learning algorithms under state access, as~\cite{fefferman2024anticoncentrationunitaryhaarmeasure} require such guarantees.} to state access. With unitary access, the learner can actively probe the circuit by varying its input and observing how this change propagates to the output. With state access, all information must instead be extracted from copies of the single state $U{\ket{0^{\ell}}}^{\otimes n}$. The most direct approach is then to reconstruct suitable local reduced states, whose size is governed by backward lightcones. This makes going beyond the lightcone scale a central difficulty.

\textbf{MPS Learning Algorithms.}
Another approach is to exploit the tensor network structure of shallow circuit states. In one dimension, the output of a depth-$d$ circuit with $\ell$-local gates is an MPS with bond dimension $2^{O(\ell d)}$. Thus, when $\ell d=O(\log n)$, recent MPS learning algorithms can recover a succinct description of the output state in polynomial time from copies~\cite{bakshi2025learning,LinChiaHung25}. However, these methods recover an MPS description, or more generally a circuit preparing the learned MPS, rather than the original shallow circuit. In particular, they do not preserve the depth and local architecture of the unknown circuit. In our setting, for $O(1)$-local circuits, existing MPS learners~\cite{LinChiaHung25}, for instance, yield $O(n)$-depth implementations on one dimensional brickwork architectures. 

\textbf{Our work.} 
Putting together the comparisons of \cref{tab:comparison}, our work is the first to simultaneously obtain the following~guarantees:
\begin{enumerate}[label={$(\roman*)$}]
\item  runs in polynomial time for larger \emph{dimension and locality} as long as  $\ell d=O(\log n)$, 
\item requires only \emph{copies} of a single output state rather than unitary access, 

\item \emph{exactly recovers} the original circuit that prepared the state. Perhaps surprisingly, this much weaker state access model still suffices for exact circuit recovery for brickwork random  circuits.
\end{enumerate} The key insight is that the learner does not need to reconstruct an entire lightcone: carefully chosen correlations between only two output sites contain enough information to identify the gates layer by layer. In the next two sections, we give an overview of the proof, highlighting the main technical challenges and ideas.

\subsection{Overview of the Learning Algorithm}
\label{sec:learning-overview}

We describe our learning algorithm for one dimensional lattices here. The same ideas extend to higher-dimensional lattices. Let $U = U_d U_{d-1}\ldots U_1$ be a random circuit on one dimensional brickwork architecture chosen from the randomly chosen gateset $\cN$ described before the statement of \cref{thm:learning}. The learning algorithm will learn the layers of the circuit in backward fashion, first learning layer $U_d$, then $U_{d-1}$ and so on. All the layers will be learned exactly with high probability over the choice of the circuit and the gateset. Thus, we can always assume that for any $t \in [d]$, the algorithm can prepare copies of the state $\ket{\psi_t} := U_tU_{t-1}\ldots U_1\ket{0^\ell}^{\otimes n}$ prepared by the prefix circuit by applying the inverse circuit $U_{t+1}^\dag\ldots U^\dag_d$ to the copies of the input state $\ket{\psi_0} = U\ket{0^\ell}^{\otimes n}$. 

Let us assume that the algorithm is learning the layer $U_t$ and let $\psi = \ketbra{\psi}{\psi}$ be the state obtained by undoing the layers above. Consider a gate $G$ of the circuit which acts on two neighboring sites $B=i$ and $C=i+1$. Consider the site $D=i+2t-1$. Our starting point is the observation that the backward lightcones of the sites $B$ and $D$ are disjoint if the gate $G$ was removed from the circuit (see \cref{fig:lightcone-removal}). This implies that the state $\rhodef = G^\dag\psi G$ obtained by removing the gate $G$ from the circuit becomes separable when reduced to the sites $B$ and $D$. That is, $\rhodef_{BD} = \rhodef_{B}\otimes \rhodef_{D}$. 

\begin{figure}[H]
\centering

% ============================================================
% (a) Original circuit
% ============================================================
\begin{minipage}[t]{0.48\textwidth}
\centering
\resizebox{\linewidth}{!}{%
\begin{tikzpicture}[
    x=0.62cm,y=0.933cm,
    font=\sffamily\scriptsize,
    text=lcInk,
    line cap=round,
    line join=round,
    lc cone/.style={
        line width=0.35pt,
        dash pattern=on 1.5pt off 1.3pt,
        fill opacity=0.085
    },
    lc gate/.style={rounded corners=1.5pt}
]

\useasboundingbox (-0.8,-0.7) rectangle (15.8,6.0);

% B lightcone.
\path[lc cone,draw=lcBlue,fill=lcBlue]
  (2.875,4.861) -- (3.125,4.861)
  -- (3.125,4.347) -- (4.468,4.347)
  -- (4.468,3.347) -- (5.468,3.347)
  -- (5.468,2.347) -- (6.468,2.347)
  -- (6.468,1.347) -- (7.468,1.347)
  -- (7.468,0.389) -- (-0.468,0.389)
  -- (-0.468,1.347) -- (0.532,1.347)
  -- (0.532,2.347) -- (1.532,2.347)
  -- (1.532,3.347) -- (2.532,3.347)
  -- (2.532,4.347) -- (2.875,4.347) -- cycle;

% D lightcone.
\path[lc cone,draw=lcOrange,fill=lcOrange]
  (9.875,4.861) -- (10.125,4.861)
  -- (10.125,4.347) -- (10.468,4.347)
  -- (10.468,3.347) -- (11.468,3.347)
  -- (11.468,2.347) -- (12.468,2.347)
  -- (12.468,1.347) -- (13.468,1.347)
  -- (13.468,0.389) -- (5.532,0.389)
  -- (5.532,1.347) -- (6.532,1.347)
  -- (6.532,2.347) -- (7.532,2.347)
  -- (7.532,3.347) -- (8.532,3.347)
  -- (8.532,4.347) -- (9.875,4.347) -- cycle;

% Wires.
\foreach \q in {0,...,15}
  \draw[lcInk,line width=0.4pt]
    (\q,0.389) -- (\q,4.861);

% B-colored wire segments.
\def\BSegments{
  3/3/4.861/4.297,
  3/4/4.297/3.297,
  2/5/3.297/2.297,
  1/6/2.297/1.297,
  0/7/1.297/0.389
}

\foreach \lo/\hi/\yt/\yb in \BSegments {
  \foreach \q in {\lo,...,\hi}
    \draw[lcBlue,line width=0.7pt]
      (\q,\yb) -- (\q,\yt);
}

% D-colored wire segments.
\foreach \lo/\hi/\yt/\yb in {
  10/10/4.861/4.297,
  9/10/4.297/3.297,
  8/11/3.297/2.297,
  7/12/2.297/1.297,
  6/13/1.297/0.389
} {
  \foreach \q in {\lo,...,\hi}
    \draw[lcOrange,line width=0.7pt]
      (\q,\yb) -- (\q,\yt);
}

% Overlap.
\foreach \q in {6,7}
  \draw[lcPurple,line width=0.7pt]
    (\q,0.389) -- (\q,1.297);

% Brickwall gates.
\foreach \layer in {1,...,4} {

  \ifodd\layer
    \def\GateStarts{0,2,4,6,8,10,12,14}
  \else
    \def\GateStarts{1,3,5,7,9,11,13}
  \fi

  \foreach \a in \GateStarts {

    \pgfmathtruncatemacro{\InB}{
      (\a>=\layer-1) && (\a<=7-\layer)
    }

    \pgfmathtruncatemacro{\InD}{
      (\a>=\layer+5) && (\a<=13-\layer)
    }

    \def\GateDraw{lcGate}
    \def\GateFill{white}
    \def\GateWidth{0.4pt}

    \ifnum\InB=1\relax
      \def\GateDraw{lcBlue}
      \def\GateFill{lcBlueFill}
      \def\GateWidth{0.65pt}
    \fi

    \ifnum\InD=1\relax
      \def\GateDraw{lcOrange}
      \def\GateFill{lcOrangeFill}
      \def\GateWidth{0.65pt}

      \ifnum\InB=1\relax
        \def\GateDraw{lcPurple}
        \def\GateFill{lcPurpleFill}
      \fi
    \fi

    \draw[
      lc gate,
      draw=\GateDraw,
      fill=\GateFill,
      line width=\GateWidth
    ]
      ({\a-0.393},{\layer-0.297})
      rectangle
      ({\a+1.393},{\layer+0.297});

    \ifnum\InB=1\relax
      \ifnum\InD=1\relax
        \node[
          text=lcPurple,
          font=\sffamily\scriptsize\bfseries
        ]
        at ({\a+0.5},\layer) {shared};
      \fi
    \fi

    \ifnum\layer=4\relax
      \ifnum\a=3\relax
        \node[text=lcBlue] at (3.5,4) {$BC$};
      \fi
    \fi

  }
}

% Input dots.
\foreach \q in {0,...,15} {

  \def\DotColor{lcInk}

  \ifnum\q<6\relax
    \def\DotColor{lcBlue}
  \fi

  \ifnum\q>5\relax
    \ifnum\q<14\relax
      \def\DotColor{lcOrange}
    \fi
  \fi

  \ifnum\q>5\relax
    \ifnum\q<8\relax
      \def\DotColor{lcPurple}
    \fi
  \fi

  \fill[\DotColor]
    (\q,0.389) circle (2.5pt);
}

% Output wire labels.
\node[text=lcBlue,font=\large] at (3,5.083) {$B$};
\node[font=\large] at (4,5.083) {$C$};
\node[text=lcOrange,font=\large] at (10,5.083) {$D$};

% Legend at bottom.
\draw[draw=lcBlue,fill=lcBlueFill,line width=0.35pt]
  (1.0,-0.45) rectangle (1.4,-0.27);
\node[anchor=west,text=lcMuted]
  at (1.55,-0.36) {$B$ lightcone};

\draw[draw=lcOrange,fill=lcOrangeFill,line width=0.35pt]
  (5.3,-0.45) rectangle (5.7,-0.27);
\node[anchor=west,text=lcMuted]
  at (5.85,-0.36) {$D$ lightcone};

\draw[draw=lcPurple,fill=lcPurpleFill,line width=0.35pt]
  (9.6,-0.45) rectangle (10.0,-0.27);
\node[anchor=west,text=lcMuted]
  at (10.15,-0.36) {Overlap};

\end{tikzpicture}%
}

\smallskip

\scriptsize{(a) Original circuit}
\end{minipage}
\hfill
% ============================================================
% (b) Circuit after removing BC
% ============================================================
\begin{minipage}[t]{0.48\textwidth}
\centering
\resizebox{\linewidth}{!}{%
\begin{tikzpicture}[
    x=0.62cm,y=0.933cm,
    font=\sffamily\scriptsize,
    text=lcInk,
    line cap=round,
    line join=round,
    lc cone/.style={
        line width=0.35pt,
        dash pattern=on 1.5pt off 1.3pt,
        fill opacity=0.085
    },
    lc gate/.style={rounded corners=1.5pt}
]

\useasboundingbox (-0.8,-0.7) rectangle (15.8,6.0);

% B lightcone after removing BC.
\path[lc cone,draw=lcBlue,fill=lcBlue]
  (2.875,4.861) -- (3.125,4.861)
  -- (3.125,3.347) -- (3.468,3.347)
  -- (3.468,2.347) -- (4.468,2.347)
  -- (4.468,1.347) -- (5.468,1.347)
  -- (5.468,0.389) -- (-0.468,0.389)
  -- (-0.468,1.347) -- (0.532,1.347)
  -- (0.532,2.347) -- (1.532,2.347)
  -- (1.532,3.347) -- (2.875,3.347) -- cycle;

% D lightcone.
\path[lc cone,draw=lcOrange,fill=lcOrange]
  (9.875,4.861) -- (10.125,4.861)
  -- (10.125,4.347) -- (10.468,4.347)
  -- (10.468,3.347) -- (11.468,3.347)
  -- (11.468,2.347) -- (12.468,2.347)
  -- (12.468,1.347) -- (13.468,1.347)
  -- (13.468,0.389) -- (5.532,0.389)
  -- (5.532,1.347) -- (6.532,1.347)
  -- (6.532,2.347) -- (7.532,2.347)
  -- (7.532,3.347) -- (8.532,3.347)
  -- (8.532,4.347) -- (9.875,4.347) -- cycle;

% Wires.
\foreach \q in {0,...,15}
  \draw[lcInk,line width=0.4pt]
    (\q,0.389) -- (\q,4.861);

% B-colored wire segments.
\def\BSegments{
  3/3/4.861/4.297,
  3/3/4.297/3.297,
  2/3/3.297/2.297,
  1/4/2.297/1.297,
  0/5/1.297/0.389
}

\foreach \lo/\hi/\yt/\yb in \BSegments {
  \foreach \q in {\lo,...,\hi}
    \draw[lcBlue,line width=0.7pt]
      (\q,\yb) -- (\q,\yt);
}

% D-colored wire segments.
\foreach \lo/\hi/\yt/\yb in {
  10/10/4.861/4.297,
  9/10/4.297/3.297,
  8/11/3.297/2.297,
  7/12/2.297/1.297,
  6/13/1.297/0.389
} {
  \foreach \q in {\lo,...,\hi}
    \draw[lcOrange,line width=0.7pt]
      (\q,\yb) -- (\q,\yt);
}

% Brickwall gates.
\foreach \layer in {1,...,4} {

  \ifodd\layer
    \def\GateStarts{0,2,4,6,8,10,12,14}
  \else
    \def\GateStarts{1,3,5,7,9,11,13}
  \fi

  \foreach \a in \GateStarts {

    % Omit the BC gate in the fourth layer.
    \pgfmathtruncatemacro{\SkipGate}{
      (\layer==4) && (\a==3)
    }

    \ifnum\SkipGate=0\relax

      \pgfmathtruncatemacro{\InB}{
        (\a>=\layer-1) && (\a<=5-\layer)
      }

      \pgfmathtruncatemacro{\InD}{
        (\a>=\layer+5) && (\a<=13-\layer)
      }

      \def\GateDraw{lcGate}
      \def\GateFill{white}
      \def\GateWidth{0.4pt}

      \ifnum\InB=1\relax
        \def\GateDraw{lcBlue}
        \def\GateFill{lcBlueFill}
        \def\GateWidth{0.65pt}
      \fi

      \ifnum\InD=1\relax
        \def\GateDraw{lcOrange}
        \def\GateFill{lcOrangeFill}
        \def\GateWidth{0.65pt}

        \ifnum\InB=1\relax
          \def\GateDraw{lcPurple}
          \def\GateFill{lcPurpleFill}
        \fi
      \fi

      \draw[
        lc gate,
        draw=\GateDraw,
        fill=\GateFill,
        line width=\GateWidth
      ]
        ({\a-0.393},{\layer-0.297})
        rectangle
        ({\a+1.393},{\layer+0.297});

    \fi
  }
}

% Input dots.
\foreach \q in {0,...,15} {

  \def\DotColor{lcInk}

  \ifnum\q<6\relax
    \def\DotColor{lcBlue}
  \fi

  \ifnum\q>5\relax
    \ifnum\q<14\relax
      \def\DotColor{lcOrange}
    \fi
  \fi

  \fill[\DotColor]
    (\q,0.389) circle (2.5pt);
}

% Output wire labels.
\node[text=lcBlue,font=\large] at (3,5.083) {$B$};
\node[font=\large] at (4,5.083) {$C$};
\node[text=lcOrange,font=\large] at (10,5.083) {$D$};

% Legend at bottom.
\draw[draw=lcBlue,fill=lcBlueFill,line width=0.35pt]
  (2.4,-0.45) rectangle (2.8,-0.27);
\node[anchor=west,text=lcMuted]
  at (2.95,-0.36) {$B$ lightcone};

\draw[draw=lcOrange,fill=lcOrangeFill,line width=0.35pt]
  (7.0,-0.45) rectangle (7.4,-0.27);
\node[anchor=west,text=lcMuted]
  at (7.55,-0.36) {$D$ lightcone};

\end{tikzpicture}%
}

\smallskip

\scriptsize{(b) Circuit with $BC$ removed}
\end{minipage}

\caption{
Illustration of the learning algorithm on one-dimensional case. Here qubit $B$ and $D$ are two qudits with their lightcone just intersect.
(a) In the original brickwall circuit, the backward lightcones of
$B$ and $D$ intersect.
(b) After removing the gate acting on qudits $B, C$ at last layer, the backward lightcone of $B$ and $D$ doesn't intersect
}
\label{fig:lightcone-removal}

\end{figure}

Define the following correlation function for an arbitrary quantum state $\rho$ to measure the distance from separability on two registers $B$ and $D$: 
\begin{align}\label{eqn:corr-intro}
        \corr_{B:D}(\rho) = \|\rho_{BD}-\rho_{B}\otimes \rho_D\|_F^2,
\end{align}
where the norm denotes the Frobenius norm.   This quantity measures the amount of correlation between $B$ and $D$ by comparing their joint state to the product of their marginals. It can be considered as some measure of mutual information between the sites $B$ and $D$ in the state $\rho$, so it vanishes when they are uncorrelated and becomes larger when the joint state is far from a product state.  In particular,  $\corr_{B:D}(\rhodef) =  0$ because of the disjointness of the backward lightcones.

Our key technical statement is that, with high probability, for the original state $|\psi\rangle$ in which the gate $G$ is present, this correlation is at least $2^{-O(\ell d)}$. Intuitively, the gate $G$ couples the two backward lightcones of $B$ and $D$ and thereby creates correlations between these regions. Our anticoncentration argument shows that, for a typical random circuit, $\corr_{B:D}(\rhodef)\geq 2^{-O(\ell d)}$. This scale is significantly larger than what one would obtain by paying for every gate in the full lightcone, which already contains $O(d^2)$ gates in one dimension and grows even faster in higher dimensions. A key part of our argument shows that, regardless of how all the other gates are fixed, one can choose only $O(d)$ gates so that the resulting correlation is constant. We then combine this observation with our Carbery-Wright anticoncentration inequality over the unitary group. The correlation can be written as a polynomial of constant degree in each gate. Since many independent gates are involved, we apply the anticoncentration bound inductively, exposing the relevant gates one at a time while controlling the loss at each step. This yields the desired high probability lower bound without degrading the small-ball exponent with the number of gates. The dimension-independent exponent is essential here for handling gates of larger than constant locality.

The above suggests a natural test to learn the gate $G$ exactly: apply the inverse of each gate in the public gateset to the state $\psi$; if one chooses the correct gate, the correlation becomes zero while for an incorrect guess, the correlation remains at least $2^{-O(\ell d)}$. This can be detected by performing tomography on the reduced system $BD$ with sufficient accuracy. In particular, since the dimension of the system $BD$ is $2^{O(\ell)}$ and the precision required is roughly $2^{-O(\ell d)}$, this implies that one such gate can be learned in time $\mathrm{poly}(n,\delta^{-1})2^{O(\ell d)}$. One can then proceed to learn the entire layer and continue with the next layer to learn the whole circuit.

\subsection{Overview of the Anticoncentration Proof}\label{sec:anticoncentration-overview}

The anticoncentration proof proceeds in three steps. %\snote{done with my pass + edits, looks great!}

\textbf{Reduction to Polytorus.} We first reduce the polynomial anticoncentration inequality on $\U(m)$ to a similar anticoncentration statement for \emph{Laurent polynomials} on the \emph{polytorus} $\U(1)^m$ where $\U(1) = \{z \in \C \mid |z|=1\}$ is the complex unit circle. Recall that a Laurent polynomial is simply a multivariate polynomial in which negative powers are also allowed. Thus, a Laurent polynomial $q: \U(1)^m \to \C$ can be written as 
\[ 
q(z_1,\ldots,z_m) = \sum_{(k_1,\ldots,k_m) \in \Z^m} \qhat(k_1,\ldots,k_m) z_1^{k_1}\cdots z_m^{k_m}, 
\]
where $\qhat(k_1,\ldots,k_m) \in \C$ and the number of non-zero terms is finite. The degree of a Laurent monomial $z^k$ is defined as the largest $\ell_1$-norm of any vector $(k_1,\ldots,k_m)$ that appears with a non-zero coefficient. We show that \cref{thm:unitary-smallball} reduces to showing that low-degree Laurent polynomials are anticoncentrated under a probability measure over the polytorus that we call the \emph{Weyl measure} denoted $\Weyl$. The {Weyl measure} captures how the eigenvalues of a Haar random unitary are distributed in a uniformly random order. Proving this relies on the Weyl integration formula and reduces our task to showing that
\[\Pr_{(z_1,\ldots,z_m)\sim \Weyl}\bigl[\,|q(z_1,\ldots,z_m)| \le \eps\|q\|_{\infty}\bigr] \lesssim m^2 \cdot \eps^{O(1/{d})}.\]

\textbf{Reduction to Univariate Extremal Inequalities.} To prove the above inequality, we use the idea of interpolation. Let $t \in [0,1]$ be a parameter that we refer to as time. We show that by considering the time parameter, one can reduce the above anticoncentration inequality over $\U(1)^m$ to a statement about extremal values of certain types of complex univariate functions of $t$. In particular, we consider functions of the form 
\[ h(t) = \sum_{\lambda \in [-\pi d,\pi d]} c_{\lambda} e^{i\lambda t},\]
where $\lambda$ takes only finite number of values and $\|h\|_{\infty} = h(0)$. Thus, the
anticoncentration problem reduces to understanding how large a set of
times $t$ can be on which $h$ is much smaller than $h(0)$.
In particular, if
\[
A=\{t: |h(t)|\le \varepsilon |h(0)|\},
\]
our goal is to upper bound the Lebesgue measure of $A\subseteq [0,1]$ in terms of $\varepsilon$ and the parameter $d$.

\textbf{Connection to Classical Remez Inequalities.} Understanding this question  is directly related to a generalization of extremal properties of Chebyshev polynomials known as \emph{Remez inequalities}. Suppose $h$ was a real-valued polynomial of degree $d$ that attains its maximum at $0$ and $A$ is an interval in $[-1,1]$. Then, 
\[ h(0) \le \left(\frac{1}{|A|}\right)^{O(d)} \sup_{x \in A} |h(x)|,\]
 since the largest growth of any polynomial that is bounded on the interval $A$ is attained by the Chebyshev polynomial of degree $d$ rescaled on the interval $A$ (see, e.g.,~\cite{sachdeva2013approximation}). The above implies that if $A$ is an interval where $|h(x)| \le \eps |h(0)|$, then $|A| \lesssim \varepsilon^{O(1/d)}$. Remez inequalities generalize this statement to arbitrary measurable sets $A$ that need not be intervals. In our setting, $h$ is not a polynomial but an exponential sum whose 
frequencies lie in a bounded interval, so the relevant notion of complexity is
its \emph{bandwidth}.  There is an extensive literature on Remez inequalities for other families of functions with the role of degree replaced by some other property and we derive the statement needed for our application from a work of  Nazarov, Sodin and Volberg~\cite{NSV} with a short proof.

%% file: preliminaries.tex
\section{Preliminaries}

\subsection{Notation} We write $[k]=\{1,\ldots,k\}$. We use boldfaced letters $\bz=(z_1,\ldots,z_m)$ to denote tuples or vectors. For a pure quantum state $\ket{\psi}$, we write $\psi = \ketbra{\psi}{\psi}$ to denote the corresponding density matrix. We write $\psi_{AB}$ to denote the reduced density matrix on the subsystems $A$ and $B$. We will sometimes use $\|\cdot\|_{\infty}$ and $\|\cdot\|_2$ to denote the $L_\infty$ and $L_2$ norms of a function with respect to the relevant underlying measure. The measure may vary from one setting to another and will be clear from context. We write $\bm{1}[\mathcal{E}]$ to denote the indicator function of an event $\mathcal{E}$.% \mnote{Put other basic things that are needed here. The architecture can be defined with the problem}

We use $\U(m)$ to denote the group of $m \times m$ unitary matrices and $\haar{m}$ to denote the normalized Haar measure over $\U(m)$. We refer to the group $\U(1)$ as the torus and the direct product $\U(1)^m$ as the polytorus. The $\haar{1}^{\otimes m}$ denotes the product probability measure where each coordinate is sampled from the Haar measure on $\U(1)$, i.e., from the uniform measure on the complex unit circle.

\subsection{Weyl Measure and Weyl Integration}
The \emph{Weyl probability measure} captures how the eigenvalues of a Haar random unitary are distributed in a uniformly random order and we briefly describe some of its properties. The {Weyl} measure $\Weyl$ over $\U(1)^m$ is absolutely continuous with respect to the product measure $\haar{1}^{\otimes m}$ on the polytorus and defined by the following relative density for every $\bz \in \U(1)^m$,
\begin{equation}\label{eqn:weyl}
     \frac{d\Weyl}{d\haar{1}^{\otimes m}}(\bz) = \frac{1}{m!}\prod_{j < k} |z_j - z_k|^2.
\end{equation}

The density is always non-negative and Lemma \ref{lem:weyl-normalization} below shows that it is indeed a probability measure. 

\begin{lemma}
\label{lem:weyl-normalization}
 $\E_{\bz \sim \haar{1}^{\otimes m}}\left[\frac{1}{m!} \prod_{i < j} |z_i - z_j|^2\right] = 1.$
\end{lemma}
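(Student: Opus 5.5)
The plan is to recognize $\prod_{i<j}|z_i-z_j|^2$ as the squared modulus of a Vandermonde determinant and then apply Parseval on the polytorus. For $\bz\in\U(1)^m$ write $V(\bz)=\prod_{i<j}(z_j-z_i)=\det\bigl(z_i^{\,j-1}\bigr)_{i,j\in[m]}$. Then
\[
\prod_{i<j}|z_i-z_j|^2 = |V(\bz)|^2 .
\]
Expanding the determinant by the Leibniz formula expresses $V$ as a Laurent (in fact ordinary) polynomial:
\[
V(\bz)=\sum_{\sigma\in S_m}\operatorname{sgn}(\sigma)\prod_{i=1}^m z_i^{\sigma(i)-1}.
\]

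The next step is to check that these $m!$ monomials are distinct. Distinct permutations $\sigma$ give distinct exponent vectors $(\sigma(1)-1,\ldots,\sigma(m)-1)$, so each monomial appears exactly once. Every coefficient has modulus one, namely $\pm1$.

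We then use the orthonormality of characters under the product Haar measure: $\E_{\bz\sim\haar{1}^{\otimes m}}[\bz^{\bk}\overline{\bz^{\bk'}}]=\bm{1}[\bk=\bk']$. This holds because $\E_{z\sim\haar{1}}[z^k]=\bm{1}[k=0]$ for each coordinate, and the coordinates are independent. By Parseval,
\[
\E\bigl[|V(\bz)|^2\bigr]=\sum_{\sigma\in S_m}|\operatorname{sgn}(\sigma)|^2=m!,
\]
and dividing by $m!$ gives the claim.

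There is no real obstacle here. The only step that needs care is confirming that the Vandermonde identity holds up to sign, and that no cancellation or merging of monomials occurs in the expansion. Both follow from the injectivity of $\sigma\mapsto$ exponent vector noted above.
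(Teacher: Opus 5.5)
Your proposal is correct and follows essentially the same route as the paper: both write $\prod_{i<j}|z_i-z_j|^2$ as the squared modulus of a Vandermonde determinant, expand it via the Leibniz formula, and use orthonormality of the characters $\bz^{\bk}$ under $\haar{1}^{\otimes m}$ to get $m!$. The only difference is presentational. You invoke Parseval after noting that the $m!$ exponent vectors are distinct, while the paper expands $|\Delta(\bz)|^2$ as a double sum over $(\pi,\sigma)$ and observes that the terms with $\pi\neq\sigma$ vanish.
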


\begin{proof}
Consider the Vandermonde determinant
\[
\Delta(\bz)
=
\det
\begin{pmatrix}
z_1^{m-1} & z_1^{m-2} & \cdots & z_1 & 1 \\
z_2^{m-1} & z_2^{m-2} & \cdots & z_2 & 1 \\
\vdots   & \vdots   &        & \vdots & \vdots \\
z_m^{m-1} & z_m^{m-2} & \cdots & z_m & 1
\end{pmatrix}
=
\prod_{i<j}(z_i-z_j).
\]
Expanding the determinant over permutations gives
% \[
% \Delta(\bz)
% =
% \sum_{\pi\in S_m}
% \operatorname{sgn}(\pi)
% \prod_{j=1}^m z_j^{\,m-\pi(j)}.
% \]
% Thus, 
\begin{align*}
|\Delta(\bz)|^2
&=
\sum_{\pi,\sigma\in S_m}
\operatorname{sgn}(\pi)\operatorname{sgn}(\sigma)
\prod_{j=1}^m
z_j^{\,m-\pi(j)}
\overline{z_j^{\,m-\sigma(j)}} =
\sum_{\pi,\sigma\in S_m}
\operatorname{sgn}(\pi)\operatorname{sgn}(\sigma)
\prod_{j=1}^m z_j^{\,\sigma(j)-\pi(j)}.
\end{align*}

\noindent Taking expectation over $\bz$, all the terms where $\sigma \neq \pi$ vanish. Thus, 
\begin{align*}
\E_{\bz \sim \haar{1}^{\otimes m}}|\Delta(z)|^2  
&=
\sum_{\pi,\sigma\in S_m}
\operatorname{sgn}(\pi)\operatorname{sgn}(\sigma)
\mathbf{1}[\pi=\sigma] = m!\qedhere
\end{align*}
\end{proof}

A function $f: \U(m) \to \C$ is called a \emph{class function} for the unitary group $\U(m)$ if $f(V) = f(SVS^*)$ for any unitary $S$. The Weyl integration formula gives a way to compute the Haar average of a class function by expressing it in terms of an average over the Weyl measure.

\begin{theorem}[Weyl Integration Formula \cite{Z24}, Proposition B.12]\label{lem:weyl-integration} For any bounded measurable class function $f: \U(m) \to \C$, we have
     $$\E_{V \sim \haar{m}}[f(V)] = \E_{\bz \sim \Weyl}[f(\mathrm{Diag}(\bz))].$$
\end{theorem}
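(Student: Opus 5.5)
I would prove the formula by identifying both sides as the same linear functional on class functions. I would first check it on a spanning family of polynomial class functions, the irreducible characters of $\U(m)$. Then I would extend to all bounded measurable class functions by a density and monotone-class argument. This avoids the Jacobian computation for the conjugation map $\U(m)/\U(1)^m\times \U(1)^m\to\U(m)$, which is the other standard route.

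\textbf{Step 1 (reduction to eigenvalues).} By the spectral theorem every $V\in\U(m)$ can be written $V=S\,\mathrm{Diag}(\bz)S^*$ with $S$ unitary, so a class function satisfies $f(V)=f(\mathrm{Diag}(\bz))$. Hence $f$ is determined by the symmetric function $F(\bz):=f(\mathrm{Diag}(\bz))$ on $\U(1)^m$. Let $\nu$ be the law of the eigenvalue multiset of a Haar random $V$, i.e.\ the pushforward of $\haar{m}$ to $\U(1)^m/S_m$. Let $\nu'$ be the pushforward of $\Weyl$ to the same quotient, which is legitimate since the density in \cref{eqn:weyl} is symmetric. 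The claim is equivalent to $\nu=\nu'$.

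\textbf{Step 2 (characters).} Consider the Schur-type functions $s_{\lambda}(\bz)=a_{\lambda+\delta}(\bz)/a_\delta(\bz)$, where $a_\mu(\bz)=\det(z_j^{\mu_k})$, $\delta=(m-1,\dots,1,0)$, and $\lambda_1\ge\dots\ge\lambda_m$ ranges over $\Z^m$; negative entries are allowed by multiplying by powers of $\det$. These are the irreducible characters $\chi_\lambda$ of $\U(m)$, and I would take this fact as the representation-theoretic input. On the Haar side, Schur orthogonality gives $\E_{V\sim\haar{m}}[\chi_\lambda(V)]=\mathbf 1[\lambda=0]$. On the Weyl side, $a_\delta=\Delta$ is the Vandermonde determinant, so
\[
\E_{\Weyl}[s_\lambda]=\frac{1}{m!}\,\E_{\haar{1}^{\otimes m}}\bigl[a_{\lambda+\delta}\,\overline{a_\delta}\bigr].
\]
Expanding both determinants over permutations exactly as in the proof of \cref{lem:weyl-normalization}, only pairs of monomials with equal exponent vectors survive. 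This forces $\lambda+\delta=\delta$ and leaves $m!$ matched terms, so the expectation is $\mathbf 1[\lambda=0]$. Thus the two sides agree on every $\chi_\lambda$.

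\textbf{Step 3 (density and extension).} The $s_\lambda$ span the symmetric Laurent polynomials on $\U(1)^m$, because the determinant ratios $a_{\lambda+\delta}/a_\delta$ form a basis of them. By Stone--Weierstrass these are uniformly dense in $C(\U(1)^m/S_m)$: they form a conjugation-closed algebra containing constants and separating multisets, the latter via the elementary symmetric functions. Both $\nu$ and $\nu'$ are probability measures, so they agree on all continuous functions, and Riesz representation gives $\nu=\nu'$. Every bounded measurable class function is a bounded measurable function of the eigenvalue multiset, so the formula follows.

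The step I expect to be the main obstacle is Step 1's measurability bookkeeping: the quotient map $V\mapsto$ eigenvalue multiset must be continuous, so that $\nu$ and $\nu'$ are Borel measures on the same compact space. The identification of $s_\lambda$ with the irreducible characters in Step 2 is also a nontrivial input. I would cite the Weyl character formula for it rather than reprove it.
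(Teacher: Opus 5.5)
Your proof is correct, and it takes a genuinely different route from the paper. The paper does not prove this statement at all; it quotes it from the literature. The usual textbook proof is the Jacobian computation for the conjugation map $\U(m)/\U(1)^m\times\U(1)^m\to\U(m)$, which you deliberately avoid.

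What your route buys is that the only computation is the permutation expansion already carried out in \cref{lem:weyl-normalization}. That lemma is exactly your Step 2 at $\lambda=0$, and strict monotonicity of $\lambda+\delta$ and $\delta$ is what kills every $\lambda\neq0$. Density is then handled cleanly by Stone--Weierstrass on the torus side.

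What it costs is that the real content is outsourced to representation theory. The Jacobian route, by contrast, is self-contained, needs no knowledge of the characters, and works uniformly for any compact connected Lie group and its maximal torus.

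You should be careful about which proof of the character formula you cite. Weyl's analytic proof, the one in most compact Lie group texts, itself uses the integration formula: it shows that $\chi\cdot a_\delta$ has unit $L^2$ norm on the torus and is therefore a single alternant. Citing that proof would make your argument circular.

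Cite an algebraic proof instead, e.g.\ Schur--Weyl duality together with the Frobenius formula. This shows that the Schur module $\mathbb{S}_\lambda(\C^m)$ is an irreducible polynomial $\mathrm{GL}_m(\C)$-representation with character $s_\lambda$. Irreducibility descends to $\U(m)$ because the complexification of $\mathfrak{u}(m)$ is $\mathfrak{gl}_m(\C)$, and negative $\lambda$ are handled by twisting with $\det^{-N}$.

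You only need each $s_\lambda$ with $\lambda\neq0$ to be the character of some nontrivial irreducible representation, so that its Haar average vanishes. You never need these to exhaust the irreducibles, since spanning is established on the torus side.

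The measurability point you flag is routine. Beyond continuity of roots, the step you assert without justification is that a bounded Borel class function factors through a Borel function on $\U(1)^m/S_m$. This follows by composing $\bz\mapsto f(\mathrm{Diag}(\bz))$ with a Borel section of the quotient map, for instance sorting the eigenvalues by argument in $[0,2\pi)$. Also, symmetry of the Weyl density is not what makes the pushforward $\nu'$ legitimate, since any measure can be pushed forward along a measurable map.
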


%% file: anticoncentration.tex
\section{Carbery-Wright Inequality over the Unitary Group}

 For a function $f: \U(m) \to \C$, we write $$\|f\|_2 = \|f\|_{L^2(\haar{m})}= \left(\E[|f(U)|^2]\right)^{1/2}$$ and write $\|f\|_{\infty} = \sup_{U \in \U(m)} |f(U)|$ below. We prove the following Carbery-Wright type anticoncentration inequality over the unitary group. 

\anticoncentration*

The proof proceeds in three steps. First, we reduce the polynomial anticoncentration inequality on $\U(m)$ to proving a similar anticoncentration type inequality on the polytorus $\U(1)^m$. This inequality concerns \emph{Laurent polynomials} defined below but under the $\Weyl$ probability measure over $\U(1)^m$. 

\paragraph{Anticoncentration over the Polytorus.}

Let $\bz = (z_1,\ldots, z_m) \in \U(1)^m$ and $\bk \in \Z^m$, let us write $\bz^{\bk} = z_1^{k_1}\cdots z_m^{k_m}$. A function $f: \U(1)^m \to \C$ is called a Laurent polynomial if 
\[ f(\bz) = \sum_{\bk \in \Z^m} \fhat(k) \bz^{\bk}, \]
where $\fhat(k) \in \C$ and the number of non-zero terms is finite. The degree of a Laurent monomial $z^k$ is defined as $|k_1| + \ldots + |k_m|$ and the total degree of $f$ is defined to be the largest degree of a non-zero Laurent monomial. Note that $k_i$ may take negative values as well. We denote $\|f\|_{\infty} = \sup_{z \in \U(1)^m} |f(z)|$. We prove the following anticoncentration statement for Laurent polynomials with respect to the Weyl measure.

\begin{theorem}
\label{thm:polytorus-smallball}
Let $q: \U(1)^m \to \C$ be a non-zero Laurent polynomial of total degree at most $d$ such that $\|q\|_{\infty} = q(\bm{1})$. Then, for every $\eps\in(0,1)$, the following holds
\[\Pr_{\bz\sim \Weyl}\bigl[\,|q(\bz)| \le \eps\|q\|_{\infty}\bigr] \le 64m^2 \cdot \eps^{\frac{1}{24\pi d}}.\]
\end{theorem}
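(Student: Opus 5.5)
The plan is to reduce \cref{thm:polytorus-smallball} to a one-dimensional Remez-type inequality along geodesics through the maximizer $\bm 1$, following the overview in \cref{sec:anticoncentration-overview}.

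\textbf{Step 1: radial decomposition.} Write $\bz=(e^{i\theta_1},\ldots,e^{i\theta_m})$ with $\boldsymbol\theta\in[-\pi,\pi)^m$. For $t\in[0,1]$ consider the curve $\bz(t)=(e^{it\theta_1},\ldots,e^{it\theta_m})$. It connects $\bm 1$ (at $t=0$) to $\bz$ (at $t=1$). Along this curve
\[
h_{\boldsymbol\theta}(t):=q(\bz(t))=\sum_{\bk}\qhat(\bk)\,e^{i t\langle \bk,\boldsymbol\theta\rangle}.
\]
This is an exponential sum with frequencies $\lambda=\langle\bk,\boldsymbol\theta\rangle\in[-\pi d,\pi d]$, since $|\langle \bk,\boldsymbol\theta\rangle|\le \pi\|\bk\|_1\le \pi d$. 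It also satisfies $\|h_{\boldsymbol\theta}\|_\infty\le\|q\|_\infty=h_{\boldsymbol\theta}(0)$.

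Let $E=\{\bz:|q(\bz)|\le\eps\|q\|_\infty\}$. I would compare $\Weyl(E)$ with an averaged quantity of the form $\int_0^1\Weyl(\{\bz: \bz(t)\in E\})\,dt$ up to density corrections. The natural route is to take $\boldsymbol\theta$ distributed according to the Weyl density in angle coordinates. Then the pushforward of $\boldsymbol\theta\mapsto t\boldsymbol\theta$ has density on the cube $t^{-m}\rho(\boldsymbol\phi/t)$ supported on $[-t\pi,t\pi]^m$, where $\rho(\boldsymbol\theta)\propto\prod_{j<k}|2\sin((\theta_j-\theta_k)/2)|^2$.

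\textbf{Step 2: compare the measures.} I would need $\Weyl(E)\lesssim m^2\int_{1/2}^1 \Pr_{\boldsymbol\theta}[\bz_{\boldsymbol\theta}(t)\in E]\,dt$, or a variant of this. The key inequality is a pointwise density comparison. Since $|\sin(tx)|\ge t|\sin x|\cdot c$ for $|x|\le\pi$ fails near $x=\pi$, I would instead restrict to $t\in[1/2,1]$. There the pushforward density at $\boldsymbol\phi$ should be compared against $\rho(\boldsymbol\phi)$ using $\sin(x/2)\le$ a constant times $\sin(x/(2t))\cdot$ a power of $t$.

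The factor $m^2$ should arise because the Vandermonde has $\binom m2$ factors. I would rather avoid a multiplicative $2^{m^2}$ loss by a time change. The idea is to bound the density ratio only on average, for instance via $\int$ over $t$ with weight $t^{m^2-1}$, which concentrates $t$ near $1$ within a window of width $\sim 1/m^2$. This is where the $m^2$ prefactor would come from: the set $A_{\boldsymbol\theta}=\{t: h_{\boldsymbol\theta}(t)\in\eps\text{-ball}\}$ gets tested against a weight of height $O(m^2)$. So
\[
\Weyl(E)\le O(m^2)\,\E_{\boldsymbol\theta}\,|A_{\boldsymbol\theta}|.
\]

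\textbf{Step 3: Remez/Turán for exponential sums.} By Nazarov--Sodin--Volberg (the Turán lemma for exponential sums), for $h$ with spectrum in $[-\pi d,\pi d]$, interval $I=[0,1]$, and measurable $A\subseteq I$, we have
\[
\sup_I|h|\le (C|I|/|A|)^{O(\pi d)}\sup_A|h|.
\]
Here I would take this in the form with dimension-free exponent, proved via Bernstein/Jensen bounds for entire functions of exponential type $\pi d$. Applying it with $\sup_I|h|= h(0)$ and $\sup_A|h|\le\eps h(0)$ gives $|A_{\boldsymbol\theta}|\le C\eps^{1/(c\pi d)}$. Tracking constants should yield $\eps^{1/(24\pi d)}$ and $64m^2$.

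The main obstacle is Step 2. Changing variables along radial curves distorts the Weyl measure by a Jacobian that is exponential in $m^2$ unless it is handled carefully, and getting only a polynomial $m^2$ loss without damaging the exponent is the delicate point. I would try the $t^{m^2}$-weighted radial averaging first, and fall back to a per-pair comparison of sine factors if that fails.
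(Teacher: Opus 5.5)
Your overall strategy matches the paper: restrict $q$ to curves through the maximizer $\bm 1$ to get exponential sums with spectrum in $[-\pi d,\pi d]$, then apply a bandwidth Remez inequality. However, the way you set up the curves and the measure comparison in Step 2 has a genuine gap, and it is structural rather than a matter of constants.

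Your curve $t\mapsto e^{it\bm\theta}$, $t\in[0,1]$, is the segment from $\bm 1$ ending at $\bz$ itself, and you take $\bm\theta$ Weyl-distributed. With these choices the target inequality $\Weyl(E)\le O(m^2)\,\E_{\bm\theta}|A_{\bm\theta}|$ is false. Already for $m=1$, where $\Weyl$ is uniform, take $q(z)=(1+z)/2$, so that $\|q\|_\infty=q(1)=1$. Then $E=\{|\arg z|\ge\pi-2\arcsin\eps\}$ has $\Weyl(E)\asymp\eps$. But the segment $\{e^{it\theta}:t\in[0,1]\}$ meets $E$ only if $|\theta|\ge\pi-2\arcsin\eps$, and then only for $t\in[1-O(\eps),1]$, so $\E_\theta|A_\theta|=O(\eps^2)$.

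The reason is that the dilates of any fixed measure under $\bm\theta\mapsto t\bm\theta$ live in $[-t\pi,t\pi]^m$. So the $t$-averaged density vanishes at the boundary of $[-\pi,\pi]^m$ (for $m=1$ it equals $\frac{1}{2\pi}\log(\pi/|\phi|)$), where the Weyl density does not, and sublevel sets of $q$ can sit exactly there. Restricting to $t\in[1/2,1]$ or reweighting by $t^{m^2-1}$ does not touch this. Also, $\sin(tx)\ge t\sin x$ for $x\in[0,\pi]$, $t\in[0,1]$ does hold, by concavity; what fails is the reverse comparison your dilation scheme needs.

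The missing idea is to disintegrate $\Weyl$ exactly along full rays from $\bm 1$ to the boundary of the cube, instead of comparing it with pushforwards. Take $\bm V\sim\mathrm{Unif}[-1,1]^m$, so $(e^{i\pi V_j})_j\sim\haar{1}^{\otimes m}$. In $\ell_\infty$-polar coordinates, $\|\bm V\|_\infty$ has density $mt^{m-1}$ and is independent of the direction $\bm V/\|\bm V\|_\infty\in\partial[-1,1]^m$. With $\bz_{\bm V}(t)=(e^{i\pi tV_j/\|\bm V\|_\infty})_j$ this gives the exact identity
\[
\Weyl(E)=\E_{\bm V}\int_0^1\bm 1_E(\bz_{\bm V}(t))\,\psi_{\bm V}(t)\,dt,\qquad \psi_{\bm V}(t)=\frac{t^{m-1}}{(m-1)!}\prod_{i<j}4\sin^2(\beta_{ij}t),\quad |\beta_{ij}|\le\pi,
\]
with $\E_{\bm V}\int_0^1\psi_{\bm V}=1$.

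Now only flatness of each ray weight is needed. Since $\psi_{\bm V}$ is nondecreasing on $[0,1/2]$, it peaks at some $t_{\max}\ge1/2$. The concavity inequality then gives $\psi_{\bm V}(t)\ge(t/t_{\max})^{m^2-1}\psi_{\bm V}(t_{\max})$ for $t\le t_{\max}$, where the exponent is $m-1+2\binom{m}{2}$. Hence $\max\psi_{\bm V}\le 2m^2\int_0^1\psi_{\bm V}$, and
\[
\Weyl(E)\le 2m^2\,\E_{\bm V}\Bigl[|A_{\bm V}|\int_0^1\psi_{\bm V}\Bigr]\le 64m^2\eps^{1/(24\pi d)}.
\]
Equivalently, your intended bound $\Weyl(E)\le 2m^2\,\E_{\bm\theta\sim\Weyl}|A|$ does hold if $A$ is measured along the whole ray through $e^{i\bm\theta}$ out to $\|\cdot\|_\infty=\pi$, rather than along the segment ending at $e^{i\bm\theta}$.

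For Step 3, make sure you use a bandwidth (exponential-type) Remez inequality. Nazarov's Turán lemma in its usual form has exponent equal to the number of terms, which would bring back dimension dependence. The paper derives the needed version from \cite[Lemma B]{NSV} via the Phragm\'en--Lindel\"of bound $|h(x+iy)|\le e^{\tau|y|}h(0)$ and the substitution $x\mapsto 1-4x^2$. The hypothesis $\sup_{\R}|h|=h(0)$ is what keeps the exponent $O(\tau)$.
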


To prove the above inequality, we use the idea of interpolation. Let $t \in [0,1]$ be a parameter that we refer to as time. We show that by considering the time parameter, one can reduce the above anticoncentration inequality over $\U(1)^m$ to a statement about extremal values of certain types of complex univariate functions of $t$. In particular, we consider functions of the form 
\[ h(t) = \sum_{\lambda \in \Lambda} c_{\lambda} e^{i\lambda t}\]
where $\Lambda$ is a finite set in the bounded interval $[-\pi d,\pi d]$, where the interval is determined by  the degree $d$ of the Laurent polynomial.  Furthermore, the other assumption on $q$ will imply that the function $h$ attains its maximal value at $t=0$. Our goal will be reduced to bounding the Lebesgue measure of a set $A \subseteq [0,1]$ where $h$ takes much smaller values compared to $h(0)$.

\paragraph{The Univariate Extremal Statement.} For a measurable set $A \subseteq \R$, we write $|A|$ to denote the Lebesgue measure of $A$. We prove the following statement about functions of the above form.
\begin{theorem}\label{thm:remez}
Let $h: \R \to \C$ be a non-zero function of the form $$
h(t) = \sum_{\lambda \in \Lambda} c_{\lambda} e^{i\lambda t}$$ where $\Lambda$ is a finite set in some bounded interval $[-\tau,\tau]$. Suppose $\sup_{t \in \R} |h(t)| = h(0)$. Then, for every measurable $A \subseteq [-1,1]$ with non-zero measure $|A|$, we have 
\[ h(0) \le \left(\frac{32}{|A|}\right)^{24\tau} \cdot  \sup_{x \in A} |h(x)|.\]
\end{theorem}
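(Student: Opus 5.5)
The plan is to treat $h$ as an entire function of exponential type and run the standard ``Jensen + Cartan + Harnack'' argument behind the Remez-type inequalities of Nazarov, Sodin and Volberg~\cite{NSV}. The bandwidth $\tau$ plays the role of the Bernstein (doubling) index.

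\textbf{Step 1: growth in the complex plane.} Since $h(z)=\sum_{\lambda\in\Lambda}c_\lambda e^{i\lambda z}$ extends to an entire function, I would first show
\[
|h(x+iy)|\le h(0)\,e^{\tau|y|}.
\]
This follows from a Phragm\'en--Lindel\"of argument on the half-planes, applied to $h(z)e^{\pm i\tau z}$. That function is bounded by $h(0)$ on $\R$ (by the assumption $\sup_{\R}|h|=h(0)$) and has at most polynomial-exponential growth. Consequently $\max_{|z|\le R}|h(z)|\le h(0)e^{\tau R}$ for every $R$, while $|h(0)|=h(0)$ is the maximum on the real line. Jensen's formula on the disc $D(0,4)$ then bounds the number $n$ of zeros of $h$ in $D(0,2)$ by
\[
n\le \frac{\log\bigl(h(0)e^{4\tau}/h(0)\bigr)}{\log 2}=\frac{4\tau}{\log 2}=O(\tau).
\]

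\textbf{Step 2: factorization.} Write $h=P\cdot g$, where $P(z)=\prod_{j=1}^n(z-a_j)$ runs over the zeros $a_j\in D(0,2)$ and $g$ is zero-free on $D(0,2)$ (after a slight shrinking, zero-free on a disc containing $\overline{D(0,3)}$ minus nothing problematic). Equivalently, I could apply the argument on $D(0,3)$ directly, counting zeros there via Jensen on $D(0,6)$.

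\emph{Bounds on $g$.} On $|z|=3$ every factor satisfies $|z-a_j|\ge1$, so $|g|\le h(0)e^{3\tau}=:M$ there, and hence on the whole disc by the maximum principle. At the origin, $|P(0)|\le 2^n$, so $|g(0)|\ge h(0)2^{-n}$.

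\emph{Harnack.} The function $u=\log M-\log|g|$ is a nonnegative harmonic function on $D(0,3)$. Harnack's inequality at $x\in[-1,1]$ gives $u(x)\le \frac{3+1}{3-1}u(0)=2u(0)$. Since $u(0)\le 3\tau+n\log 2$, this yields
\[
|g(x)|\ge |g(0)|\,e^{-O(\tau)}\qquad\text{for all }x\in[-1,1].
\]

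\textbf{Step 3: the polynomial part.} For the monic degree-$n$ polynomial $P$, I would use Cartan's lemma, or equivalently the classical polynomial Remez inequality. It says that the set $\{x\in\R:|P(x)|\le(|A|/4)^n\}$ has Lebesgue measure less than $|A|$. Hence there is a point $x\in A$ with $|P(x)|\ge(|A|/4)^n$.

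\textbf{Combining.} At that point $x$,
\[
h(0)\le 2^n|g(0)|\le 2^n e^{O(\tau)}|g(x)|=\frac{2^n e^{O(\tau)}}{|P(x)|}\,|h(x)|\le \Bigl(\frac{8}{|A|}\Bigr)^{n} e^{O(\tau)}\,|h(x)|.
\]
Since $n=O(\tau)$ and $|A|\le 2$, the factor $e^{O(\tau)}$ can be absorbed into a power of $1/|A|$. This gives the form $(32/|A|)^{24\tau}$ after tracking the constants.

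\textbf{Main obstacle.} I expect Step 1 to be the delicate point: rigorously justifying the exponential-type bound from the hypothesis that the maximum on $\R$ is attained at $0$. Phragm\'en--Lindel\"of needs care because $h$ is only almost periodic. A clean alternative I would try first is to use the Fourier representation directly: $|h(x+iy)|\le\sum|c_\lambda|e^{\tau|y|}$ does not suffice, so instead I would apply Phragm\'en--Lindel\"of in a strip and then in a sector. Beyond that, the remaining difficulty is bookkeeping to land exactly on the constants $32$ and $24$, which may require choosing the radii ($2,3,4$ above) slightly differently.
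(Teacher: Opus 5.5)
Your argument is correct after a small repair to the radii in Step 2, noted below, but it takes a different route from the paper.

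\textbf{Common start.} Both proofs begin the same way, with the Phragm\'en--Lindel\"of bound $|h(x+iy)|\le h(0)e^{\tau|y|}$ obtained from $e^{\pm i\tau z}h(z)$. This step is easier than you fear. The function $e^{i\tau z}h(z)$ is a finite sum of terms $e^{i\mu z}$ with $\mu\ge 0$, so it is bounded on the closed upper half-plane. The bounded-function form of Phragm\'en--Lindel\"of then applies directly, with no strip or sector argument.

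\textbf{The paper's route.} From there the paper uses Lemma B of \cite{NSV} as a black box. To meet its hypotheses (a function bounded by $1$ on the unit disc, $|f(-a)|=|f(a)|=b>0$, and $E\subseteq I\subseteq[-a,a]$), it does three things. It folds $A$ into $A_+=A\cap[0,1]$ with $|A_+|\ge|A|/2$. It pulls $A_+$ back under $q(z)=1-4z^2$ to a set $E\subseteq[0,1/2]$ with $|E|\ge|A|/8$. It sets $f(z)=e^{-4\tau}h(1-4z^2)/h(0)$, so that $f(\pm 1/2)=e^{-4\tau}$. With $a=1/2$ this gives $b=e^{-4\tau}$ and $\sigma=\frac{3}{1-a}\log\frac1b=24\tau$, which is where the constants $32$ and $24$ come from.

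\textbf{Your route.} You instead reprove the needed Remez inequality directly, using the scheme that underlies \cite{NSV}. Jensen's formula centered at the maximum point $0$ bounds the number of zeros by $O(\tau)$. Harnack's inequality controls the zero-free factor. For the polynomial part, the precise tool is P\'olya's lemma: $|\{x\in\R:|P(x)|\le s\}|\le 4(s/2)^{1/n}$ for monic $P$. It holds for complex roots because $|x-a_j|\ge|x-\operatorname{Re}a_j|$, and with $s=(|A|/4)^n$ the sublevel set has measure at most $2^{-1/n}|A|<|A|$.

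\textbf{Comparison.} Your route is self-contained. It needs no folding or change of variables, because the normalization $h(0)=\sup_\R|h|$ is used directly as the Jensen center. It also gives a better bound, roughly $e^{18\tau}(4/|A|)^{6\tau}$, which implies the stated $(32/|A|)^{24\tau}$ since $|A|\le 2$. The paper's route is shorter once the lemma is granted.

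\textbf{The repair.} As written, Step 2 is inconsistent. If $P$ removes only the zeros in $D(0,2)$, then $\log|g|$ need not be harmonic on $D(0,3)$, so Harnack cannot be applied there. If you instead remove all zeros in $D(0,3)$, the estimate $|z-a_j|\ge 1$ on $|z|=3$ fails. The fix is as follows:
\begin{itemize}
\item Keep the zeros in $D(0,2)$, counted by Jensen on $D(0,4)$, so that $n\log 2\le 4\tau$.
\item Bound $|g|\le h(0)e^{3\tau}$ on $|z|=3$, and hence on $D(0,2)$ by the maximum principle.
\item Apply Harnack on $D(0,2)$, where $g$ is zero-free, with constant $(2+1)/(2-1)=3$.
\end{itemize}
With $u(0)\le 3\tau+n\log 2\le 7\tau$, this gives $|g(0)|\le e^{2u(0)}|g(x)|\le e^{14\tau}|g(x)|$ for $x\in[-1,1]$. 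The rest of your combination then goes through unchanged.
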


Note that if $A$ is the set where $|h(x)| \le \eps |h(0)|$, then rearranging the above, we obtain that $|A| \le 32 \varepsilon^{1/(24\tau)}.$
  One can understand the above statement by relating it to the extremal properties of Chebyshev polynomials (see, e.g.,~\cite{sachdeva2013approximation}). For intuition, first consider the case where $h$ is a real-valued polynomial of degree $d$ that attains its maximum at $0$ and $A$ is an interval in $[-1,1]$. Then, 
 \[ h(0) \le \left(\frac{32}{|A|}\right)^{O(d)} \sup_{x \in A} |h(x)|,\]
 since the largest growth of any polynomial that is bounded on the interval $A$ is attained by the Chbeyshev polynomial of degree $d$ rescaled on the interval $A$. For us, there are two main differences. First, $A$ is an arbitrary measurable set instead of an interval. It turns out that the extremal properties of Chebyshev polynomials extend to this case and such inequalities are referred to as Remez inequalities in the literature and proven for various families of functions with the role of degree replaced by some other property. Second, for us $h$ is a sum of exponential functions and we want $\tau$ to be a proxy for the degree. While such an inequality is not known in the literature, we can derive the required statement from a short proof using a theorem of Nazarov, Sodin and Volberg~\cite{NSV} (see \cref{sec:remez}).

\begin{proof}[Proof of \cref{thm:unitary-smallball}]
   The proof proceeds in three steps. In \cref{sec:torus-reduction},
we prove \cref{thm:unitary-smallball} assuming
\cref{thm:polytorus-smallball}. In \cref{sec:remez-reduction},
we prove \cref{thm:polytorus-smallball} assuming
\cref{thm:remez}. Finally, we prove \cref{thm:remez}
in \cref{sec:remez}.
\end{proof}

\subsection{Reduction to the Polytorus}\label{sec:torus-reduction}

We first show that \cref{thm:polytorus-smallball} implies \cref{thm:unitary-smallball}. Let $U_0 \in \U(m)$ be a unitary where the polynomial $P$ takes its maximum modulus. First, we claim that
\begin{align}\label{eqn:P}
\Pr_{U \sim \haar{m}}
\bigl[\,|P(U)| \le \eps \|P\|_{\infty} \bigr]
&=
\Pr_{\substack{W \sim \haar{m}\\ \bz \sim \Weyl}}
\bigl[\,|P(U_0W\mathrm{Diag}(\bm{z})W^*)|
       \le \eps \|P\|_{\infty} \bigr],
\end{align}
where $W$ and $\bz$ are independent. To see the above, we define the function $f: \U(m) \to \mathbb{R}$ as 
\[ f(V) = \E_{W \sim \haar{m}}\bm{1}\bigl[\,|P(U_0WVW^*)| \le \eps \|P\|_{\infty} \bigr].\]
The function $f$ is a bounded class function, i.e. $f(V) = f(SVS^*)$ for any unitary $S$ because of the invariance of the Haar measure. Thus, the Weyl integration formula (\cref{lem:weyl-integration}) implies that 
\begin{align*}
     \E_{V \sim \haar{m}}[f(V)] &= \E_{\bz \sim \Weyl}[f(\mathrm{Diag}(\bz))]\\
                                &= \Pr_{\substack{W \sim \haar{m}\\ \bz \sim \Weyl}}
\bigl[\,|P(U_0W\mathrm{Diag}(\bm{z})W^*)|
       \le \eps \|P\|_{\infty} \bigr],
\end{align*}
proving \cref{eqn:P}. Based on \cref{eqn:P}, for a unitary $W \in \U(m)$, we define a function $q_W: \U(1)^m \to \C$  on the polytorus as follows $q_W(\bm{z}) = p(U_0 W \mathrm{Diag}(\bm{z}) W^*)$. To prove \cref{thm:unitary-smallball},  it suffices to bound $\Pr\bigl[\,|q_W(\bz)| \le \eps \|q_W\|_{\infty} \bigr]$ where $W \sim \haar{m}$ and $\bz \sim \Weyl$. Note that for any unitary $W$, the following two conditions hold: the function $q_W$ is a non-zero Laurent polynomial on the polytorus with total degree $d$ and $q_W(\bm{z})$ attains its maximum value at $\bm{z}=\bm{1}$. Thus, we can apply \cref{thm:polytorus-smallball} to $q_W$ obtaining
\[ \Pr_{\bz\sim \Weyl}\bigl[\,|q_W(\bz)| \le \eps\|q_W\|_{\infty}\bigr] \le 64m^2 \cdot \eps^{\frac{1}{24\pi d}}.\]
Averaging over $W \sim \haar{m}$ gives the statement of \cref{thm:unitary-smallball}.

\subsection{Reduction to Univariate Extremal Inequalities} \label{sec:remez-reduction}

Next, we show that \cref{thm:polytorus-smallball} is implied by  \cref{thm:remez}. Throughout, this proof $\bm{z}$ is distributed according to $\Weyl$ unless explicitly stated otherwise. We start by recalling that $q: \U(1)^m \to \C$ is a non-zero Laurent polynomial of total degree $d$ and $\|q\|_\infty = q(\bm{1})$. Let $\bm{1}_E(\bz)$ denote the indicator function of the event $|q(\bm{z})| \le \varepsilon |q(\bm{1})|$. We want to bound $\E[\bm{1}_E(\bm{z})]$. To do this, we introduce an interpolation parameter $t \in [0,1]$ and define the univariate function $f(t) = \E\left[\bm{1}_E(\bm{z}) \cdot \bm{1}\left[\max_j |\mathrm{arg}(z_j)| \le \pi t\right]\right]$, noting that $f(0)=0$ and $f(1)=\E[\bm{1}_E(\bm{z})]$.

By definition, $\bm{z} \sim \Weyl$ can be generated by first sampling a $\bz$ from the product measure $\haar{1}^{\otimes m}$ and then reweighting it according to \cref{eqn:weyl}. With this in mind, for $\bm{V} \in [-1,1]^m$, define 
\[ \bm{z}_{\bm{V}}(t)=(e^{i\pi t V_1/\|\bm{V}\|_{\infty}},\ldots,e^{i\pi t V_m/\|\bm{V}\|_{\infty}}).\]
Let us use the shorthand $\bm{z}_{V_i}(t) = e^{i\pi t V_i/\|\bm{V}\|_{\infty}}$ to denote the $i$th coordinate above. 

If $\mathrm{Unif}[-1,1]^m$ denotes the uniform probability measure on the cube $[-1,1]^m$, then one can check that the random variable $V/\|V\|_{\infty}$ is independent of $\|V\|_{\infty}$. Thus, $\bm{z}_{\bm{V}}(\|V\|_{\infty})= (e^{i\pi V_1},\ldots,e^{i\pi V_m})$ is distributed as the product measure $\haar{1}^{\otimes m}$. We make the following technical claim with $t$ denoting the value taken by $\|V\|_{\infty}$.

\begin{claim}
\label{claim:weights}
There exists non-negative weights $\psi_{\bm{V}}:[0,1]\to\R_{\ge 0}$ satisfying 
\begin{enumerate}[leftmargin=*,itemsep=1pt,topsep=2pt]
    \item $\max_{t \in [0,1]} \psi_{\bm{V}}(t) \le 2m^2 \int_0^1 \psi_{\bm{V}}(s)ds,$
    \item $\E_{\bm{V} \sim \mathrm{Unif}[-1,1]^m}\left[\int_0^1\psi_{\bm{V}}(t)dt\right]=1$,
\end{enumerate}
such that the derivative of $f$ with respect to $t$ is given by
\begin{equation}\label{eqn:fprime}
    \ f'(t) = \E_{\bm{V} \sim \mathrm{Unif}[-1,1]^m}\left[\bm{1}_E(\bm{z}_{\bm{V}}(t)) \cdot \psi_{\bm{V}}(t)\right].
\end{equation}
\end{claim}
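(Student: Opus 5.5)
The plan is to get $f'$ from the formula for $f$ written in ``cube-polar'' coordinates. The weights will be the Weyl density along a ray, times the radial Jacobian $m t^{m-1}$.

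\textbf{Computing $f'$.} By \cref{eqn:weyl}, for any test function $g$ we have $\E_{\bz\sim\Weyl}[g(\bz)]=\E_{\bm V\sim\mathrm{Unif}[-1,1]^m}[g(\bz(\bm V))\,D(\bz(\bm V))]$. Here $\bz(\bm V)=(e^{i\pi V_1},\dots,e^{i\pi V_m})$ and $D(\bz)=\frac1{m!}\prod_{j<k}|z_j-z_k|^2$. Also $\max_j|\arg z_j|\le\pi t$ iff $\|\bm V\|_\infty\le t$.

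Write $\bm V=s\bm W$ with $s=\|\bm V\|_\infty$ and $\bm W=\bm V/\|\bm V\|_\infty$. Under the uniform measure these are independent, and $s$ has density $m s^{m-1}$ on $[0,1]$. The reason is that the shell $\{\|\bm V\|_\infty\le s\}$ has volume proportional to $s^m$, and dilation preserves the law of $\bm W$. Since $\bz(\bm V)=\bz_{\bm V}(s)$ and $\bz_{\bm V}(\cdot)$ depends only on $\bm W$, we get
\[
f(t)=\E_{\bm V}\Bigl[\int_0^t \bm 1_E(\bz_{\bm V}(s))\, m s^{m-1} D(\bz_{\bm V}(s))\,ds\Bigr].
\]
Differentiating under the expectation (the integrand is bounded) gives \cref{eqn:fprime} with
\[
\psi_{\bm V}(t)=m t^{m-1}D(\bz_{\bm V}(t))=\frac{m}{m!}t^{m-1}\prod_{j<k}4\sin^2\!\Bigl(\tfrac{\pi t (V_j-V_k)}{2\|\bm V\|_\infty}\Bigr)\ \ge 0 .
\]
Property 2 follows by undoing the same change of variables with $E$ replaced by the whole space. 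The left side then equals $\E_{\bz\sim\haar{1}^{\otimes m}}[D(\bz)]$, which is $1$ by \cref{lem:weyl-normalization}.

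\textbf{Property 1 (the main obstacle).} Fix $\bm V$ and set $c_{jk}=\pi|V_j-V_k|/(2\|\bm V\|_\infty)\in[0,\pi]$. Then $\psi(t)\propto t^{m-1}\prod_{j<k}\sin^2(c_{jk}t)$, and each argument $c_{jk}t$ lies in $[0,\pi]$ for $t\in[0,1]$. Let $t^*$ maximize $\psi$ on $[0,1]$.

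\emph{Left of the maximum.} Since $\sin$ is concave on $[0,\pi]$ and vanishes at $0$, we have $\sin(c s)\ge (s/t^*)\sin(c t^*)$ for $s\le t^*$. Hence $\psi(s)\ge (s/t^*)^{N}\psi(t^*)$ with $N=(m-1)+m(m-1)=m^2-1$. Integrating over $[0,t^*]$ gives $\int_0^1\psi\ge t^*\psi(t^*)/m^2$. This settles the case $t^*\ge 1/2$ with the constant $2m^2$.

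\emph{Right of the maximum, when $t^*<1/2$.} Here a complementary lower bound on $[t^*,2t^*]$ or $[t^*,1]$ is needed, and I expect this to be the delicate step. I would try two things:
\begin{itemize}
\item Use log-concavity of $\psi$: it is a product of the log-concave functions $t^{m-1}$ and $\sin^2(c t)$ on the relevant range. Combined with the polynomial-type lower bound from the left, this should control how fast $\psi$ can decay to the right of $t^*$.
\item Alternatively, compare $\psi(s)$ with $\psi(t^*)$ through $\sin(cs)/(cs)$, which is decreasing on $[0,\pi]$. This would give a two-sided ``degree $N$'' control.
\end{itemize}
If neither yields exactly $2m^2$, I would accept a slightly worse constant at that step and adjust constants downstream.
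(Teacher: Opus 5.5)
Your derivation of $f'$, your choice $\psi_{\bm V}(t)=\frac{t^{m-1}}{(m-1)!}\prod_{j<k}4\sin^2(c_{jk}t)$, your proof of Property 2, and your left-of-the-maximum concavity bound $\int_0^1\psi_{\bm V}\ge t^*\psi_{\bm V}(t^*)/m^2$ all match the paper. The gap is Property 1 when $t^*<1/2$. You leave that case open, and neither tool you propose closes it. Log-concavity of $\psi_{\bm V}$ constrains it only from \emph{above} to the right of its mode: a concave $\log\psi_{\bm V}$ may drop arbitrarily steeply after $t^*$. The decrease of $\sin x/x$ on $(0,\pi]$ gives $\sin(c s)\le (s/t^*)\sin(c t^*)$ for $s\ge t^*$, which is again an upper bound. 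On $s\le t^*$ it only reproduces your left-side bound. So neither yields the lower bound on $\int_{t^*}^1\psi_{\bm V}$ you would need. Settling for a worse constant would also not prove the claim as stated.

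The missing observation is that the case $t^*<1/2$ never has to be handled. You already noted $c_{jk}\in[0,\pi]$, so for $t\in[0,1/2]$ every argument $c_{jk}t$ lies in $[0,\pi/2]$, where $\sin$ is non-negative and non-decreasing. Together with monotonicity of $t^{m-1}$, this makes $\psi_{\bm V}$ non-decreasing on $[0,1/2]$. Hence $\psi_{\bm V}(t)\le\psi_{\bm V}(1/2)$ for all $t\le 1/2$, and the maximum of $\psi_{\bm V}$ over $[0,1]$ is attained at some $t^*\in[1/2,1]$. Your own left-side bound then gives $\int_0^1\psi_{\bm V}\ge t^*\psi_{\bm V}(t^*)/m^2\ge \psi_{\bm V}(t^*)/(2m^2)$, which is Property 1 with exactly the constant $2m^2$. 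This is precisely how the paper concludes.
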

We prove the above claim later and first finish the proof of \cref{thm:polytorus-smallball}. 

For every $\bm{V} \in [-1,1]^m$, consider the function $h_{\bm{V}}:\R \to \C$ defined as $h_{\bm{V}}(t) = q(\bm{z}_{\bm{V}}(t)) = \sum_{\|\bm{k}\|_1\le d} \widehat{q}(k) e^{i\pi t \langle\bm{k},\bm{V}/\|\bm{V}\|_{\infty}\rangle}$. Since $|\langle\bm{k},\bm{V}/\|\bm{V}\|_{\infty}\rangle| \le d$ above, the function $h_{\bm{V}}$ is an exponential function of the form in \cref{thm:remez} with $\tau=\pi d$. Furthermore, $h_{\bm{V}}(0)=q(\bm{1})$ and $|h_{\bm{V}}(t)|\le |q(\bm{1})|$ for every $t \in \R$. This implies that $A_{\bm{V}}  = \{ t \subseteq [0,1] \mid \bm{z}_{\bm{V}}(t) \in E\} = \{ t \subseteq [0,1] \mid |h_{\bm{V}}(t)| \le \varepsilon |h_{\bm{V}}(0)|\}$ and therefore $\int_0^1 \bm{1}_E(\bm{z}_{\bm{V}}(t)) dt = |A_{\bm{V}}|$.

%\cref{thm:remez} implies that $|A_{\bm{V}}| \le 100\varepsilon^{1/40d}$. 
By \cref{thm:remez},
$$
    |h(0)|
    \le
    \left(\frac{32}{|A|}\right)^{24\tau}
    \sup_{t\in A}|h(t)|.
$$
Applying this to $h_{\bm V}$ and $A_{\bm V}$ gives
$$
    |h_{\bm V}(0)|
    \le
    \left(\frac{32}{|A_{\bm V}|}\right)^{24\tau}
    \varepsilon |h_{\bm V}(0)|.
\implies    1
    \le
    \left(\frac{32}{|A_{\bm V}|}\right)^{24\tau}\varepsilon.
$$
and this implies $    |A_{\bm V}|
    \le
    32\,\varepsilon^{1/(24\tau)}.$ 
In our application, the frequencies of $h_{\bm V}$ lie in
$[-\pi d,\pi d]$, so we may take $\tau=\pi d$ and conclude that $   |A_{\bm V}|
    \le
    32\,\varepsilon^{1/(24\pi d)}.$ Thus, using \cref{claim:weights} and the fundamental theorem of calculus, we have that 
\begin{align*}
    f(1) &=  \int_0^1 \E_{\bm{V} \sim \mathrm{Unif}[-1,1]^m}\left[\bm{1}_E(\bm{z}_{\bm{V}}(t)) \cdot \psi_{\bm{V}}(t)\right]dt\\
     &\le 2m^2  \E_{\bm{V} \sim \mathrm{Unif}[-1,1]^m}\left[ \int_0^1\bm{1}_E(\bm{z}_{\bm{V}}(t)) \cdot \left(\int_0^1\psi_{\bm{V}}(s)ds\right)dt\right],
\intertext{
where we replaced the non-negative weight $\psi_{\bm{V}}(t)$ by the upper bound from the claim and used Fubini's theorem to switch the expectation and the integral. Since the second integral does not depend on $t$, we have}
    f(1) &\le 2m^2  \E_{\bm{V} \sim \mathrm{Unif}[-1,1]^m}\left[ |A_{\bm{V}}|\cdot \left(\int_0^1\psi_{\bm{V}}(s)ds\right)\right]\\
     &\le 64m^2 \varepsilon^{1/(24\pi d)} \cdot \E_{\bm{V} \sim \mathrm{Unif}[-1,1]^m}\left[\int_0^1\psi_{\bm{V}}(s)ds\right] \\
     & = 64m^2 \varepsilon^{1/(24\pi d)},
\end{align*}
where the last equality used the second property of the weights $\psi_{\bm{V}}$ from the claim.

\begin{proof}[Proof of \cref{claim:weights}]
     For this claim, it will be convenient to prove the lemma assuming $E \subseteq \U(1)^m$ is an arbitrary Borel set. Even in this setting the identity $f(1)= \E_{\bm{z}\sim\Weyl}[\bm{1}_E(\bm{z})]$ established earlier continues to hold.
     
For $\bm{V} \sim \mathrm{Unif}[-1,1]^m$, the random variable $\|\bm{V}\|_{\infty}$ is $\mathrm{Beta}(m,1)$ distributed with the probability density function $mt^{m-1}$ at $t$ and is independent of $\bm{V}/\|\bm{V}\|_{\infty}$, so we have
\[ f(t) = \int_0^t\E_{\bm{V} \sim \mathrm{Unif}[-1,1]^m}\left[\bm{1}_E(\bm{z}_{\bm{V}}(s)) \cdot ms^{m-1} \cdot \frac{1}{m!}{\prod_{i<j} \left|{\bm{z}}_{V_i}(s) -{\bm{z}}_{V_j}(s)\right|^2}\right]ds.\]

We define $\psi_{\bm{V}}(t) = \frac{t^{m-1}}{(m-1)!} {\prod_{i<j} \left|{\bm{z}}_{V_i}(t) -{\bm{z}}_{V_j}(t)\right|^2}$ and the identity \eqref{eqn:fprime} in the statement of the claim follows by taking the derivative with respect to $t$. The normalization $\E_{V \sim \mathrm{Unif}[-1,1]^m}\left[\int_0^1\psi_{\bm{V}}(t)dt\right]=1$ follows by recalling that $f(1)= \E_{\bm{z}\sim\Weyl}[\bm{1}_E(\bm{z})]$ and taking $E=\U(1)^m$. 

It remains to establish the first property of the weights $\psi_{\bm{V}}$. Towards this, note that $|e^{ix}-e^{iy}|^2 = 4 \sin^2\left(\frac{x-y}{2}\right)$. 
Defining $\beta_{ij} = \pi(V_i-V_j)/(2\|\bm{V}\|_{\infty})$, we note that $\beta_{ij} \in [0,\pi]$. We have $\psi_{\bm{V}}(t) = \frac{t^{m-1}}{(m-1)!} \prod_{i < j} 4\sin^2(\beta_{ij} t)$. In the interval $t \in [0,1/2]$, both functions $t^{m-1}$ and $\sin(\beta_{ij}t)$ are non-decreasing, so the maximum value of $\psi_{\bm{V}}(t)$ is attained at $t_{\max} \in [1/2,1]$.  Concavity of the sine function in the interval $[0,\pi]$ implies that $\sin(\beta_{ij}t) \ge \frac{t}{t_{\max}} \sin(\beta_{ij} t_{\max})$.

Squaring and taking the product of all distinct pairs, we obtain 
\begin{align*}
    \psi_{\bm{V}}(t) &\ge \frac{t^{m-1}}{(m-1)!} \left(\frac{t}{t_{\max}}\right)^{2\binom{m}{2}} \prod_{i < j} 4\sin^2(\beta_{ij} t_{\max})\\
    &= \left(\frac{t}{t_{\max}}\right)^{m-1} \left(\frac{t}{t_{\max}}\right)^{2\binom{m}{2}} \cdot \frac{t_{\max}^{m-1}}{(m-1)!}\prod_{i < j} 4\sin^2(\beta_{ij} t_{\max})\\
     & =\left(\frac{t}{t_{\max}}\right)^{m^2-1}   \psi_{\bm{V}}(t_{\max}).
\intertext{
Integrating both sides}
    \int_0^1\psi_{\bm{V}}(t)dt &\ge \psi_{\bm{V}}(t_{\max}) \int_0^{t_{\max}}\left(\frac{t}{t_{\max}}\right)^{m^2-1}dt\\
    &= \psi_{\bm{V}}(t_{\max}) \cdot t_{\max} \int_0^{1}s^{m^2-1}ds\\
    &= \psi_{\bm{V}}(t_{\max}) \cdot t_{\max} \cdot \frac{1}{m^2}.
\end{align*}
Using that $t_{\max}\ge 1/2$ implies the first property. This completes the proof of the claim.\qedhere

\end{proof}

\subsection{Proof of the Univariate Extremal Inequality}\label{sec:remez}

Now, we prove \cref{thm:remez} which will complete the proof. To prove this theorem, we will need the following theorem by Nazarov, Sodin and Volberg~\cite{NSV}.
%\mnote{I am merging this section with the other notation}

\begin{lemma}[{\cite[Lemma B]{NSV}}]\label{lem:remez}
Let $f$ be analytic in $D=\{z\in\C:|z|<1\}$ and suppose for some $a\in (0,1)$ we have
\[
 \sup_{z\in D}|f(z)|\leq1,
 \qquad |f(-a)|=|f(a)|=:b>0.
\]
Then for every sub-interval $I\subset[-a,a]$ and every set $E\subseteq I$, we have
\[
 \max_{x\in I}|f(x)|
 \leq\left(\frac{8|I|}{|E|}\right)^{\sigma}
 \sup_{x\in E}|f(x)|,
\]
where $\sigma=3/(1-a)\log 1/b$
\end{lemma}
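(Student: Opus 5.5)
The plan is the classical route to Remez inequalities for analytic functions: split $f$ into a polynomial carrying its zeros near $[-a,a]$ and a zero-free part. The polynomial part is handled by the polynomial Remez inequality; the zero-free part is handled by Harnack's inequality for $\log|\cdot|$.

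\emph{Step 1 (counting zeros).} Set $r=(1+a)/2$. We bound the number $n$ of zeros of $f$ in the disk $|z|\le r$, counted with multiplicity. Apply Jensen's formula on a disk centred at $a$ (or at $-a$), with radius comparable to $1-a$ but large enough that it, or a bounded chain of such disks, captures the relevant part of $\{|z|\le r\}$. Since $\sup_D|f|\le1$ and $|f(\pm a)|=b$, Jensen gives at most $O(\log(1/b))$ zeros per such disk. Summing over a chain of $O(1/(1-a))$ disks gives $n\lesssim \log(1/b)/(1-a)$. The constant must come out so that $n$ plus the Harnack loss of Step 3 totals at most $\sigma=3\log(1/b)/(1-a)$. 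Using the conformal map of $D$ onto a strip or a disk centred at the real axis is probably the cleanest way to get this. Using both endpoints $\pm a$ is what lets the count cover all of $[-a,a]$.

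\emph{Step 2 (factorization).} Write $f=P\cdot g$, where $P(z)=\prod_{j=1}^n(z-z_j)$ over these zeros and $g$ is analytic and zero-free on $|z|<r$. For the polynomial $P$ of degree $n$, use the classical Remez inequality, which holds for complex zeros because $|P(x)|^2$ is a real polynomial of degree $2n$ on $\R$, or directly via Chebyshev extremality. It gives
\[
\max_{I}|P|\le\left(\frac{4|I|}{|E|}\right)^{n}\sup_{E}|P|.
\]

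\emph{Step 3 (the zero-free part).} The function $u=-\log|g|$ is harmonic on $|z|<r$. On $|z|=r$ we have $|P|\ge$ (distance)$^n$, so $\log|g|\le -\log|P|$ there, and the maximum principle yields an upper bound on $\log|g|$. At $z=a$ we have $\log|g(a)|=\log b-\log|P(a)|$, which gives a lower bound at one point. Apply Harnack's inequality to $M-\log|g|\ge0$ on the disk of radius $r$. Its Harnack constant on $[-a,a]$ is $O(1/(1-a))$, so $\max_I\log|g|-\min_I\log|g|\lesssim (\log(1/b)+n)/(1-a)$. Multiplying the polynomial estimate by this oscillation bound gives the claimed inequality with some $\sigma\asymp\log(1/b)/(1-a)$.

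I expect the main obstacle to be bookkeeping rather than any new idea. The $\log|P|$ terms enter both the Harnack upper bound and the lower bound at $a$, and they must cancel well enough that the exponent stays linear in $\log(1/b)/(1-a)$ with the constant $3$ and base $8|I|/|E|$. I would first aim for the bound up to unspecified absolute constants. I would then tighten it by choosing $r$ optimally and by absorbing the Harnack factor into the base through $|E|\le|I|$, which turns $e^{c\sigma}$ into a power of $|I|/|E|$ only if it is bounded in terms of $(8|I|/|E|)^{\sigma}$.
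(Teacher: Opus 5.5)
The paper gives no proof of this statement; it imports it verbatim from Nazarov--Sodin--Volberg, so your plan has to stand on its own. It breaks at Step 1.

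\textbf{Step 1 overcounts zeros.} Jensen's formula centred at $w$ needs a lower bound on $|f(w)|$. You only have such a bound at $\pm a$, so a chain of disks along $[-a,a]$ is not available. What Jensen at $\pm a$ (composed with the M\"obius maps sending $\pm a$ to $0$) actually gives is the weighted bound $\sum_j g_D(\pm a,\zeta_j)\le\log(1/b)$, where $g_D(w,\zeta)=\frac12\log\bigl(1+\frac{(1-|w|^2)(1-|\zeta|^2)}{|w-\zeta|^2}\bigr)$ is the Green function of $D$. Near $\pm i(1+a)/2$ this weight is only of order $(1-a)^2$. Concretely, take $f(z)=\bigl(\frac{z-i\rho}{1+i\rho z}\bigr)^N$ with $\rho$ slightly below $(1+a)/2$. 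It satisfies every hypothesis, with $\log(1/b)=\frac N2\log\bigl(1+\frac{(1-a^2)(1-\rho^2)}{a^2+\rho^2}\bigr)\sim\frac N2(1-a)^2$ as $a\to1$. So the disk $|z|\le(1+a)/2$ can contain about $2\log(1/b)/(1-a)^2$ zeros, not $O(\log(1/b)/(1-a))$. The polynomial Remez step puts exactly the exponent $n$ on the unbounded ratio $|I|/|E|$. No tuning of constants therefore turns this into $\sigma\lesssim\log(1/b)/(1-a)$, let alone $\sigma=3\log(1/b)/(1-a)$.

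\textbf{Step 3 has the same defect.} The bound $|P|\ge(\mathrm{dist})^n$ on $|z|=r$ is vacuous, because zeros of $P$ may lie on that circle. Pushing the circle to $|z|\to1$ only gives $\log|g|\le n\log\frac{2}{1-a}$. Even granting your oscillation bound $(\log(1/b)+n)/(1-a)$, inserting $n\asymp\log(1/b)/(1-a)$ leaves an additive loss of order $\log(1/b)/(1-a)^2$. That loss cannot be absorbed into $(8|I|/|E|)^{\sigma}$ when $|E|$ is close to $|I|$. So your closing claim does not follow from your own estimates: the $\log|P|$ terms are a genuine obstruction, not bookkeeping.

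For a fixed $a$ such as the paper's $a=1/2$, your argument can be repaired to give $\sigma=C\log(1/b)$ with an unspecified $C$. The repair uses Green-weighted counting instead of chains, and a larger circle for the upper bound. That still falls short of the lemma as stated.

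\textbf{Repair for general $a$.} Factor with Blaschke factors on all of $D$: $\log|f|=-\sum_j g_D(\cdot,\zeta_j)+h$ with $h\le 0$ harmonic. This has three effects:
\begin{itemize}
\item there is no boundary loss, since Blaschke factors are unimodular on $\partial D$;
\item Jensen at $\pm a$ becomes the identity $\sum_j g_D(\pm a,\zeta_j)-h(\pm a)=\log(1/b)$;
\item Harnack on $D$ gives $-h\le\frac{1+a}{1-a}\log\frac1b$ on $[-a,a]$, independently of the zeros.
\end{itemize}
Only zeros close to $[-a,a]$ relative to their distance from $\partial D$ are then fed into a Remez inequality for logarithmic potentials. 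These have Green weight $\gtrsim 1-a$, so there are $O(\log(1/b)/(1-a))$ of them. The remaining Green potentials, and the factors $\log|1-\bar\zeta_j x|$, are charged against the same weights at cost $O(1/(1-a))$ per unit weight. Even this only yields $\sigma\asymp\log(1/b)/(1-a)$. The explicit constants $3$ and $8$ need the sharper argument of the original source, which your plan does not attempt.
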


\begin{proof}[Proof of \cref{thm:remez}]
Let $M=h(0)$. For simplicity we will work with the polynomial $H(z)=h(z)/M$, so that $|H(0)|=1$
and $|H(x)|\leq1$ for all real $x$. To use \cref{lem:remez}, we need a function $f$ that is bounded on the complex unit disk and an interval $I \subseteq [-a,a]$ where $a$ and $-a$ are points where the function $f$ has the same non-zero magnitude. 

Towards this, we first bound the value of $H$ on the complex unit disk. We show that
\begin{equation}
\label{eq:growth}
|H(x+iy)|\le e^{\tau |y|}
\end{equation}
for $x,y\in \R$. 
Consider first the upper half-plane and define
$$
    F(z):=e^{i\tau z}H(z)
    =
    \frac1M\sum_{\lambda\in\Lambda}
    c_\lambda e^{i(\lambda+\tau)z}.
$$
Since \(\lambda+\tau\ge0\) for every \(\lambda\in\Lambda\), each summand
is bounded in the upper half-plane, and hence \(F\) is bounded there.
Moreover, on the real axis,
$    |F(x)|=|H(x)|\le1.$
 Thus the hypotheses of the Phragmen-Lindelof principle
\cite[Lemma~9]{ACL} are satisfied (after reflecting the lower-half-plane
form stated there to the upper half-plane), and therefore
$
    |F(z)|\le1
 $ for \(\operatorname{Im}z\ge 0\).
For \(z=x+iy\) with \(y\ge0\),
$    |F(z)|
    =
    e^{-\tau y}|H(x+iy)|,$
so Eq.~\eqref{eq:growth} follows.
For the lower half-plane, apply the same argument to
$$
    F_-(z):=e^{-i\tau z}H(z)
    =
    \frac1M\sum_{\lambda\in\Lambda}
    c_\lambda e^{i(\lambda-\tau)z}.
$$
Since \(\lambda-\tau\le0\), this function is bounded in the lower half-plane, and the same Phragmen-Lindelof principle gives $    |H(x+iy)|\le e^{-\tau y}$ for $y\leq 0$. 
Combining the two cases proves \eqref{eq:growth}.

Now, our goal is to apply \cref{lem:remez} with $a=1/2$
and $I=[0,1/2]$. For that we need a measurable subset $E\subseteq I$
which has large measure. To that end, we observe that since
$A\subseteq[-1,1]$, either $A_+:=A\cap[0,1]$ or $A\cap[-1,0]$
has measure at least $|A|/2$. Without loss of generality, assume it
is the former. Otherwise, replace $H(z)$ by $H(-z)$ and $A$ by $-A$.
This preserves the assumptions, since the reflected frequencies belong
to $-\Lambda\subseteq[-\tau,\tau]$.

Now, $A_+$ lies in $[0,1]$, but need not lie in the desired interval
$I=[0,1/2]$. We therefore take its preimage under the map
$q(x)=1-4x^2$, which sends $[0,1/2]$ bijectively onto $[0,1]$. It remains to ensure
that this preimage has large measure. Since $|q'(x)|=8x\le 4$ for every $x\in[0,1/2]$, we have that $q$ is $4$-Lipschitz on $[0,1/2]$, and thus $|q(E)|\le 4|E|$ for every measurable $E\subseteq[0,1/2]$. Applying this to $E:=\{x\in[0,1/2]:q(x)\in A_+\}$,
and using $q(E)=A_+$, we obtain
\begin{equation}
\label{eq:EatleastA}
    |A_+|=|q(E)|\le4|E|
    \quad\implies\quad
    |E|\ge\frac{|A_+|}{4}\ge\frac{|A|}{8}.
\end{equation}
Thus the set $E\subseteq I$ on which we will apply \cref{lem:remez} remains quantitatively large.

Now, define
$    f(z):=e^{-4\tau}H(q(z)).$ 
For \(|z|<1\), observe that 
$
    |\operatorname{Im}q(z)|
    =
    4|\operatorname{Im}(z^2)|
    \le4|z|^2
    <4.
$
Hence Eq.~\eqref{eq:growth} gives 
$$
    |f(z)|
    \le
    e^{-4\tau}e^{4\tau}
    =1,
$$
for $|z|\leq 1$, so it is bounded  and we can apply \cref{lem:remez}. 
Also, $
    q(-1/2)=q(1/2)=0,
$~hence
$$
    |f(-1/2)|
    =
    |f(1/2)|
    =
    e^{-4\tau}|H(0)|
    =
    e^{-4\tau}.
$$
Since $q([0,1/2])=[0,1]$, and $|H|\le1$ on $[0,1]$, and
$|H(0)|=1$, we have
$$
    \max_{x\in[0,1/2]}|f(x)|
    =
    e^{-4\tau}.
$$
Moreover, by the definition of \(E\),
$$
    \sup_{x\in E}|f(x)|
    \leq 
    e^{-4\tau}
    \sup_{t\in A_+}|H(t)|.
$$
Applying \cref{lem:remez} to $f$ with
$
    a=1/2$, $
    I=[0,1/2]
$, $
    b=e^{-4\tau}
$ and $\sigma=24\tau$, we get
$$
    e^{-4\tau}
    \le
    \left(\frac4{|E|}\right)^{24\tau}
    e^{-4\tau}
    \sup_{t\in A_+}|H(t)|   \le 
    \left(\frac{32}{|A|}\right)^{24\tau}e^{-4\tau}
    \sup_{t\in A}|H(t)|,
$$
where the second inequality used  Eq.~\eqref{eq:EatleastA}.
Plugging in $H=h/M$, the above inequality implies 
$$
    M
    \le
    \left(\frac{32}{|A|}\right)^{24\tau}
    \sup_{t\in A}|h(t)|,
$$
which proves the theorem. % Eq.~\eqref{eq:conclusion}.
\end{proof}

%% file: learning.tex
\section{Learning Random Circuits with State Access}\label{sec:learning}

\paragraph{Setup.} For simplicity, we will ignore rounding issues and assume that the parameters are integers when required without explicitly writing it throughout this paper.  Consider the $k$-dimensional lattice $\cL = (\Z/w\Z)^k$ where $w = n^{1/k}$.  The lattice has $n$ sites where each site has an $\ell$-qubit local state and gates of locality $2\ell$ act on two neighboring sites. The \emph{$k$-dimensional brickwork architecture} with gates of locality $2\ell$ applies a depth-$d$ circuit as follows: the $d$ layers of gates are indexed by a tuple $(a,s)$ where $a \in [k]$ and $s \in [d/k]$ and are arranged as 
$$(k,d/k),\ldots,(1,d/k),\ldots,(k,2), \ldots, (1,2),(k,1),\ldots,(1,1).$$

These layers are applied to the initial state in order from right to left with the first layer of gates being applied indexed by $(1,1)$. The layer $(a,s)$ consists of $2\ell$-qubit quantum gates on all disjoint pairs of neighboring sites that are matched across direction $a$ as follows: for each value $y$ taken by $x_{-a}$, when $s$ is odd, a gate is applied on sites with locations $(x_a=2i-1, x_{-a}=y)$ and $(x_a=2i, x_{-a}=y)$, while if $s$ is even a gate is applied on sites with locations $(x_a=2i, x_{-a}=y)$ and $(x_a=2i+1, x_{-a}=y)$, where the operations are modulo $w$.

Note that $s$ tracks the number of times the gates are applied along a given direction and the gates alternate between two overlapping perfect matchings depending on the parity of $s$. For example, for any $a$ when $s=1$, the sites where the $a^{\text{th}}$ coordinates are $(1,2),(3,4),\ldots$ are acted upon by the quantum gates, while when $s=2$, the gates act on sites where the $a^{\text{th}}$ coordinates are $(w,1),(2,3),\ldots$. 
\begin{figure}[H]
\centering
\begin{tikzpicture}[
    font=\sffamily\scriptsize,
    text=lcInk,
    line cap=round,
    line join=round,
    lc wire/.style={
        draw=lcInk,
        line width=0.4pt
    },
    lc gate/.style={
        draw=lcGate,
        fill=white,
        line width=0.4pt,
        rounded corners=1.5pt
    },
    lc grid/.style={
        draw=lcMuted,
        line width=0.35pt,
        opacity=0.35
    },
    lc bond/.style={
        draw=lcInk,
        line width=1.6pt
    },
    lc periodic/.style={
        draw=lcInk,
        line width=1.6pt,
        rounded corners=2pt
    }
]

% ============================================================
% (a) One-dimensional lattice
% ============================================================
\begin{scope}[yshift=-2.26cm,x=0.62cm,y=0.93cm]

    \foreach \x in {0,...,7} {
        \draw[lc wire] (\x,0.389) -- (\x,6.611);
        \fill[lcInk] (\x,0.389) circle (2.4pt);
    }

    \foreach \layer in {1,3,5} {
        \foreach \a in {0,2,4,6} {
            \draw[lc gate]
                ({\a-0.393},{\layer-0.297})
                rectangle
                ({\a+1.393},{\layer+0.297});
        }
    }

    \foreach \layer in {2,4,6} {
        \foreach \a in {1,3,5} {
            \draw[lc gate]
                ({\a-0.393},{\layer-0.297})
                rectangle
                ({\a+1.393},{\layer+0.297});
        }
    }

    \node at (3.5,-0.20) {(a) One-dimensional lattice};

\end{scope}

% ============================================================
% (b) Two-dimensional lattice: Layer 1
% ============================================================
\begin{scope}[shift={(6.25,1.90)},x=0.82cm,y=0.82cm]

    \foreach \u in {0,...,3} {
        \draw[lc grid] (\u,0) -- (\u,3);
        \draw[lc grid] (0,\u) -- (3,\u);
    }

    \foreach \y in {0,...,3} {
        \draw[lc bond] (0,\y) -- (1,\y);
        \draw[lc bond] (2,\y) -- (3,\y);
    }

    \foreach \x in {0,...,3} {
        \foreach \y in {0,...,3} {
            \fill[lcInk] (\x,\y) circle (2pt);
        }
    }

    \node at (1.5,-0.65) {(b) Layer 1};

\end{scope}

% ============================================================
% (c) Two-dimensional lattice: Layer 2
% ============================================================
\begin{scope}[shift={(10.55,1.90)},x=0.82cm,y=0.82cm]

    \foreach \u in {0,...,3} {
        \draw[lc grid] (\u,0) -- (\u,3);
        \draw[lc grid] (0,\u) -- (3,\u);
    }

    % Interior horizontal bonds.
    \foreach \y in {0,...,3} {
        \draw[lc bond] (1,\y) -- (2,\y);
    }

    % Periodic horizontal bonds: x=3 is adjacent to x=0.
    \foreach \y in {0,...,3} {
        \draw[lc periodic]
            (3,\y)
            .. controls (3.45,{\y+0.28}) and (-0.45,{\y+0.28}) ..
            (0,\y);
    }

    \foreach \x in {0,...,3} {
        \foreach \y in {0,...,3} {
            \fill[lcInk] (\x,\y) circle (2pt);
        }
    }

    \node at (1.5,-0.65) {(c) Layer 2};

\end{scope}

% ============================================================
% (d) Two-dimensional lattice: Layer 3
% ============================================================
\begin{scope}[shift={(6.25,-1.90)},x=0.82cm,y=0.82cm]

    \foreach \u in {0,...,3} {
        \draw[lc grid] (\u,0) -- (\u,3);
        \draw[lc grid] (0,\u) -- (3,\u);
    }

    \foreach \x in {0,...,3} {
        \draw[lc bond] (\x,0) -- (\x,1);
        \draw[lc bond] (\x,2) -- (\x,3);
    }

    \foreach \x in {0,...,3} {
        \foreach \y in {0,...,3} {
            \fill[lcInk] (\x,\y) circle (2pt);
        }
    }

    \node at (1.5,-0.65) {(d) Layer 3};

\end{scope}

% ============================================================
% (e) Two-dimensional lattice: Layer 4
% ============================================================
\begin{scope}[shift={(10.55,-1.90)},x=0.82cm,y=0.82cm]

    \foreach \u in {0,...,3} {
        \draw[lc grid] (\u,0) -- (\u,3);
        \draw[lc grid] (0,\u) -- (3,\u);
    }

    % Interior vertical bonds.
    \foreach \x in {0,...,3} {
        \draw[lc bond] (\x,1) -- (\x,2);
    }

    % Periodic vertical bonds: y=3 is adjacent to y=0.
    \foreach \x in {0,...,3} {
        \draw[lc periodic]
            (\x,3)
            .. controls ({\x+0.28},3.45) and ({\x+0.28},-0.45) ..
            (\x,0);
    }

    \foreach \x in {0,...,3} {
        \foreach \y in {0,...,3} {
            \fill[lcInk] (\x,\y) circle (2pt);
        }
    }

    \node at (1.5,-0.65) {(e) Layer 4};

\end{scope}

\end{tikzpicture}

\caption{
Brickwork architecture.
Left: a six-layer circuit on a one-dimensional lattice, where each rounded rectangle denotes a two-site gate.
Right: four successive layers on a two-dimensional square lattice with $n=16$. Thick black segments indicate the active two-site gates, and across the four layers every nearest-neighbor edge is used.%\mnote{make these two sides almost the same size or change layout}\qnote{Fixed!}
}
\label{fig:brickwall-architecture}
\end{figure}

\paragraph{The Learning Model.} We sample a set of $M=64n^2d^2/\delta^2$ unitaries $G_1,\ldots, G_M$ that act on $2\ell$ qubits from the Haar measure on $\U(4^\ell)$. A finite description $\langle G_i \rangle$ of each of these matrices is stored by truncating the real and imaginary parts of each entry to $p=\mathrm{poly}(n)$ bits of precision. These truncated matrices may not be unitary, so we assume a decoding procedure $\textsc{Decode}$ that rounds a matrix description $\langle G_i \rangle$ to its nearest unitary matrix $U_{\langle G_i\rangle}$ by using its polar decomposition: if the (non-unitary) matrix described by a description $\langle G \rangle$ is $\tilde{G}$ then it is rounded to the unitary $\tilde{G}(\tilde{G}^\dag \tilde{G})^{-1/2}$ appearing in the polar decomposition of $\tilde{G}$. The rounded matrix may not be the same as the original matrix before the truncation, but it will be very close in operator norm\footnote{In particular, it can be shown that $\|U_{\langle G\rangle}-G\|_{\mathrm{op}}\le 2\|\tilde G-G\|_{\mathrm{op}}\lesssim 4^\ell 2^{-p}$. If $\|\tilde G-G\|_{\mathrm{op}}<1$, $\tilde G$ is invertible.}.  The rounded unitary matrices $\textsc{Decode}(\langle G_i \rangle)$ form the gateset $\cN$. The gates of the circuit are going to be only chosen from the gateset $\cN$. 

\medskip 
Using the gate set $\cN$ constructed above, we generate a random
$k$-dimensional brickwork circuit $U$ by assigning each gate location
a label sampled independently and uniformly from $[M]$. This label
assignment is independent of the randomness used to construct $\cN$.
At each location assigned label $i$, we place the corresponding
decoded gate $\textsc{Decode}(\langle G_i\rangle)\in\cN$.
The learning algorithm is given copy access to the state
$\ket{\psi_0}:=U\ket{0^\ell}^{\otimes n}$.
The classical description $\langle\cN\rangle$ and the decoding
procedure $\textsc{Decode}$ are public and known to the algorithm,
whereas the sampled label assignment defining $U$ remains unknown.
The learning task is to output the unknown circuit $U$.

We remark on two points regarding the above modeling. First, after averaging over the random choice of $\cN$, the ideal circuit distribution is close in total variation distance\footnote{Conditioned on no gate label being used twice, the ideal gates are independent Haar samples after averaging over the gateset $\cN$, so the total variation distance is at most the probability of a repeated label, which is $O(n^2d^2/M)=O(\delta^2)$ by a union bound over pairs of gate locations.} to the ensemble with independent local Haar random gates. Second, for constant locality, common choices for such problems include fixed universal gate sets~\cite{HL09,OSH20} and $\varepsilon$-nets~\cite{fefferman2024anticoncentrationunitaryhaarmeasure} (see also~\cite{KW} for learning with finite gate sets). At inverse polynomial precision, such nets have polynomial size for constant locality but size $\exp(n^{\Omega(1)})$ for $\Theta(\log n)$ locality. Compiling larger gates into a standard universal set of one- and two-qubit gates instead changes the depth and architecture. We therefore use a public gate set of polynomial size, retaining larger gates as individual operations without requiring coverage of all local unitaries. We note that making the gateset public does not make the problem substantially easier as the search space still consists of $2^{\poly(n)}$ candidate circuits. We also note that for the constant locality case, our algorithm can be adapted to work with a fixed $\varepsilon$-net similar to the setup of \cite{fefferman2024anticoncentrationunitaryhaarmeasure}. This requires some non-trivial work, but the key ideas are the same, so we do not discuss this here to keep the presentation focused.

\medskip
% \learningupperbound*
\learning*

While the guarantee of recovering the circuit $U$ exactly and with only state access may appear surprising, the proof in fact shows that with high probability two different random circuits cannot generate states that are arbitrarily close. A similar guarantee appears in the algorithm of \cite{fefferman2024anticoncentrationunitaryhaarmeasure}.

\paragraph{Learning Algorithm.} 

Let $U = U_d U_{d-1}\ldots U_1$ be the circuit where $U_t$ denote the circuit of gates in the layer $t$ indexed by the tuple $(a,s)$ satisfying $t=(s-1)k+a$. The learning algorithm will learn the layers of the circuit in backward fashion, first learning layer $U_d$, then $U_{d-1}$ and so on. All the layers will be learned exactly with high probability over the choice of the circuit. Thus, we can assume that for any $t \in [d]$, the algorithm can prepare copies of the state $\ket{\psi_t} := U_tU_{t-1}\ldots U_1\ket{0^\ell}^{\otimes n}$ by applying the inverse circuit $U_{t+1}^\dag\ldots U^\dag_d$ to the copies of the input state $\ket{\psi_0} = U\ket{0^\ell}^{\otimes n}$. 

%The top $d-k+1$ layers and the bottom $k$ layers will be learned in different phases, so we will treat them differently. 

Let us assume that the algorithm is learning the layer $U_t$ with $t=(s-1)k+a$ and let $\psi_t = \ketbra{\psi_t}{\psi_t}$ be the state obtained by undoing the layers above. We will omit the subscript $t$ in this description below to simplify notation as the subscript will be clear from the context. Consider a gate $G$ of the circuit which acts on two neighboring sites $B $ and $C$ where $B$ and $C$'s location are given by $(x_a=i,x_{-a}=y)$ and $(x_a=i+1,x_{-a}=y)$ respectively. Consider the site $D$ with location $(x_a=i+2s-1,x_{-a}=y)$. Our starting point is the observation that the backward lightcones of the sites $B$ and $D$ just intersect. In particular, the backward lightcones are disjoint if the gate $G$ was removed from the circuit. This is easy to observe in one dimensional lattice (see \cref{fig:lightcone-removal}) and continues to hold for the higher dimensional brickwork circuits considered here. 

\begin{lemma}[Disjointness of Backward Lightcones] \label{lem:disjointness}
Let $G$ be a gate in a layer $t$ that acts on the two sites $B$ and $D$ described above. Then, if $G$ is removed from the circuit, the backward lightcones of the sites $B$ and $D$ are disjoint. %\mnote{Add picture in the proof overview section in intro}
\end{lemma}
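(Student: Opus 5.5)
The plan is to show that in the $k$-dimensional brickwork architecture the backward lightcone of any axis-aligned box is again an axis-aligned box (modulo $w$), and that each coordinate of this box evolves only under the layers acting in that direction. This reduces the statement to a one-dimensional brickwork computation along direction $a$. Throughout, the relevant circuit is $U_t U_{t-1}\cdots U_1$, since the layers above $t$ have already been undone, with the gate $G$ deleted from layer $t=(s-1)k+a$.

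\textbf{Step 1 (factorization into coordinates).} A layer $(b,s')$ pairs sites that differ only in coordinate $b$. Which pairs are formed depends only on $x_b$ and the parity of $s'$. Hence going one layer backward maps a box $\prod_c [l_c,r_c]$ to the box in which only the interval in coordinate $b$ is replaced by its one-dimensional backward image. The one-dimensional rule is: $r_b$ increases by one exactly when $r_b$ is the left element of its pair, and $l_b$ decreases by one exactly when $l_b$ is the right element of its pair. Since $B$ and $D$ share the coordinates $x_{-a}=y$, it suffices to show that their intervals in coordinate $a$ stay disjoint. Their intervals in the other coordinates are irrelevant, because two boxes whose $a$-projections are disjoint are themselves disjoint. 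The hypothesis $d\le 0.01n^{1/k}$ guarantees that intervals have length at most $2d<w$, so wraparound modulo $w$ never causes spurious intersections, including from the far side of the torus.

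\textbf{Step 2 (one-dimensional computation).} In direction $a$, the layers at or below $t$ are $(a,1),\ldots,(a,s)$, and the pairings alternate between the two matchings. At layer $(a,s)$ the gate $G$ pairs $i$ with $i+1$, so $i$ is a left element there. The site $D=i+2s-1$ has the parity of $i+1$, so it is the right element of the gate on $(i+2s-2,i+2s-1)$, which is still present. After layer $(a,s)$ is traversed backward, the intervals are therefore $\{i\}$ for $B$ (because $G$ has been removed) and $[i+2s-2,\,i+2s-1]$ for $D$.

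Next I track only the facing endpoints through layers $(a,s-1),\ldots,(a,1)$.
\begin{itemize}
\item For $B$, at layer $s-1$ the parity flips, so $i$ is paired with $i-1$ and the right endpoint does not move. Once an endpoint has moved, it lands on a site that is a left element in the next (opposite-parity) matching, so it keeps moving at every subsequent layer. The right endpoint of $B$ therefore ends at $i+(s-2)$ (or at $i$ when $s=1$).
\item For $D$, the left endpoint $i+2s-2$ has the parity of $i$, so it is a right element at layer $s-1$. It moves at every one of the $s-1$ remaining layers and ends at $i+2s-2-(s-1)=i+s-1$.
\end{itemize}
Since $i+s-2<i+s-1$, the two intervals are disjoint. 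Together with Step 1, this proves the lemma.

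\textbf{Main obstacle.} I expect the main care to be needed in the parity bookkeeping of Step 2, namely checking which endpoint moves at the first step and that movement then persists, including the edge cases $s=1,2$. I will verify this by a short induction on the number of layers traversed, with the invariant that after the first move the right endpoint of $B$ is always a left element in the next layer, and symmetrically for $D$.
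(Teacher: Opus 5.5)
Your argument follows the paper's proof. You restrict to coordinate $a$, note that only the $s$ layers $(a,1),\ldots,(a,s)$ move it, and show that $B$'s cone stays in $z_a\le x_a+s-2$ while $D$'s stays in $z_a\ge x_a+s-1$. For $s\ge 2$ your endpoint bookkeeping is correct: $B$'s right end stalls once at layer $(a,s-1)$ and then moves $s-2$ times, while $D$'s left end moves all $s-1$ times. This makes explicit what the paper leaves implicit.

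However, the case $s=1$ fails as written. You assert that $D=i+2s-1$ is the right element of the gate on $(i+2s-2,i+2s-1)$, ``which is still present.'' For $s=1$ this gate is $(i,i+1)$, which is $G$ itself (here $D=C$), so it has been deleted. Carried through, your formulas put $B$'s right endpoint at $i$ (as you note) and $D$'s left endpoint at $i+s-1=i$. The intervals you compute therefore overlap, and your closing inequality $i+s-2<i+s-1$ does not cover this case.

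Split off $s=1$ as the paper does. Deleting $G$ leaves the $a$-intervals $\{i\}$ for $B$ and $\{i+1\}$ for $D$, and no direction-$a$ layer lies below $(a,1)$, so they stay disjoint. For $s\ge 2$ one has $0<2s-2<w$, so the gate on $(i+2s-2,i+2s-1)$ is genuinely distinct from $G$, and your Step~2 goes through.

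A smaller point concerns Step~1. Each interval having length less than $w$ does not by itself exclude an intersection ``from the far side of the torus.'' Your own computation gives the $a$-intervals $[i-s+1,\,i+s-2]$ and $[i+s-1,\,i+3s-2]$. What is needed is that their combined span of $4s-2$ sites fits in $\Z/w\Z$, i.e.\ $4s-2\le w$, which also follows from $d\le 0.01\,n^{1/k}$. With these two repairs, your proof is a more careful version of the paper's, which ignores wraparound entirely.
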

\vspace*{-15pt}
\begin{proof}
    For layers $(a,s)$ where $s=1$, the statement is trivial. When $s \ge 2$, note that only layers in the direction $a$ can change the $a^{\text{th}}$ coordinate and there are exactly $s$ such layers. After removing the gate $G$ from the circuit, the backward lightcone of $B$ can only reach sites $z$ with $z_a \le x_a + s - 2$ while the backward lightcone of $D$ can only reach sites $z$ with $z_a \ge x_a + s-1$. This implies that the lightcones are disjoint in the $a^{\text{th}}$ coordinate, which also proves that they are disjoint. 
\end{proof}

The above implies that the state $\rhodef = G^\dag\psi G$ obtained by removing the gate $G$ from the circuit becomes separable when reduced to the sites $B$ and $D$. That is, $\rhodef_{BD} = \rhodef_{B}\otimes \rhodef_D$. We define the following correlation function for an arbitrary quantum state $\rho$ to measure the distance from separability: 
\begin{align}\label{eqn:corr}
        \corr_{B:D}(\rho) = \|\rho_{BD}-\rho_{B}\otimes \rho_D\|_F^2,
\end{align}
where the norm denotes the Frobenius norm. This quantity can be considered a measure of mutual information between the sites $B$ and $D$ in the state $\rho$. Note that  $\corr_{B:D}(\rhodef) =  0$ because of the disjointness of the backward lightcones.

Our key technical statement is that with high probability, for the original state $\psi$ where the gate $G$ is present in the circuit, this correlation is large.%\mnote{the notation below should say $\psi$ } \qnote{Fixed!}

%\qnote{I made a slight change in the quantifier of the lemma, to make it more align with the main theorem (main theorem is not changed)}
\begin{lemma}[Correlation Gap]
\label{lem:correlation-separation}
With probability at least $1-\delta$ over the choice of the initial
random gateset $\cN$ described above, the following holds: With
probability at least $1-\delta/2$ over the random label assignment
defining $U$, consider any depth $1\leq t\leq d$ and any gate position acting on some qudits $(B, C)$ at depth $t$. Suppose label $p\in [M]$ is assigned to $G$ and let $D$ be the site whose backward lightcone at depth $t$ just intersects the lightcone of the site $B$. For any candidate label $q \in [M]$ and associated gate $G_{q}\in\mathcal N$ acting on sites $(B, C)$, let
\[
\psi^{(G_{q})}:=G_{q}^\dagger \psi G_{q},
\]
where $\psi$ is the state obtained after applying the circuit to the
depth $t$. There exists a deterministic parameter $\Gamma = 2^{-O_k(\ell d)}\operatorname{poly}\!\left(
\frac{\delta}{ndM}
\right)$ such that,

If $p=q$, then:
\[
\operatorname{Corr}_{B:D}(\psi^{(G_{q})})=0.
\]
If $p\neq q$, then:
\[
\operatorname{Corr}_{B:D}(\psi^{(G_{q})})
\geq
\Gamma
\]
%\mnote{TBD: make notation consistent and shorten the statement of the lemma.}\qnote{Now the notation aligns with the previous part}
\end{lemma}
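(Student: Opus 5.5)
The case $p=q$ is immediate. If the candidate equals the true gate, then $\psi^{(G_q)}$ is exactly the state produced by the circuit with $G$ removed. By \cref{lem:disjointness}, the backward lightcones of $B$ and $D$ are then disjoint. Since the input is a product state, the reduced state on $BD$ factorizes, so $\corr_{B:D}=0$.

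All the work is in the case $p\neq q$. I will first treat the idealized model, where every gate in the relevant lightcones and every candidate $G_q$ is an independent Haar unitary. Then I will union bound over positions and labels, and transfer to the discretized gateset.

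Fix a position $(B,C)$ at depth $t$ and two distinct labels $p\neq q$. Write $W=G_q^\dagger G$. Then $\psi^{(G_q)} = W\rhodef W^\dagger$, where $\rhodef$ is the state with $G$ removed. The function $\corr_{B:D}(\psi^{(G_q)})$ is a real polynomial in the entries of all gates in the joint backward lightcone of $B$ and $D$, and in $G_q,\overline{G_q}$. It has constant degree (at most $8$) in each individual gate, and depends on only $O_k(d^{k+1})$ gates.

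I will expose the gates one at a time: first $G_q$ (equivalently $W$, which is Haar distributed given $G$), then the remaining gates in a carefully chosen order. At each step I apply \cref{thm:unitary-smallball} to the polynomial in the current gate, with all other gates fixed. Since that bound has a dimension-free exponent $1/(24\pi\cdot 8)$, each exposure loses a factor $\eps$ in the sup norm with failure probability $O(16^\ell)\eps^{\Omega(1)}$.

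The point is to avoid paying for all $d^{k+1}$ gates. I will split the lightcone gates into a small set $S$ of $O_k(d)$ "path" gates, running from $BC$ down toward the shared region and back up to $D$, and the rest. The key structural claim is the following:
\begin{quote}
for every fixing of the gates outside $S$, there is a choice of the gates in $S$ (and of $W$) for which $\corr_{B:D}=\Omega(1)$.
\end{quote}
Given this, the sup norm over $S\cup\{W\}$ is bounded below by a constant no matter how the other gates are set. Exposing only the gates in $S\cup\{W\}$ then loses $\eps$ per gate, so the final value is at least $\eps^{O_k(d)}$ with failure probability $O_k(d)\,16^\ell\eps^{\Omega(1)}$.

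The right parameter choice needs care. The inductive step bounds $|P(U)|\ge \eps\|P\|_\infty$ for the current gate. The next step then needs the sup over the next gate of that restricted polynomial, which in turn needs the remaining gates to retain a large sup. I would formalize this as follows: define $P_j$ as the supremum over the unexposed gates of $S$. Then $P_j(\cdot)$ is a sup of polynomials, so I apply the small-ball bound to the specific polynomial attaining the sup at the next step. Choosing $\eps=\poly(\delta/(ndM))\cdot 16^{-O(\ell)}$ and taking $\Gamma=\eps^{O_k(d)}$ gives $\Gamma = 2^{-O_k(\ell d)}\poly(\delta/(ndM))$. The failure probability per triple (position, $p$, $q$) is then small enough to union bound over $nd$ positions and $M^2$ label pairs.

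The two probability layers of the lemma, over the gateset $\cN$ and over the labels, follow by Markov's inequality on the number of bad label pairs. Precision rounding is handled by the Lipschitz continuity of $\corr$ in operator norm, with errors $4^\ell 2^{-p}\ll\Gamma$.

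The main obstacle is the structural claim: exhibiting explicit settings of $O(d)$ gates that produce constant correlation between $B$ and $D$ for arbitrary settings of everything else, including a mismatched $W\neq I$. I would try SWAP/identity routing. Set the gates along a staircase path so that they transport one qubit of a Bell pair, created at the overlap site, up to $B$ and the other up to $D$. Choose $W$ so that it acts as the identity on the $B$ end of the path. The remaining arbitrary gates act only on sites not on the path, so they cannot disturb this. The difficulty is that the "other" gates still touch path sites. To handle this, I would use the path gates to swap the path qubit out of the way of each interfering gate. If that fails in higher dimensions, I would fall back on choosing the path gates to decouple the path register, i.e. to act as SWAPs that isolate it from every arbitrary gate it meets.
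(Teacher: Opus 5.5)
Your treatment of $p=q$, the witness-plus-sequential-exposure strategy, and your adaptive $P_j$ (a supremum over the unexposed gates) are all sound. The genuine gap is in how you combine the $r=O_k(d)$ exposure steps.

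You union bound over the steps, and each step fails with probability $A\eps^{c}$, where $A=O(16^\ell)$ and $c=\Omega(1)$. To keep the total failure probability at most $\delta'=\poly(\delta/(ndM))$, every single step needs $\eps\le(\delta'/(rA))^{1/c}$, and the value that survives is $\eps^{r}$. So your choice gives $\Gamma=\eps^{O_k(d)}=(\delta/(ndM))^{O_k(d)}\,2^{-O_k(\ell d)}$. The identity $\eps^{O_k(d)}=2^{-O_k(\ell d)}\poly(\delta/(ndM))$ that you assert is false, because the polynomial factor is raised to the power $\Theta(d)$. For $\ell=O(1)$ and $d=\Theta(\log n)$ this gives $\Gamma=n^{-\Theta(\log n)}$, and the tomography step becomes quasi-polynomial. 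Choosing different thresholds $\eps_j$ per step does not help: by AM--GM, $\sum_j A\eps_j^{c}\le\delta'$ forces $\prod_j\eps_j\le(\delta'/(rA))^{r/c}$.

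The missing idea, used in the proof of \cref{thm:multi-unitary-anti-concentration}, is to pay for $\delta'$ only once. Set $R_j=P_j/P_{j-1}$; your adaptive supremum works as well as the paper's fixed global maximizer.
\begin{itemize}
\item The conditional bound $\Pr[R_j\le u\mid\mathcal F_{j-1}]\le Au^{c}$ integrates to the negative-moment bound $\E[R_j^{-c/2}\mid\mathcal F_{j-1}]\le 1+A$.
\item The tower property then gives $\E[\prod_j R_j^{-c/2}]\le(1+A)^r$.
\item A single application of Markov's inequality yields $\Pr[F\le\alpha\|F\|_\infty]\le(1+A)^r\alpha^{c/2}$.
\end{itemize}
Setting the right-hand side equal to $\delta'$ gives $\alpha=(\delta')^{2/c}(1+A)^{-2r/c}=\poly(\delta')\,2^{-O_k(\ell d)}$, which is exactly the form the lemma requires.

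There are also three smaller issues in the witness and the independence argument.
\begin{itemize}
\item $W$ cannot act as the identity at the $B$ end. Any $W$ of the form $I_B\otimes V_C$ leaves $\psi^{(G_q)}_{BD}=\rhodef_{BD}$, which is a product state by \cref{lem:disjointness}. By the same lemma, no half of the pair can reach $B$ without passing through the tested location. Route the half to $C$ and take $W=\mathrm{SWAP}_{B,C}$, i.e.\ $G_p=G_q\,\mathrm{SWAP}_{B,C}$, as the paper does.
\item Your worry that other gates touch path sites does not arise. In a brickwork architecture each site meets exactly one gate per layer. Let $S$ consist of the first-layer Bell-pair gate, acting on the known input $\ket{0^\ell}\ket{0^\ell}$, together with the gate acting on each half's current site in every later layer, set to SWAP or identity. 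Then no gate outside $S$ ever acts on either half.
\item If label $q$ also occurs elsewhere in the circuit, you should justify that $W$ is Haar and independent of everything else using the randomness of $G_p$, which appears only once under $\cE$. The randomness of $G_q$ does not give this.
\end{itemize}
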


We prove this result using our Carbery-Wright anticoncentration inequality over the unitary group. In particular, the correlation function above can be written as a constant-degree polynomial in the gates of the circuit and our goal ultimately boils down to showing that this polynomial is not too small with some reasonable probability. Proving this requires a careful inductive argument tracking the small-ball exponents carefully. For a polynomial-time algorithm as desired, it is crucial that the small-ball exponent only depends on the degree of the polynomial and not the dimension of the gates which can be very large. Thus, the dimension-free dependence in the small-ball exponent is absolutely essential here to handle gates of larger than constant locality.

The above suggests a natural test to learn the gate $G$ exactly: apply the inverse of each gate in the public gateset to the state $\psi$; if one chooses the correct gate, the correlation becomes zero while for an incorrect guess, the correlation remains at least $2^{-O(\ell d)}$. This can be detected by performing tomography on the reduced system $BD$ with sufficient accuracy. In particular, since the dimension of the system $BC$ is $2^{O(\ell)}$ and the precision required is roughly $2^{-O_k(\ell d)}$, this implies that one such gate can be learned in time $\mathrm{poly}(n,\delta^{-1})2^{O_k(\ell d)}$. 

Combining all the above steps, the following gives a complete description of the algorithm. 

\begin{center}
\begin{algorithmbox}
\label{alg:Learning protocol}
\begin{enumerate}
  \item \textbf{Choose the correct qubits to measure:} Starting from the last layer, consider each gate $BC$ in the current final layer $(a,s)$. Choose site $D$ such that the location of $B$ and $D$ differs in the $a^{\text{th}}$ coordinate by $2s-1$.

  \item \textbf{Find inverting gate with minimum correlation:} For every gate $G$ in the public gateset $\cN$, apply $G^\dag$ to $BC$ and estimate
  $\mathrm{Corr}_{B:D}(G)$ to precision $2^{-O_k(\ell d)}$ with state tomography on the reduced state $\psi_{BD}$. Choose the gate $G$ with the minimum estimated value of $\mathrm{Corr}_{B:D}(G)$.

  \item \textbf{Prepare copies of state prepared by prefix circuit:} After recovering every gate in the current final layer, apply the inverse of all recovered layers $U_t^\dagger U_{t+1}^\dagger\cdots U_d^\dagger$ to fresh copies of the input state $\ket{\psi_0}$ to obtain copies of the state prepared by the prior layers of the circuit.

  \item \textbf{Learn each layer:} Repeat until all layers have been recovered and output the resulting circuit.
\end{enumerate}
\end{algorithmbox}
\end{center}

\bigskip
\begin{proof}[Proof of \cref{thm:learning}] 
Assume that $0<\delta<1$. We first show that a correlation gap
$\Gamma>0$ allows us to recover one gate exactly. For any density
matrices $X,Y$ on $BD$, the reverse triangle inequality gives
\begin{align}
\label{eq:stable-inequality}
    \left|
        \sqrt{\corr_{B:D}(X)}-
        \sqrt{\corr_{B:D}(Y)}
    \right|
    \nonumber
    &\leq
    \left\|X-Y-(X_B\otimes X_D-Y_B\otimes Y_D)\right\|_F\\
    \nonumber
    &\leq
    \|X-Y\|_1+\|(X_B-Y_B)\otimes X_D\|_1+\|Y_B\otimes(X_D-Y_D)\|_1\\
    \nonumber
     &\leq
    \|X-Y\|_1+\|(X_B-Y_B)\|_1\|X_D\|_1+\|Y_B\|_1\|(X_D-Y_D)\|_1\\
    &\leq 3\|X-Y\|_1.
\end{align}
Here the second inequality follows by triangle inequality since
\[
    X_B\otimes X_D-Y_B\otimes Y_D
    =(X_B-Y_B)\otimes X_D
      +Y_B\otimes(X_D-Y_D),
\]
and the last
inequality uses contractivity of the trace norm under partial trace.

Fix a gate location in the current final layer, and let $p$ be its
true label. Suppose the following condition holds:
\begin{align}
\label{cond:correlation-separation}
    \text{For any gate location:}\qquad\mathrm{Corr}_{B:D}(G_p)=0,
    \qquad
    \mathrm{Corr}_{B:D}(G_q)\geq\Gamma
    \quad\text{for every }q\neq p.
\end{align}
For each candidate, perform tomography on the processed reduced
state $\psi_{BD}^{(G_q)}$, returning a density matrix within trace-norm
error $\varepsilon:=\frac{\sqrt{\Gamma}}{12}$
and denote its correlation by
$\widehat{\mathrm{Corr}}_{B:D}(G_q)$. Whenever all these estimates
are accurate, inequality~\eqref{eq:stable-inequality} implies
\[
    \widehat{\mathrm{Corr}}_{B:D}(G_p)
    \leq 9\varepsilon^2=\frac{\Gamma}{16},
    \qquad
    \widehat{\mathrm{Corr}}_{B:D}(G_q)
    \geq (\sqrt{\Gamma}-3\varepsilon)^2
    =\frac{9\Gamma}{16}
    \quad(q\neq p).
\]
Thus choosing the gate with minimum estimated value of $\widehat{\mathrm{Corr}}_{B:D}(G_q)$ over $q\in[M]$ could identify the true label. Moreover, by \cref{lem:correlation-separation}, with probability
at least $1-\delta$ over the choice of $\cN$, the desired condition holds
with probability at least $1-\delta/2$ over the random label assignment.

Hence, the algorithm performs each required state tomography to precision $\frac{\sqrt{\Gamma}}{12}$ with failure probability at most
$\frac{\delta}{ndM}$. By the argument above, if condition~\eqref{cond:correlation-separation} holds and all of the $ndM/2$ state tomography calls succeed, the algorithm would learn the whole circuit correctly.

\noindent\textit{Success Probability:} Fix a gateset $\cN$ satisfying the conclusion of
\cref{lem:correlation-separation}. For this gateset,
condition~\eqref{cond:correlation-separation} fails with probability
at most $\delta/2$ over the random label assignment.
Each tomography call uses fresh copies and has failure probability
at most $\delta/(ndM)$ conditionally on preceding outcomes. A union bound over the at most $ndM/2$ tomography calls and the
failure of condition~\eqref{cond:correlation-separation} gives
\[
    \Pr_{U,\mathrm{meas}}\!\left[
        \text{the algorithm fails}\mid\cN
    \right]
    \leq
    \frac{\delta}{2}
    +\frac{ndM}{2}\cdot\frac{\delta}{ndM}
    =\delta.
\]
Such gatesets occur with probability at least $1-\delta$ over
the choice of $\cN$, proving the required success guarantee.

\noindent\textit{Running time:} Each tomography call acts on a system of dimension
$4^\ell$ and has sample complexity and running time polynomial in
$4^\ell$, $(\frac{\sqrt{\Gamma}}{12})^{-1}$, and $\log(ndM/\delta)$.
Since $M=64n^2d^2/\delta^2$, our choice of $\Gamma$ gives
\[
    \left(\frac{\sqrt{\Gamma}}{12}\right)^{-1}
    =\operatorname{poly}_k(n,d,\delta^{-1})2^{O_k(d\ell)}.
\]
Accounting for at most $ndM$ tomography calls and at most $nd$
known inverse gates applied to each processed copy, the total
number of copies and running time are
\[
    \operatorname{poly}_k(n,d,\delta^{-1})2^{O_k(d\ell)},
\]
as claimed.
\end{proof}

\subsection[Proof of Correlation Gap]
{Proof of Correlation Gap (\cref{lem:correlation-separation})}

Recall that the circuit has two independent sources of randomness:
the initial random gate set $\cN$ and the labels sampled independently
and uniformly from $[M]$ at each gate location. Define $\cE$ to be
the event that all gate locations get distinct labels.
There are $nd/2$ gate locations, and any two distinct locations
receive the same label with probability $1/M$. By union bound, we have
\[
\Pr[\cE^c]
\le
\binom{nd/2}{2}\frac{1}{M}
\le
\frac{n^2d^2}{8M}
=
\frac{\delta^2}{512}.
\]

Moreover, conditioned on the event $\cE$, the random circuit is equivalent to sampling each gate independently and uniformly from the Haar measure on $\U(4^\ell)$, allowing us to apply the anticoncentration
inequality proved in the previous section.

Next, conditioned on the label assignment satisfying property $\cE$, we prove that the theorem holds for every gate location and then use a union bound to conclude. In particular, it suffices to prove that the requirement holds for each gate location at depth $t=d$. The purpose of the following sections is to prove that every incorrect candidate in the gate set induces a
non-negligible two-point correlation with high probability over the random gateset and the associated random circuit. 

% Consider the site $BC$ (which we assumed already is in layer $d$) where the gate acts in the statement of the \cref{lem:correlation-separation} and let $D$ be the site whose backward lightcone just intersects with that of $B$. Since the backward lightcones of the sites $B$ and $D$ just intersect, there are two paths of gates, $L_1$ and $L_2$, that start from the same gate in the first layer (see \cref{fig: two-ray}). We refer to the gates along these paths as the boundary gates. 

Consider the gate on sites $BC$ in layer $t=d$ from \cref{lem:correlation-separation}, and let $D$ be the site whose backward lightcone just intersects that of $B$. Since the backward lightcones of $B$ and $D$ just intersect, there are two paths of gates, $L_1$ and $L_2$, originating from a common gate in the first layer (see \cref{fig: two-ray}). We call the gates along these paths the boundary gates. There are only $O(d)$ boundary gates in total. We prove the stronger statement that, even after fixing all gates outside the boundary arbitrarily, the correlation, viewed as a function of the boundary gates and the true gate, has supremum bounded below by a constant and satisfies a uniform lower-tail bound.

% We prove the following stronger statement: the supremum and lower tail bounds hold uniformly over every
% fixed assignment of the unitary gates outside the boundary.

The proof has three components.
\begin{enumerate}
    \item
   We first construct an explicit assignment of the boundary
    gates that yields a uniform lower bound
    on the supremum of the correlation polynomial. This bound
    holds for every fixed assignment of all remaining gates.

    \item
     We next prove a multigate extension of
    \cref{thm:unitary-smallball} for polynomials of bounded
    degree in each independent Haar-random unitary input. Since the correlation polynomial has constant degree on each boundary gate,
    combining this inequality with the uniform supremum
    lower bound gives a high-probability lower bound on
    the correlation for each incorrect candidate label.
    \item
   Finally, we take a union bound over all incorrect candidate
    labels to establish the desired property at any gate
    location. By a union bound over all gate locations, we complete the proof.
\end{enumerate}

\subsubsection{Lower Bound on \texorpdfstring{$L^\infty$}{L-infinity} Norm of Two-point Correlation Polynomial}
\label{subsubsec:expectation-two-point}

%\mnote{revising this}

Fix a label assignment satisfying $\cE$, a tested gate on $BC$ with
true label $p$, and an incorrect candidate label $q\neq p$. We show that,
for every fixed assignment of the gates outside the boundary, the remaining
gates can be assigned so that the candidate $G_q$ produces a constant
two-point correlation.

\begin{lemma}
\label{lem:bell-routing-witness}
Fix a label assignment satisfying $\cE$, and let $p$ be the true label
of the tested gate on $BC$. Fix an incorrect candidate label $q\neq p$
and an arbitrary assignment of all gateset entries except the true entry
$G_p$ and the entries used at the boundary locations. There is an
assignment of these remaining entries such that
\[
    \psi^{(G_q)}_{BD}
    =
    \ketbra{\Phi^+}{\Phi^+}_{BD},
\]
and hence
\[
    \operatorname{Corr}_{B:D}\bigl(\psi^{(G_q)}\bigr)
    =
    1-\frac{1}{2^{2\ell}}
    \geq \frac34.
\]
\end{lemma}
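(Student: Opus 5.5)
The plan is to build an explicit ``Bell routing'' witness. The common first-layer gate creates a maximally entangled pair, the boundary gates act as \textsc{SWAP}s that carry its two halves along $L_1$ and $L_2$, and the true gate $G_p$ is chosen relative to $G_q$ so that undoing with $G_q^\dagger$ deposits the first half on $B$ while the second half sits on $D$.

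\textbf{Step 1 (Bell pair).} Let $g_0$ be the common first-layer gate at the root of $L_1$ and $L_2$, acting on two sites $X_1,X_2$ in the state $\ket{0^\ell}\ket{0^\ell}$. Set $g_0$ to be any unitary mapping $\ket{0^\ell}\ket{0^\ell}\mapsto\ket{\Phi^+}_{X_1X_2}$. Since all labels are distinct under $\cE$, each boundary location carries its own gateset entry, and we may set these entries freely.

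\textbf{Step 2 (routing).} Set every other boundary gate to \textsc{SWAP} or to the identity on its two sites. Trace the half of the pair on $X_1$ up along $L_1$ so that it reaches one of the sites $B,C$ just below layer $t$. Trace the half on $X_2$ along $L_2$ so that it reaches $D$ at the top.

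The point to check is geometric. Along each path, the wire currently carrying a Bell half is acted on only by boundary gates. This should follow from the path being the edge of the corresponding backward lightcone, in the same coordinate argument as \cref{lem:disjointness}: along direction $a$ the paths move one step per direction-$a$ layer, and layers in other directions are set to the identity on the carrying wire when they are boundary gates. Given this, the arbitrary non-boundary gates never touch either half. The global state is then $\ket{\Phi^+}$ on the two carrier sites tensored with an arbitrary pure state on the rest. Here we use that a pure maximally entangled pair stays in product with the rest when only disjoint systems are acted on, which is a monogamy argument.

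\textbf{Step 3 (choice of $G_p$).} Let $S\in\{I,\textsc{SWAP}_{BC}\}$ be the unitary that moves the Bell half from its carrier (either $B$ or $C$) onto $B$. Set $G_p:=G_q S$. This is a legal choice because $G_p$ is one of the free entries and $q\neq p$. Then
\[
\psi^{(G_q)}=G_q^\dagger G_p(\cdots)G_p^\dagger G_q=S(\cdots)S^\dagger ,
\]
which places the first half on $B$ and leaves the second half on $D$. Hence $\psi^{(G_q)}_{BD}=\ketbra{\Phi^+}{\Phi^+}$. Its marginals are maximally mixed, so
\[
\corr_{B:D}=\bigl\|\Phi^+-\tfrac{I}{2^\ell}\otimes\tfrac{I}{2^\ell}\bigr\|_F^2=1-2\cdot 4^{-\ell}+4^{-\ell}=1-4^{-\ell}\ge \tfrac34 .
\]

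\textbf{Main obstacle.} I expect the main obstacle to be the bookkeeping in Steps 2 and 3. In $k$ dimensions we must confirm that the routing paths really avoid all non-boundary gates. We must also handle the case where the candidate $q$ is itself the label of some boundary location, so that $G_q$ is simultaneously being assigned.

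For the latter, I would first try to order the choices. If $q$ labels a boundary location, assign that location's routing role first, which fixes $G_q$. If $q$ labels a non-boundary location, $G_q$ is already fixed by the arbitrary outside assignment. In either case, choose $G_p=G_qS$ last. If $q$ occupies a boundary location where a \textsc{SWAP} or identity is needed, then $G_q$ is that routing gate and $G_p=G_qS$ is still well defined, so the argument goes through unchanged.
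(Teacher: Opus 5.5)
Your proposal is correct and follows the paper's proof essentially step for step. The common root gate of $L_1,L_2$ prepares $\ket{\Phi^+}$, the boundary gates are set to $\mathrm{SWAP}$/identity to route the two halves to $C$ and $D$, and the choice $G_p=G_q\,\mathrm{SWAP}_{B,C}$ leaves a residual $\mathrm{SWAP}$ that puts the pair on $BD$, giving $\operatorname{Corr}_{B:D}=1-2^{-2\ell}$. Your explicit ordering of choices when $q$ itself labels a boundary location spells out a detail the paper leaves implicit; likewise, the option $S=I$ is harmless, since by \cref{lem:disjointness} the half must in fact arrive at $C$.
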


\begin{proof}
In the circuit with the tested gate on $BC$ deleted, choose the gate at
the common starting point of the two boundary paths $L_1$ and $L_2$ to be a unitary that maps $\ket{0^\ell}\otimes\ket{0^\ell}$ to the maximally entangled state
\[
    \ket{\Phi^+}
    =
    \frac{1}{\sqrt{2^\ell}}
    \sum_{x=0}^{2^\ell-1}\ket{x,x}.
\]
Assign SWAP gates along $L_1$ and $L_2$ to route the two halves of this
state to $C$ and $D$ respectively (see \cref{fig: two-ray}), and assign the identity whenever a
half is meant to remain at its current site. By the definition of the
boundary, all gates that can affect either routed half are included among
the entries being assigned. Thus, the construction is independent of the
fixed gates outside the boundary.

Because the labels are distinct, $p\neq q$ and the tested occurrence of $G_p$ on $BC$ has been deleted, the gate $G_p$ appears nowhere else in the circuit and remains free to choose. Set $G_p=G_q\operatorname{SWAP}_{B,C}$, so that after applying the candidate inverse $G_q^\dagger$, the residual gate at the tested location is
\[
    G_q^\dagger G_p=\operatorname{SWAP}_{B,C}.
\]
This SWAP moves the half of the entangled pair at $C$ to $B$, while the
other half remains at $D$. Consequently,
\[
    \psi^{(G_q)}_{BD}
    =
    \ketbra{\Phi^+}{\Phi^+}_{BD}.
\]
The two one-site marginals are both maximally mixed, so
\[
    \operatorname{Corr}_{B:D}\bigl(\psi^{(G_q)}\bigr)
    =
    \left\|
      \ketbra{\Phi^+}{\Phi^+}
      -\frac{I}{2^{2\ell}}
    \right\|_F^2
    =
    1-\frac{1}{2^{2\ell}}
    \geq\frac34.\qedhere
\]
\end{proof}

\begin{figure}[H]
\centering
\begin{tikzpicture}[
  x=0.62cm,y=0.93cm,
  font=\scriptsize,
  line cap=round,
  line join=round,
  wire/.style={
    draw=black,
    line width=0.8pt
  },
  gate/.style={
    draw=gateblue,
    fill=white,
    line width=0.45pt,
    rounded corners=3pt
  },
  swap/.style={
    draw=black,
    line width=0.65pt
  }
]

% ============================================================
% Sixteen vertical wires.
% Zero-based wire indices:
% B = 3, C = 4, D = 10.
% ============================================================

\foreach \q in {0,...,15} {
  \draw[wire] (\q,0.39) -- (\q,4.86);
}

% Extend B and C above the enlarged BC block.
\foreach \q in {3,4} {
  \draw[wire] (\q,4.86) -- (\q,5.48);
}

% ============================================================
% Ordinary brickwork gates.
% ============================================================

% First and third layers.
\foreach \layer in {1,3} {
  \foreach \a in {0,2,...,14} {
    \draw[gate]
      ({\a-0.39},{\layer-0.30})
      rectangle
      ({\a+1.39},{\layer+0.30});
  }
}

% Second layer.
\foreach \a in {-1,1,...,15} {
  \draw[gate]
    ({\a-0.39},1.70)
    rectangle
    ({\a+1.39},2.30);
}

% Fourth layer, excluding the BC gate.
\foreach \a in {-1,1,5,7,9,11,13,15} {
  \draw[gate]
    ({\a-0.39},3.70)
    rectangle
    ({\a+1.39},4.30);
}

% ============================================================
% Color the gates on L_1 and L_2.
% ============================================================

% L_1 gates: (5,6) at layer 2 and (4,5) at layer 3.
\foreach \a/\layer in {5/2,4/3} {
  \draw[
    draw=lcBlue,
    fill=lcBlueFill,
    line width=0.8pt,
    rounded corners=3pt
  ]
    ({\a-0.39},{\layer-0.30})
    rectangle
    ({\a+1.39},{\layer+0.30});
}

% L_2 gates: (7,8) at layer 2, (8,9) at layer 3, (9,10) at layer 4.
\foreach \a/\layer in {7/2,8/3,9/4} {
  \draw[
    draw=lcOrange,
    fill=lcOrangeFill,
    line width=0.8pt,
    rounded corners=3pt
  ]
    ({\a-0.39},{\layer-0.30})
    rectangle
    ({\a+1.39},{\layer+0.30});
}

% ============================================================
% U_Bell gate -- PURPLE.
% ============================================================

\draw[
  draw=BellPurple,
  fill=BellPurpleFill,
  line width=0.8pt,
  rounded corners=3pt
]
  (5.61,0.70) rectangle (7.39,1.30);

\node[
  text=BellPurple,
  font=\large
] at (6.5,1) {$U_{\mathrm{Bell}}$};

% ============================================================
% Enlarged BC block.
% ============================================================

\draw[
  gate,
  rounded corners=6pt
]
  (2.47,3.54) rectangle (4.52,5.13);

% ------------------------------------------------------------
% G_q -- GREEN (upper dotted box).
% ------------------------------------------------------------

\draw[
  draw=GqGreen,
  fill=GqGreenFill,
  line width=0.75pt,
  dash pattern=on 0pt off 0.7pt,
  rounded corners=3pt
]
  (2.61,4.43) rectangle (4.41,5.02);

\node[
  text=GqGreen,
  font=\normalsize
] at (3.51,4.72) {$G_q^\dag$};

% ------------------------------------------------------------
% Lower shaded box belongs to L_1 -- BLUE.
% ------------------------------------------------------------

\draw[
  draw=lcBlue,
  fill=lcBlueFill,
  line width=0.75pt,
  dash pattern=on 0pt off 0.7pt,
  rounded corners=3pt
]
  (2.61,3.70) rectangle (4.41,4.30);

\node[
  anchor=north west,
  text=lcBlue,
  font=\normalsize
] at (2.88,4.29) {$G_p$};

% ============================================================
% Large crossed wires inside the enlarged BC block.
% ============================================================

\draw[swap]
  (3.03,4.88)
  -- (3.03,4.55)
  -- (3.91,4.18)
  -- (3.91,3.85);

\draw[swap]
  (3.91,4.88)
  -- (3.91,4.55)
  -- (3.03,4.18)
  -- (3.03,3.85);

% ============================================================
% Small SWAP symbols.
% ============================================================

\foreach \sx/\sy in {
  5.5/2,
  7.5/2,
  4.5/3,
  8.5/3,
  9.5/4
} {

  \draw[swap]
    ({\sx-0.45},{\sy+0.16})
    -- ({\sx-0.45},{\sy+0.055})
    -- ({\sx+0.45},{\sy-0.055})
    -- ({\sx+0.45},{\sy-0.16});

  \draw[swap]
    ({\sx+0.45},{\sy+0.16})
    -- ({\sx+0.45},{\sy+0.055})
    -- ({\sx-0.45},{\sy-0.055})
    -- ({\sx-0.45},{\sy-0.16});
}

% ============================================================
% Input dots.
% ============================================================

\foreach \q in {0,...,15} {
  \fill[black] (\q,0.39) circle (3.2pt);
}

% ============================================================
% Output labels.
% ============================================================

\node at (3,5.68)  {$B$};
\node at (4,5.68)  {$C$};
\node at (10,5.08) {$D$};

% ============================================================
% Legend.
% ============================================================

\begin{scope}[shift={(3.8,6.20)}]

  % L1 gate
  \draw[
    draw=lcBlue,
    fill=lcBlueFill,
    line width=0.7pt,
    rounded corners=1.5pt
  ]
    (0.00,-0.16) rectangle (0.70,0.16);

  \node[anchor=west] at (0.85,0) {$L_1$};

  % L2 gate
  \draw[
    draw=lcOrange,
    fill=lcOrangeFill,
    line width=0.7pt,
    rounded corners=1.5pt
  ]
    (2.10,-0.16) rectangle (2.80,0.16);

  \node[anchor=west] at (2.95,0) {$L_2$};

  % U_Bell gate
  \draw[
    draw=BellPurple,
    fill=BellPurpleFill,
    line width=0.7pt,
    rounded corners=1.5pt
  ]
    (4.20,-0.16) rectangle (4.90,0.16);

 \node[
  anchor=west,
  font=\tiny
] at (5.05,0) {$U_{\mathrm{Bell}}$};

  % G_q gate
  \draw[
    draw=GqGreen,
    fill=GqGreenFill,
    line width=0.7pt,
    rounded corners=1.5pt
  ]
    (7.20,-0.16) rectangle (7.90,0.16);

  \node[anchor=west] at (8.05,0) {$G_q^\dag$};
\end{scope}
\end{tikzpicture}
\caption{Example of a boundary-gate assignment on a one-dimensional lattice. The gate $U_{\mathrm{Bell}}$ prepares a generalized Bell state and the smaller crossings represent SWAP gates used to route its two halves. At the tested location $BC$, the true gate is chosen as $G_p=G_q\operatorname{SWAP}_{B,C}$. Applying the candidate inverse $G_q^\dagger$ therefore leaves the residual SWAP shown by the enlarged crossing.}
% Example of a boundary gate assignment on a one dimensional lattice. The gate ($U_{\mathrm{Bell}}$) creates a generalized Bell state and each pair of crossing lines represents a SWAP gate.} %\mnote{this caption is not gramatically correct.}\qnote{Fixed!}}
\label{fig: two-ray}
\end{figure}

\subsubsection{Anti-concentration for the Two-point Correlation Function}
\label{subsubsec:anticoncentration-two-point}
\cref{thm:unitary-smallball} provides a useful tool that transform the $L_{\infty}$ lower bound into a high-probability argument. To begin with, the following extension works on anticoncentration of a
polynomial that takes several independent Haar random unitaries. 
%\mnote{fix the constants with the updated anticoncentration constants}

\begin{lemma}[Multigate anticoncentration]
\label{thm:multi-unitary-anti-concentration}
Let $P:\U(m)^r\to\C$ be a nonzero complex-valued polynomial. For each $j\in[r]$, suppose $P$ has total degree at most
$d$ in the entries of $U_j$ and $\overline{U_j}$ when all
other unitaries are fixed. Then, for independent $U_1,\ldots,U_r\sim\haar{m}$ and every
$\eps\in(0,1)$,
\[
    \Pr\!\left[
        |P(U_1,\ldots,U_r)|
       \leq \eps \|P\|_\infty
    \right]
    \leq
    (1+64m^2)^r
    \eps^{\frac{1}{48\pi d}}.
\]
\end{lemma}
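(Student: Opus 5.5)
The plan is to induct on $r$, exposing one unitary at a time. The key is to integrate the single-gate tail bound against the lower tail of the remaining variables, rather than splitting $\eps$ into pieces; a split costs a constant factor in the exponent at every step. Write $c=\frac{1}{24\pi d}$ and $K=64m^2$. We prove the stronger-looking statement $\Pr[|P|\le \eps\|P\|_\infty]\le C_r\,\eps^{c/2}$ with $C_r\le (1+K)^r$. For $r=1$ this is \cref{thm:unitary-smallball}, since $\eps^{c}\le \eps^{c/2}$ and $K\le 1+K$. Normalize $\|P\|_\infty=1$.

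\textbf{Inductive step, setup.} Since $\U(m)^r$ is compact and $P$ is continuous, pick $U_r^\ast$ with $\sup_{U_1,\ldots,U_{r-1}}|P(U_1,\ldots,U_{r-1},U_r^\ast)|=1$. Then $P'(U_1,\ldots,U_{r-1}):=P(U_1,\ldots,U_{r-1},U_r^\ast)$ is a nonzero polynomial of degree at most $d$ in each $U_j$ with $\|P'\|_\infty=1$. By induction, $\Pr[|P'|\le s]\le C_{r-1}s^{c/2}$ for all $s\in(0,1)$, and trivially for $s\ge1$ the probability is at most $1$.

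Next define the continuous (hence measurable) function
\[
X(U_1,\ldots,U_{r-1})=\sup_{V\in\U(m)}|P(U_1,\ldots,U_{r-1},V)|,
\]
which satisfies $X\ge |P'|$. Condition on $U_1,\ldots,U_{r-1}$ with $X>0$. The map $V\mapsto P(U_1,\ldots,U_{r-1},V)$ is a nonzero polynomial of degree at most $d$ with sup norm $X$. Applying \cref{thm:unitary-smallball} with parameter $\eps/X$ gives
\[
\Pr_{U_r}\bigl[|P|\le\eps\bigr]\le \min\bigl(1,K(\eps/X)^{c}\bigr).
\]
This bound also holds trivially when $\eps/X\ge 1$ or $X=0$.

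\textbf{Inductive step, main estimate.} This is the step that needs care: it is exactly where a naive argument loses a factor in the exponent. Averaging over the first $r-1$ unitaries and using the layer-cake formula,
\[
\Pr[|P|\le\eps]\le \E\min\bigl(1,K(\eps/X)^c\bigr)=\int_0^1\Pr\bigl[X\le \eps (K/u)^{1/c}\bigr]\,du .
\]
Since $X\ge|P'|$, the inductive bound (valid for every $s>0$ after taking the minimum with $1$) gives
\[
\Pr\bigl[X\le \eps (K/u)^{1/c}\bigr]\le C_{r-1}\,\eps^{c/2}\,(K/u)^{1/2}.
\]
The integrand $u^{-1/2}$ is integrable on $[0,1]$ with $\int_0^1u^{-1/2}\,du=2$. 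Hence
\[
\Pr[|P|\le\eps]\le 2\sqrt{K}\,C_{r-1}\,\eps^{c/2}=16m\,C_{r-1}\,\eps^{c/2}.
\]
Because $16m\le 1+64m^2$, we get $C_r\le(1+K)C_{r-1}\le (1+K)^r$. This gives the claimed bound with exponent $c/2=\frac1{48\pi d}$.

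The only real obstacle is the choice of bookkeeping. We must compare against $X$, the fiberwise supremum, rather than against $\|P\|_\infty$ directly. The single-gate exponent $c$ must dominate the inductive exponent $c/2$, so that the $u$-integral converges. With exponent exactly $c$ at both levels the integral would be $\int_0^1 u^{-1}\,du$, which diverges and would produce a logarithmic loss at every step. Halving once at the start absorbs this and keeps the exponent independent of $r$. The remaining checks are routine: continuity of $X$, the degenerate cases, and the fact that the degree bound in each variable survives fixing the others.
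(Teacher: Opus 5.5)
Your proof is correct. Its skeleton is the paper's own: your $|P'|$, $X$ and $|P|$ correspond to the paper's $H_{r-1}$, $\sup_{U^{(r)}}H_r$ and $H_r$. So each inductive step is the paper's comparison $H_{j-1}\le\sup_{U^{(j)}}H_j$, followed by \cref{thm:unitary-smallball} on the fiber, with the exponent halved once so that the chaining converges.

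The difference lies in how the one-step bounds are chained. The paper turns each conditional tail bound $\Pr[R_j\le u\mid\mathcal F_{j-1}]\le Au^{c}$ into a conditional negative moment $\E[R_j^{-c/2}\mid\mathcal F_{j-1}]\le 1+A$. It then multiplies these along the filtration via the tower property and applies Markov's inequality once. You instead stay at the level of tail probabilities and integrate the one-step tail $\min(1,K(\eps/X)^{c})$ against the inductive tail of $X$ with the layer-cake formula.

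The paper's route packages the iteration as a single product and gives, as a by-product, $\E[(|P|/\|P\|_\infty)^{-c/2}]\le(1+A)^r$. This is genuinely stronger than a tail bound of order $c/2$, since such a tail bound alone need not make the $(-c/2)$-th moment finite. Your route keeps the truncation at $1$ in the right place and so gets the per-step factor $2\sqrt{K}=16m$, i.e.\ $C_r\le 64m^2(16m)^{r-1}$, which beats $(1+64m^2)^r$. This gain is not intrinsic to tail integration, though: splitting the paper's moment integral at $s=\sqrt{A}$ instead of $s=1$ also gives $\E[R_j^{-c/2}\mid\mathcal F_{j-1}]\le 2\sqrt{A}$. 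In the application ($m=4^\ell$, $r\le 2kd$) both constants are $2^{O_k(\ell d)}$, so nothing downstream changes.
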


\begin{proof}
We choose $\widetilde U_1,\ldots,\widetilde U_r$ such that
$|P(\widetilde U_1,\ldots,\widetilde U_r)|=\|P\|_\infty$.
We also write $U^{(j)}:=U_j$ and $F:=|P(U_1,\ldots,U_r)|$.
For simplicity, set $c:=\frac1{24\pi d}, A:=64m^2$. Consider the filtration
\[
    \mathcal F_j
    :=
    \sigma\!\left(U^{(1)},\ldots,U^{(j)}\right).
\]
Define
\[
    H_j
    :=
   \left|P(U^{(1)},\ldots,U^{(j)},\widetilde U_{j+1},\ldots,\widetilde U_r)\right|,
    \qquad j=0,1,\ldots,r.
\]
Then we have
\[
   H_0=\|P\|_\infty,
    \qquad
    H_r=F,
    \qquad
    H_{j-1}
   \leq
   \sup_{U^{(j)}}H_j.
\]

We use the anticoncentration inequality recursively.
Fix $j\in [r]$, and condition on
$U^{(1)},\ldots,U^{(j-1)}$. As a function of $U^{(j)}$, the
random variable $H_j$ is the absolute value of a polynomial of total degree
at most $d$ in the entries of $U^{(j)}$ and $\bar U^{(j)}$. Its supremum over $U^{(j)}$, conditionally on
$\mathcal F_{j-1}$, is at least $H_{j-1}$. Hence, \cref{thm:unitary-smallball} gives
\[
    \Pr\!\left[
        H_j\leq uH_{j-1}
        \,\middle|\,\mathcal F_{j-1}
    \right]
    \leq Au^c
\]
for $u\in(0,1)$. Now define the ratio
\[
    R_j:=\frac{H_j}{H_{j-1}}.
\]
The previous estimate gives
\[
    \Pr\!\left[
        R_j\leq u
        \,\middle|\,\mathcal F_{j-1}
    \right]
    \leq Au^c.
\]

We now convert this lower-tail estimate into a negative-moment
estimate. Let $0<v<c$. Then
\begin{align*}
    \mathbb E\!\left[
        R_j^{-v}\,\middle|\,\mathcal F_{j-1}
    \right]
    &=
    \int_0^\infty
    \Pr\!\left[
        R_j^{-v}\geq s
        \,\middle|\,\mathcal F_{j-1}
    \right]\,ds\\
    &\leq
    1+\int_1^\infty
    \Pr\!\left[
        R_j\leq s^{-1/v}
        \,\middle|\,\mathcal F_{j-1}
    \right]\,ds\\
    &\leq
    1+A\int_1^\infty s^{-c/v}\,ds\\
    &=
    1+\frac{Av}{c-v}.
\end{align*}
Taking $v=c/2=1/(48\pi d)$, we obtain
\[
    \mathbb E\!\left[
        R_j^{-c/2}\,\middle|\,\mathcal F_{j-1}
    \right]
    \leq1+A.
\]

Next, observe that
\[
    \frac{F}{\|P\|_\infty}
    =
    \frac{H_r}{H_0}
    =
    \prod_{j=1}^r\frac{H_j}{H_{j-1}}
    =
    \prod_{j=1}^rR_j.
\]
Therefore, we have:
\begin{align*}
    \mathbb E\!\left[
        \left(\frac{F}{\|P\|_\infty}\right)^{-c/2}
    \right]
    &=
    \mathbb E\!\left[\prod_{j=1}^rR_j^{-c/2}\right]\\
    &\leq(1+A)^r.
\end{align*}
Finally, by Markov's inequality,
\begin{align*}
    \Pr\!\left[F\leq\alpha\|P\|_\infty\right]
    &=
    \Pr\!\left[
        \left(\frac{F}{\|P\|_\infty}\right)^{-c/2}
        \geq\alpha^{-c/2}
    \right]\\
    &\leq
    \alpha^{c/2}
    \mathbb E\!\left[
        \left(\frac{F}{\|P\|_\infty}\right)^{-c/2}
    \right]\\
    &\leq
    (1+A)^r\alpha^{c/2}\\
    &=
    (1+64m^2)^r\alpha^{1/(48\pi d)}.\qedhere
\end{align*}
\end{proof}

We now apply \cref{thm:multi-unitary-anti-concentration} to
the two-point correlation polynomial. Now, fix a label assignment satisfying $\cE$ and an incorrect
candidate label $q\neq p$.
Fix an arbitrary assignment of all gate-set entries
other than boundary gates and $G_q$. Let $V_1,\ldots,V_r$
denote the remaining unitaries unfixed in the gateset, where we have $r\leq 2kd$ and each unitary is sampled independently from continuous haar-random unitaries.

We first bound the degree of $\operatorname{Corr}_{B:D}(\psi^{(G_q)})$ for each unitary. Fix $t\in[r]$, and hold all entries except
$V_t$ fixed. If $V_t$ is not the candidate entry $G_q$, it occurs
exactly once in the processed circuit. Every matrix
entry of $\psi^{(G_q)}_{BD}$ contains at most one factor of $V_t$ and one factor of $V_t^\dagger$. Taking marginals only sums matrix entries, which does not increase the degree. Consequently, every matrix entry has degree at most 2. Then, every matrix entry of $\psi^{(G_q)}_{BD} - \psi^{(G_q)}_B\otimes\psi^{(G_q)}_D$
therefore has degree at most 4, and
\[
    \operatorname{Corr}_{B:D}(\psi^{(G_q)})
    =
    \norm{
        \psi^{(G_q)}_{BD}
        -
        \psi^{(G_q)}_B\otimes\psi^{(G_q)}_D
    }_F^2
\]
has degree at most 8.

If the candidate label $q$ appears on the boundary and
$V_t=G_q$, the same entry occurs once at its circuit location
and once through the candidate inverse, which give the degree at most $16$. Thus, the two-point correlation function is a nonnegative real-valued polynomial taking unitaries input $V_1,\cdots, V_r$ with each
degree at most $16$. Therefore the total degree of each unitary is at most $16$. %\mnote{bidegree -- do you need this terminiology. Is so define it in prelims or before using, other use just degree or total degree?}\qnote{It's fixed now}

Let $\widetilde V_1,\ldots,\widetilde V_r$ be the specific assignment from
\cref{lem:bell-routing-witness}. Also, we write the polynomial
\[
    F(V_1,\ldots,V_r):=\operatorname{Corr}_{B:D}(\psi^{(G_q)}).
\]
The witness gives
\[
   \|F\|_\infty\geq
    F(\widetilde V_1,\ldots,\widetilde V_r)
    =
    1-\frac{1}{2^{2\ell}}
    \geq\frac34.
\]
Above all, we have the following lemma:

\begin{lemma}[High-probability lower bound on two-point correlation]
\label{lem:two-point-high-probability}
Fix a label assignment satisfying $\cE$ and an incorrect
candidate label $q\neq p$. For every $\alpha>0$,
\begin{equation}
    \Pr\!\left[
        \operatorname{Corr}_{B:D}(\psi^{(G_q)})\leq\alpha 
    \right]
    \leq C_2^d\left(\frac{\alpha}{C_0}\right)^{C_1}.
    \label{eq:two-point-correlation-lower-tail}
\end{equation}
The probability is over unitaries in the gateset, $C_0, C_1$ are some absolute constants and $C_2:= (1+64\cdot 2^{4\ell})^{2k}$.
\end{lemma}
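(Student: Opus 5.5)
The plan is to apply \cref{thm:multi-unitary-anti-concentration} directly to the polynomial $F(V_1,\ldots,V_r)=\operatorname{Corr}_{B:D}(\psi^{(G_q)})$ set up just before the statement. We work conditionally: fix the label assignment satisfying $\cE$, the candidate $q\neq p$, and all gateset entries other than the boundary entries, $G_p$ and $G_q$. On $\cE$ these remaining entries $V_1,\ldots,V_r$ are independent Haar unitaries in $\U(4^\ell)$, so we may take $m=4^\ell$. We will prove the tail bound uniformly over the fixed entries, and then average over them to get the unconditional statement.

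The first step is to check the hypotheses of \cref{thm:multi-unitary-anti-concentration}. The degree count preceding the statement shows that, with the other unitaries fixed, $F$ has total degree at most $16$ in the entries of each $V_j$ and $\overline{V_j}$. We also need $r\le 2kd$. This holds because the boundary consists of the two paths $L_1,L_2$, each with at most one gate per layer; together with the tested gate and a possible $G_q$ this gives $O(kd)$ entries, and any additive constant can be absorbed into $C_2^d$ by adjusting constants. Nonvanishing of $F$, and in fact $\|F\|_\infty\ge 3/4$, follows from \cref{lem:bell-routing-witness}, because the witness assignment only changes the entries being randomized.

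Next we apply the lemma with $\eps=\alpha/\|F\|_\infty\le 4\alpha/3$:
\[
\Pr[F\le\alpha]\le\Pr\bigl[F\le \tfrac{4\alpha}{3}\|F\|_\infty\bigr]\le (1+64\cdot 4^{\ell})^{r}\Bigl(\tfrac{4\alpha}{3}\Bigr)^{\frac{1}{768\pi}}.
\]
Using $r\le 2kd$ and $4^\ell=2^{2\ell}\le 2^{4\ell}$, this is at most $C_2^d(\alpha/C_0)^{C_1}$ with $C_0=3/4$ and $C_1=1/(768\pi)$. For $\alpha\ge 3/4$ the lemma's range $\eps<1$ fails, but there the right-hand side is at least $1$ and the bound is trivial. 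Averaging over the fixed entries completes the argument.

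The main subtlety is the bookkeeping of which entries are random. The witness in \cref{lem:bell-routing-witness} also sets $G_p$, so $G_p$ must be among the $V_j$. In addition, $G_q$ may or may not sit at a boundary location, and this is where the degree bound jumps to $16$. I would handle this by treating $G_p$ and $G_q$ as random variables included among $V_1,\ldots,V_r$, so the supremum bound applies, and then confirming that $r$ remains $O(kd)$.
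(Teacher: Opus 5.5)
Your proposal is correct and is essentially the paper's proof. You fix every gateset entry other than the boundary entries and $G_p$ (randomizing $G_q$ as well is harmless), use the witness to get $\|F\|_\infty\ge 3/4$, and apply the multigate anticoncentration lemma with $m=4^\ell$, per-unitary degree $16$ and $r\le 2kd$, which yields $C_0=3/4$ and $C_1=1/(768\pi)$. Two small fixes: that lemma's prefactor is $(1+64m^2)^r=(1+64\cdot 2^{4\ell})^r$, not $(1+64\cdot 4^\ell)^r$, and this matches $C_2$ exactly without any detour through $4^\ell\le 2^{4\ell}$; and since $C_2$ is pinned in the statement you cannot absorb extra entries by adjusting constants, but a direct count (one gate per layer on each of $L_1,L_2$, which share the Bell gate, plus possibly $G_q$) gives $r\le 2d\le 2kd$ anyway.
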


\begin{proof}
To begin with, we fix any arbitrary assignment of all gates other than the boundary gates and $G_p$. Denote remaining unitaries $V_1,\ldots,V_r$, which are
independent Haar-random elements of $\U(4^{\ell})$.
By \cref{lem:bell-routing-witness},
\[
   \sup_{V_1,\ldots,V_r}\operatorname{Corr}_{B:D}(\psi^{(G_q)})
   \geq 1-2^{-2\ell} \geq \frac34
\]

The quantity $\operatorname{Corr}_{B:D}(\psi^{(G_q)})$ is nonnegative and has total
degree at most $16$ in each individual free entry and its
complex conjugates. Applying
\cref{thm:multi-unitary-anti-concentration} with
$m=4^{\ell}$ gives
\begin{align*}
    \Pr\!\left[
        \operatorname{Corr}_{B:D}(\psi^{(G_q)})\leq\alpha 
    \right] &\leq \Pr\!\left[
        \operatorname{Corr}_{B:D}(\psi^{(G_q)})
        \leq
        \frac{4\alpha}{3}\,
       \sup_{V_1,\ldots,V_r}\operatorname{Corr}_{B:D}(\psi^{(G_q)})
    \right] \\
    &\leq
        \left(1+64\cdot2^{4\ell}\right)^{2kd}
     \left(\frac{4\alpha}{3}\right)^{1/(768\pi)}
\end{align*}
%\snote{fixed all constants except the one above, so how did 1280 come about, it should have been $1/48\pi d$, what is d here, is it $d=2$?}\qnote{Sorry there is a change of the parameter - it's fixed now}
where the first inequality uses $C_0\leq\sup_{V_1,\ldots,V_r}\operatorname{Corr}_{B:D}(\psi^{(G_q)})$, last inequality uses $r\leq 2kd$. Taking $C_0:=\frac34$ and $C_1:=1/(768\pi)$ gives the desired bound.
\end{proof}

\subsubsection{Putting Everything Together}

\begin{proof}[Proof of \cref{lem:correlation-separation}]
Assume that $0<\delta<1$. Set
\[
    \Gamma
    :=
    C_0
    \left(\frac{\delta^2}{2ndMC_2^d}\right)^{1/C_1}.
\]
Fix any label assignment satisfying $\cE$. For any gate location, suppose its depth is $t\leq d$. Let $p$ be
the true label at the chosen gate location. By
\cref{lem:two-point-high-probability} and a union bound over
the $M-1$ incorrect labels,
\begin{align*}
    \Pr_{\cN}\!\left[
        \exists q\neq p:\ \operatorname{Corr}_{B:D}(\psi^{(G_q)})\leq\Gamma
    \right]
    &\leq
    \sum_{q\neq p}
    \Pr_{\cN}\!\left[\operatorname{Corr}_{B:D}(\psi^{(G_q)})\leq\Gamma\right] &&(\text{By union bound})\\
    &\leq
    \sum_{q\neq p}
    C_2^t\left(\frac{\Gamma}{C_0}\right)^{C_1} &&(\text{By \cref{lem:two-point-high-probability}})\\
    &\leq
    (M-1)C_2^d
    \left[
        \left(\frac{\delta^2}{2ndMC_2^d}\right)^{1/C_1}
    \right]^{C_1}\\
    &=
    (M-1)C_2^d\frac{\delta^2}{2ndMC_2^d}\\
    &=
    \frac{M-1}{M}\cdot\frac{\delta^2}{2nd}
    \leq\frac{\delta^2}{2nd}.
\end{align*}
There are $nd/2$ gate locations in total. Define the following condition:
\begin{align}
\label{cond:required-condition}
    \text{For every gate location,}\quad
        \operatorname{Corr}_{B:D}(\psi^{(G_q)})>\Gamma
        \text{ for every }q\neq p
\end{align}
Therefore, a union
bound over all gate locations, followed by averaging over label
assignments satisfying $\cE$, gives
\[
    \Pr_{\cN,U}\!\left[\text{condition~\eqref{cond:required-condition} holds}
        \,\middle|\,\cE
    \right]
    \geq1- \frac{\delta^2}{2nd}\cdot \frac{nd}{2} =1-\frac{\delta^2}{4}.
\]

By the law of total probability,
\[
    \Pr_{\cN,U}\!\left[
        \text{condition~\eqref{cond:required-condition} fails}
    \right]
    \leq \Pr[\cE^c]+\frac{\delta^2}{4}
    \leq \frac{\delta^2}{512}+\frac{\delta^2}{4}
    \leq \frac{\delta^2}{2}.
\]
Applying Markov's inequality gives
\begin{align*}
    \Pr_{\cN}\!\left[
        \Pr_U\!\left[
            \text{condition~\eqref{cond:required-condition} fails}
            \mid\cN
        \right]>\frac{\delta}{2}
    \right]
    &\leq
    \frac{2}{\delta}\,
    \mathbb{E}_{\cN}\!\left[
        \Pr_U\!\left[
            \text{condition~\eqref{cond:required-condition} fails}
            \mid\cN
        \right]
    \right]\\
    &=
    \frac{2}{\delta}\Pr_{\cN,U}\!\left[
        \text{condition~\eqref{cond:required-condition} fails}
    \right] 
    \leq
    \delta
\end{align*}

For the true candidate, \cref{lem:disjointness} gives
$\operatorname{Corr}_{B:D}(\psi^{(G_p)})=0$ deterministically. Hence, with probability at least $1-\delta$ over $\cN$, the true label
is the unique minimizer with probability at least $1-\delta/2$ over the random label assignment.
Finally,
\[
    \Gamma
    =
    2^{-O_k(\ell d)}
    \text{poly}\left(\frac{\delta^2}{ndM}\right)
\]
which has the required dependence.
\end{proof}